\newtheorem{theorem}{Theorem}
\newtheorem{definition}{Definition}
\newtheorem{proposition}{Proposition}
\newtheorem{lemma}{Lemma}
\newtheorem{assumption}{Assumption}
\newtheorem{corollary}{Corollary}
\newtheorem{remark}{Remark}
\definecolor{byzantine}{rgb}{0.74, 0.2, 0.64}
\definecolor{americanrose}{rgb}{1.0, 0.01, 0.24}
\definecolor{ao(english)}{rgb}{0.0, 0.5, 0.0}
\begin{document}

\date{}


\title{\Large \bf Residual-PAC Privacy: Automatic Privacy Control Beyond the Gaussian Barrier}

\author{
{\rm Tao Zhang}\\
Washington University in St. Louis
\and
{\rm Yevgeniy Vorobeychik}\\
Washington University in St. Louis
} 

\maketitle

\begin{abstract}

The Probably Approximately Correct (PAC) Privacy framework \cite{xiao2023pac} provides a powerful instance-based methodology to preserve privacy in complex data-driven systems. 
Existing PAC Privacy algorithms (we call them Auto-PAC) rely on a Gaussian mutual information upper bound.
However, we show that the upper bound obtained by Auto-PAC is tight if and only if under the data distribution, the unperturbed output is Gaussian and the noise is independent Gaussian.
We propose two approaches for addressing this issue.
First, we introduce two tractable post‐processing methods for Auto-PAC, based on Donsker–Varadhan representation and sliced Wasserstein distances.
However, the result still leaves "wasted" privacy budget.
To address this issue more fundamentally, 
we introduce Residual-PAC (R-PAC) Privacy, an $f$-divergence-based measure to quantify privacy that remains after adversarial inference. 
To implement R-PAC Privacy in practice, we propose a Stackelberg Residual-PAC (SR-PAC) automatic privatization algorithm, a game-theoretic framework that selects optimal noise distributions through convex bilevel optimization. 
Our approach achieves efficient privacy budget utilization for arbitrary data distributions and naturally composes when multiple mechanisms access the dataset.
Our experiments demonstrate that SR-PAC obtains consistently a better privacy-utility tradeoff than both PAC and differential privacy baselines.

\end{abstract}

\section{Introduction}

Data-driven decision systems power critical applications ranging from medical diagnosis to autonomous vehicles, yet their outputs can inadvertently expose sensitive information contained in the data. As data pipelines grow in scale and complexity, practitioners need rigorous and scalable privacy guarantees that go beyond ad-hoc testing.
Over the past two decades, formal privacy frameworks have proliferated. 
Differential Privacy (DP) \cite{dwork2006differential} (and its variants such as R\'enyi DP \cite{mironov2017renyi}) delivers input-independent worst-case indistinguishability by bounding output distribution shifts from single-record changes. 
Alternative information-theoretic definitions, such as mutual-information DP \cite{cuff2016differential}, Fisher-information bounds \cite{farokhi2017fisher,hannun2021measuring,guo2022bounding}, and Maximal Leakage \cite{issa2019operational,saeidian2023pointwise}, provide complementary guarantees and offer alternative trade-offs between privacy and utility.

Nevertheless, provable privacy guarantees for modern data‐processing algorithms remains a challenge. First, worst‐case frameworks like DP require computing global sensitivity, which is generally NP-hard \cite{xiao2008output}.
Moreover, computing the optimal privacy bound of DP under composition is, in general, \#P-complete \cite{murtagh2015complexity}.
In practice, finding the minimal noise needed to meet a target guarantee is intractable for most real-world algorithms, especially when the effect of each operation on privacy is unclear. 
On the other hand, empirical or simulation-based methods (e.g., testing resistance to membership inference \cite{shokri2017membership}) address specific threats but lack rigorous, adversary-agnostic assurance. Bridging this gap requires a new, broadly applicable framework that can quantify and enforce privacy risk without relying on sensitivity.

A promising alternative has recently emerged: the Probably Approximately Correct (PAC) Privacy framework \cite{xiao2023pac}. 
PAC Privacy shifts from indistinguishability-based guarantees to an operational notion that measures the \textit{information-theoretic hardness} of reconstructing sensitive information. 
It is defined by an impossibility-of-inference guarantee for a chosen adversarial task and data prior, and the framework provides algorithms that enforce tractable mutual-information upper bounds to certify this guarantee.
This approach enables automatic privatization via black-box simulation, 
and enjoys additive composition bounds and automatic privacy budget implementations for adaptive sequential compositions of mechanisms with arbitrary interdependencies.
Notably, PAC Privacy often requires only $O(1)$ noise magnitude to achieve its privacy guarantees (independent of the output dimension), whereas differential privacy's worst‐case, input‐independent noise magnitude scales as $\Theta(\sqrt{d})$ for a $d$-dimensional release.

However, existing PAC privacy algorithms (which we refer to as Auto-PAC) are fundamentally conservative.
In particular, we show (Proposition~\ref{prop:MI_gap}) that Auto-PAC achieves the designated privacy budget exactly if and only if under the data distribution, the unperturbed output is Gaussian and the noise is independent Gaussian, so that the unperturbed and perturbed outputs are jointly Gaussian.
Consequently, Auto-PAC will in general make inefficient use of the privacy budget.
Conservative privacy accounting is a central practical concern in DP and PAC Privacy because conservative bounds impose unnecessary noise and waste privacy budget, particularly under composition.
Narrowing this conservativeness remains an open challenge in PAC privacy~\cite{xiao2025pac}.

We address this limitation of Auto-PAC in two ways.
First, working within the general PAC Privacy framework, we develop two tractable post‐processing methods for Auto-PAC conservativeness reduction, based on Donsker–Varadhan representation \cite{donsker1975asymptotic}  and sliced Wasserstein distances \cite{rabin2011wasserstein,bonneel2015sliced}.
However, even these methods fail to fully close the privacy budget gap.
To address this issue more fundamentally, we introduce the notion of \textit{Residual-PAC Privacy} (R-PAC privacy).
Unlike PAC privacy, which aims to quantify and bound the privacy leaked, R-PAC privacy focuses instead on quantifying \textit{privacy remaining after information has been leaked by a data processing mechanism}, using $f$-divergence to this end.
When $f$-divergence is instantiated as Kullback–Leibler (KL) divergence, we show that Residual-PAC Privacy is fully characterized by the conditional entropy up to a known constant that does not depend on the mechanism or the applied noise.

To implement R-PAC privacy with KL divergence, we propose a novel \textit{Stackelberg Residual-PAC (SR-PAC)} automatic  privatization algorithm.
SR-PAC formulates the problem of privatization via noise perturbation, given a privacy budget, as a Stackelberg game in which the leader selects a noise distribution with the goal of minimizing the magnitude of the perturbation, while the follower chooses a stochastic inference strategy to recover the sensitive data.
We show that when the entire probability space is considered, the resulting bilevel optimization problem becomes a convex program. 
Moreover, we prove that the mixed-strategy Stackelberg equilibrium of this game yields the optimal noise distribution, ensuring that the conditional entropy of the perturbed mechanism precisely attains the specified privacy budget.
Finally, our experimental evaluation demonstrates that the proposed SR-PAC privacy framework consistently outperforms both PAC-privacy and differential privacy baselines.

A complete appendix, including all proofs, is provided in the online extended version of this paper \cite{zhang2025breaking}.
We summarize our main contributions as follows:
\begin{itemize}
    \item We characterize the conservativeness of Auto-PAC \cite{xiao2023pac,sridhar2024pac}, showing that it arises from the gap between the surrogate Gaussian mutual information bound and the true non-Gaussian mutual information of the privatized mechanism.

    \item We propose two computationally tractable approaches to reduce this gap: one based on the Donsker-Varadham representation (Theorem \ref{thm:DV}) and the other based on the sliced Wasserstein distances (Theorem \ref{thm:SWD}). 

    \item We propose a novel privacy framework, Residual-PAC (R-PAC), to quantify the portion of privacy that remains rather than the amount leaked. This offers a complementary perspective to PAC privacy, and enables efficient implementation of tight privacy budget.

    \item We present an automatic privatization algorithm, Stackelberg R-PAC (SR-PAC), to efficiently compute noise distributions for a given privacy budget. SR-PAC algorithm achieves tight budget utilization, can operate with only black-box access via Monte Carlo simulation, and adaptively concentrates noise in privacy-sensitive directions while preserving task-relevant information.
\end{itemize}

\subsection{Related Work}

\textbf{Privacy Quantification Notions. }
Differential privacy (DP) and its variants have become the gold standard for formal privacy quantification and guarantees, with the original definitions by Dwork et al.~\cite{dwork2006calibrating,dwork2006differential} formalizing privacy loss through bounds on the distinguishability of outputs under neighboring datasets. Variants such as concentrated differential privacy (CDP)~\cite{bun2016concentrated,dwork2016concentrated}, zero-concentrated DP (zCDP)~\cite{bun2018composable}, and R\'enyi differential privacy (RDP)~\cite{mironov2017renyi} have further extended this framework by parameterizing privacy loss with different statistical divergences (e.g., R\'enyi divergence), thereby enhancing flexibility in privacy accounting, especially for compositions and adaptive mechanisms.
Pufferfish privacy \cite{kifer2014pufferfish,Song2017,pierquin2024renyi,zhang2025differential,zhang2025sliced} generalizes DP by considering secrets that go beyond DP's presence and absence of individual records.
Information-theoretic measures provide alternative and complementary approaches for quantifying privacy loss. For instance, mutual information has been used to analyze privacy leakage in a variety of settings~\cite{chatzikokolakis2010statistical,cuff2016differential}, with $f$-divergence and Fisher information offering finer-grained or context-specific metrics~\cite{xiao2023pac,farokhi2017fisher,hannun2021measuring,guo2022bounding}. These frameworks help to bridge the gap between statistical risk and adversarial inference, and are closely connected to privacy-utility trade-offs in mechanism design. 
Maximal Leakage \cite{issa2019operational}, hypothesis testing interpretations \cite{balle2020hypothesis}, and other relaxations further broaden the analytic toolkit for measuring privacy risk.

\textbf{Privacy-Utility Trade-off. }
Balancing the trade-off between privacy and utility is a central challenge in the design of privacy-preserving mechanisms. This challenge is frequently formulated as an optimization problem~\cite{lebanon2009beyond,sankar2013utility,lopuhaa2024mechanisms,ghosh2009universally,gupte2010universally,geng2020tight,du2012privacy,alghamdi2022cactus,goseling2022robust}. For example, Ghosh et al.~\cite{ghosh2009universally} demonstrated that the geometric mechanism is universally optimal for DP under certain loss-minimizing criteria in Bayesian settings, while Lebanon et al.~\cite{lebanon2009beyond} and Alghamdi et al.~\cite{alghamdi2022cactus} studied utility-constrained optimization. Gupte et al.~\cite{gupte2010universally} modeled the privacy-utility trade-off as a zero-sum game between privacy mechanism designers and adversaries, illustrating the interplay between optimal privacy protection and worst-case adversarial loss minimization.

\textbf{Optimization Approaches for Privacy. }
A growing body of work frames the design of privacy-preserving mechanisms as explicit optimization problems, aiming to maximize data utility subject to formal privacy constraints. Many adversarial or game-theoretic approaches—such as generative adversarial privacy (GAP)\cite{huang2018generative} and related GAN-based frameworks\cite{chen2018understanding,nasr2018machine,jordon2018pate}—cast the privacy mechanism designer and the adversary as players in a min-max game, optimizing utility loss and privacy leakage, respectively. More recently, Selvi et al.~\cite{selvi2025differential} introduced a rigorous optimization framework for DP based on distributionally robust optimization (DRO), formulating the mechanism design problem as an infinite-dimensional DRO to derive noise-adding mechanisms that are nonasymptotically and unconditionally optimal for a given privacy level. Their approach yields implementable mechanisms via tractable finite-dimensional relaxations, often outperforming classical Laplace or Gaussian mechanisms on benchmark tasks. Collectively, these lines of research illustrate the power of optimization and game-theoretic perspectives in achieving privacy-utility trade-offs beyond conventional privatization mechanisms.

\section{Preliminaries}

\subsection{PAC Privacy}

\noindent\textbf{Privacy Threat Model.}
We consider the following general privacy problem. 
A sensitive input $X$ (e.g., a dataset, membership status) is drawn from a distribution $\mathcal{D}$, which may be unknown or inaccessible.
There is a data processing (possibly randomized) mechanism $\mathcal{M}:\mathcal{X} \mapsto \mathcal{Y} \subset \mathbb{R}^{d}$, where $\mathcal{Y}$ is measurable.
An adversary observes the output $Y = \mathcal{M}(X)$ and attempts to estimate the original input $X$ with an estimation $\widetilde{X}$.
The adversary has complete knowledge of both the data distribution $\mathcal{D}$ and the mechanism $\mathcal{M}$, representing the worst-case scenario.
The central privacy concern is determining whether the adversary can accurately estimate the true input, meeting some predefined success criterion captured by an indicator function $\rho$.
The PAC privacy framework \cite{xiao2023pac} addresses this threat model and is formally defined as follows.

\begin{definition}[$(\delta, \rho, \mathcal{D})$ PAC Privacy~\cite{xiao2023pac}]\label{def:pac_original1}
For a data processing mechanism $\mathcal{M}$, given some data distribution $\mathcal{D}$, and a measure function $\rho(\cdot, \cdot)$, we say $\mathcal{M}$ satisfies \textup{$( \delta, \rho, \mathcal{D})$ PAC Privacy} if the following experiment is impossible:

A user generates data $X$ from distribution $\mathcal{D}$ and sends $\mathcal{M}(X)$ to an adversary. The adversary who knows $\mathcal{D}$ and $\mathcal{M}$ is asked to return an estimation $\widetilde{X} \in \mathcal{X}$ on $X$ such that with probability at least $1-\delta$, $\rho(\widetilde{X}, X) =1$. 
\end{definition}

Definition \ref{def:pac_original1} formalizes privacy in terms of the adversary's difficulty in achieving accurate reconstruction, capturing the semantics of the \textit{impossibility of customized adversarial inference}~\cite{xiao2025pac}.
The function $\rho$ specifies the success criterion for reconstruction, adapting to the requirements of the specific application. 
For example, when $\mathcal{X} \subset \mathbb{R}^{d'}$, one may define success as $\rho(\widetilde{X}, X) \equiv \mathbf{1}\{|\widetilde{X} - X|_2 \leq \epsilon\}=1$ or some small $\epsilon > 0$ with $\mathbf{1}\{\cdot\}$ as the indicator.
If $X$ is a finite set of size $n$, success may be defined as correctly recovering more than $n - \epsilon$ elements. Notably, $\rho$ need not admit a closed-form expression; it simply indicates whether the reconstruction satisfies the designated criterion for success.

In PAC Privacy, $X$ may represent general \textit{secrets} as considered in Pufferfish privacy frameworks \cite{kifer2014pufferfish,Song2017,pierquin2024renyi,zhang2025differential, zhang2025sliced}, which go beyond data points.
For example, a secret may be a dataset attribute or a global feature of the dataset. 
For ease of exposition, this paper focuses on the setting where $X$ denotes the data.
PAC Privacy treats the secrets $X$ as a random variable drawn from a distribution $\mathcal{D}$.
When $\mathcal{D}$ is not available in closed form, we may explicitly create $\mathcal{D}$ via a sampling rule and access it through i.i.d. samples from a data pool \cite{xiao2025pac}.

PAC Privacy considers the following adversarial worst-case scenario: a computationally unbounded adversary with full
knowledge of both $\mathcal{D}$ and the underlying query function.
The randomness inherent in data (or secret) generation and the randomness in the query function are the only elements unknown to the adversary \cite{xiao2023pac,xiao2025pac}.
PAC Privacy is highly flexible by enabling $\rho$ to encode a wide range of adversary models and user-specified risk criteria. For example, in membership inference attacks~\cite{carlini2022membership}, $\rho(\tilde{X}, X) = 1$ may indicate that $\widetilde{X}$ successfully determines the presence of a target data point in $X$.
In reconstruction attacks~\cite{balle2022reconstructing}, success may be defined by $\rho(\widetilde{X}, X) = 1$ if $|\widetilde{X} - X|_2 \leq 1$, representing a close approximation of the original data.

Given the data distribution $\mathcal{D}$ and the adversary's criterion $\rho$, the \textit{optimal prior success rate} $(1 - \delta^{\rho}_{o})$ is defined as the highest achievable success probability for the adversary without observing the output $\mathcal{M}(X)$: 
$\delta^{\rho}_{o} = \inf_{\widetilde{X}_0} \Pr_{X \sim \mathcal{D}}\left(\rho(\widetilde{X}_0, X) \neq 1\right).$
Similarly, the \textit{posterior success rate} $(1 - \delta)$ is defined as the adversary's probability of success after observing $\mathcal{M}(X)$.

The notion of \textit{PAC advantage privacy} quantifies how much the mechanism output $\mathcal{M}(X)$ can improve the adversary's success rate, based on $f$-divergence.

\begin{definition}[$f$-Divergence]\label{def:f_divergence}
Given a convex function $f : (0, +\infty) \rightarrow \mathbb{R}$ with $f(1) = 0$, extend $f$ to $t=0$ by setting $f(0) = \lim_{t \to 0^+} f(t)$ (in $\mathbb{R} \cup \{+\infty, -\infty\}$). The $f$-divergence between two probability distributions $P$ and $Q$ over a common measurable space is: 
\[
\mathtt{D}_f(P \| Q) \equiv 
\begin{cases} 
\mathbb{E}_Q \left[ f\left( \frac{dP}{dQ} \right) \right] & \text{if } P \ll Q, \\
+\infty & \text{otherwise},
\end{cases}
\]
where $\frac{dP}{dQ}$ is the Radon-Nikodym derivative.
\end{definition}

\begin{definition}[$(\Delta_f^\delta, \rho, \mathcal{D})$ PAC Advantage Privacy \cite{xiao2023pac}]\label{def:PAC_advantage}
A mechanism $\mathcal{M}$ is termed \textup{$(\Delta_f^\delta, \rho, \mathcal{D})$ PAC advantage private} if it is $(\delta, \rho, \mathcal{D})$ PAC private and 
$$ \Delta_f^{\delta} \equiv\mathcal{D}_{f}(\bm{1}_{\delta}\|\bm{1}_{\delta^{\rho}_o}) = \delta^{\rho}_of(\frac{\delta}{\delta^{\rho}_o}) + (1-\delta^{\rho}_o)f (\frac{1-\delta}{1-\delta^{\rho}_o}) .$$
Here, $\bm{1}_{\delta}$ and $\bm{1}_{\delta^{\rho}_o}$ represent two Bernoulli distributions of parameters $\delta$ and $\delta^{\rho}_o$, respectively. 
\end{definition}

Here, PAC Advantage Privacy is defined on top of PAC Privacy and quantifies the amount of \textit{privacy loss} incurred from releasing $\mathcal{M}(X)$, captured by the additional \textit{posterior advantage} $\Delta_f^{\delta}$.

\subsection{Automatic PAC Privatization Algorithms}\label{sec:PAC_Orignal_Algs}

PAC Privacy enables automatic privatization, which supports simulation-based implementation for arbitrary black-box mechanisms, without requiring the worst-case adversarial analysis, such as sensitivity computation.
In this section, we present the main theorems and algorithms underlying automatic PAC privatization as introduced in \cite{xiao2023pac} (hereafter "Auto-PAC") and the efficiency-improved version proposed in \cite{sridhar2024pac} (hereafter "Efficient-PAC"; algorithm details in Appendix~\ref*{app:efficient_pac} in \cite{zhang2025breaking}).
We start by defining the \textit{mutual information}.

\begin{definition}[Mutual Information]
For random variables $A$ and $B$, the mutual information is defined as
$$\mathtt{MI}(A;B) \equiv \mathcal{D}_{KL}(\mathsf{P}_{A,B}\|\mathsf{P}_{A} \otimes \mathsf{P}_{B}),$$
the KL-divergence between their joint distribution (i.e., $\mathsf{P}_{A,B}$) and the product of their marginals (i.e., $\mathsf{P}_{A}$ and $\mathsf{P}_{B}$).
\end{definition}

When the $f$-divergence in $\Delta_f^{\delta}$ is instantiated as the KL divergence (denoted as $\Delta_{\mathrm{KL}}^{\delta}$), Theorem 1 of \cite{xiao2023pac} shows 
\begin{equation}\label{eq:KL_upper_MI}
    \Delta_{\mathrm{KL}}^{\delta} \leq \mathtt{MI}(X; \mathcal{M}(X)).
\end{equation}
That is, we can control the posterior advantage $\Delta_{\mathrm{KL}}^{\delta}$ by bounding the mutual information between private data and the released output.
Importantly, this mutual information bound holds uniformly over all adversarial inference procedures (including the choice of $\rho$) permitted by PAC Privacy.
Therefore, when $\Delta_f^{\delta} = \Delta_{\mathrm{KL}}^{\delta}$, we can characterize and quantify the PAC Privacy in terms of $\mathtt{MI}(X; \mathcal{M}(X))$ without requiring any adversarial model or $\rho$ tuning, while the semantics of PAC Privacy remains as the impossibility of customized adversarial inference.

Next, we introduce the Auto-PAC.
Consider a deterministic mechanism $\mathcal{M} : \mathcal{X} \rightarrow \mathbb{R}^d$, where the output norm is uniformly bounded: $\|\mathcal{M}(X)\|_2 \leq r$ for all $X$. To guarantee PAC Privacy, the mechanism is perturbed by Gaussian noise $B \sim \mathcal{N}(0, \Sigma_B)$, where $\Sigma_B$ is the covariance.
When $X\sim\mathcal{D}$, let $\Sigma_{\mathcal{M}(X)}$ be the covariance of $\mathcal{M}(X)$.
For any deterministic mechanism $\mathcal{M}$ and any Gaussian noise $B$, define the \textit{Gaussian surrogate bound}

\begin{equation}\label{eq:logdet}
    \mathtt{LogDet}(\mathcal{M}(X), B)\equiv \frac{1}{2} \log \det \left(I_d + \Sigma_{\mathcal{M}(X)} \cdot\Sigma_B^{-1} \right).
\end{equation}

\begin{theorem}[Theorem 3 of \cite{xiao2023pac}]\label{thm:PAC_original_thm3}
For an arbitrary deterministic mechanism $\mathcal{M}$ and Gaussian noise $B \sim \mathcal{N}(0, \Sigma_B)$, the mutual information satisfies
\[
\mathtt{MI}(X; \mathcal{M}(X)+B) \leq \mathtt{LogDet}(\mathcal{M}(X), B).
\]
Moreover, there exists $\Sigma_B$ such that $\mathbb{E}[\|B\|_2^2] = \left( \sum_{j=1}^d \sqrt{\lambda_j} \right)^2$ with $\{\lambda_j\}$ being the eigenvalues of $\Sigma_{\mathcal{M}(X)}$, and $\mathtt{MI}(X; \mathcal{M}(X)+B) \leq \frac{1}{2}$.
\end{theorem}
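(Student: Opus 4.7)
The plan is to split the proof into the two parts of the statement: establishing the universal $\mathtt{LogDet}$ upper bound on the mutual information, and then exhibiting a particular choice of $\Sigma_B$ that realizes the stated noise budget together with the half-nat bound. No white-box characterization of the law of $\mathcal{M}(X)$ is needed for either step; only its covariance $\Sigma_{\mathcal{M}(X)}$ enters.

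For the upper bound I would write $Y=\mathcal{M}(X)+B$ and decompose the mutual information through the differential entropy identity $\mathtt{MI}(X;Y)=h(Y)-h(Y\mid X)$. Because $\mathcal{M}$ is deterministic and $B$ is independent of $X$, the conditional law of $Y$ given $X=x$ is simply $B$ shifted by the constant $\mathcal{M}(x)$, so $h(Y\mid X=x)=h(B)$ for every $x$, and hence $h(Y\mid X)=h(B)=\tfrac{1}{2}\log\bigl((2\pi e)^{d}\det\Sigma_B\bigr)$. For $h(Y)$ I would invoke the Gaussian maximum-entropy principle: any $d$-dimensional random vector with covariance $\Sigma_Y$ satisfies $h(Y)\leq\tfrac{1}{2}\log\bigl((2\pi e)^{d}\det\Sigma_Y\bigr)$. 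Independence of $B$ from $\mathcal{M}(X)$ yields $\Sigma_Y=\Sigma_{\mathcal{M}(X)}+\Sigma_B$, and subtracting the two expressions along with the identity $\det(\Sigma_{\mathcal{M}(X)}+\Sigma_B)/\det(\Sigma_B)=\det(I_d+\Sigma_{\mathcal{M}(X)}\Sigma_B^{-1})$ delivers exactly $\mathtt{LogDet}(\mathcal{M}(X),B)$.

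For the existence claim, the plan is to diagonalize $\Sigma_{\mathcal{M}(X)}=U\Lambda U^{\top}$ with $\Lambda=\mathrm{diag}(\lambda_1,\dots,\lambda_d)$ and to align $\Sigma_B$ with the same eigenbasis, namely $\Sigma_B=UDU^{\top}$ where $D_{jj}=\sqrt{\lambda_j}\cdot\sum_{k=1}^{d}\sqrt{\lambda_k}$. Simultaneous diagonalization gives $\det(I_d+\Sigma_{\mathcal{M}(X)}\Sigma_B^{-1})=\prod_j(1+\lambda_j/D_{jj})$, and applying $\log(1+t)\leq t$ term by term yields $\sum_j\log(1+\lambda_j/D_{jj})\leq\sum_j\sqrt{\lambda_j}/\sum_k\sqrt{\lambda_k}=1$, so $\mathtt{MI}(X;\mathcal{M}(X)+B)\leq\tfrac{1}{2}\mathtt{LogDet}(\mathcal{M}(X),B)\leq\tfrac{1}{2}$. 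At the same time $\mathbb{E}[\|B\|_2^{2}]=\mathrm{tr}(\Sigma_B)=\sum_j D_{jj}=\bigl(\sum_j\sqrt{\lambda_j}\bigr)^{2}$. The particular choice of $D_{jj}$ is not pulled out of thin air: it is the Lagrangian minimizer of $\mathrm{tr}(\Sigma_B)$ subject to $\sum_j\lambda_j/D_{jj}=1$, so the construction is simultaneously noise-optimal under this surrogate constraint, which also matches the later claim in the paper that Algorithm~1 of \cite{xiao2023pac} solves a constrained trace-minimization problem.

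I do not expect a serious technical obstacle here. The only genuine subtleties are handling the case where $X$ is discrete, in which $h(X)$ is undefined but the decomposition $\mathtt{MI}(X;Y)=h(Y)-h(Y\mid X)$ still holds because $h(Y\mid X)$ is defined as the average of $h(Y\mid X=x)$ over $x$, and the degenerate case where $\Sigma_{\mathcal{M}(X)}$ is singular, which is handled by restricting the construction to the range of $\Sigma_{\mathcal{M}(X)}$, since any zero eigenvalue contributes $0$ to both $\mathrm{tr}(\Sigma_B)$ and the log-determinant and can simply be dropped.
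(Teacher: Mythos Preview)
Your proposal is correct. The paper does not give a dedicated proof of this cited theorem, but the argument it uses in the proof of Proposition~\ref{prop:MI_gap} to characterize $\mathtt{Gap}_{\mathtt{d}}$ is exactly your decomposition $\mathtt{MI}(X;Y)=h(Y)-h(B)$ combined with the Gaussian maximum-entropy bound on $h(Y)$; your eigenbasis construction for the second claim is the standard one and is not re-proved in the paper. One small slip: in your final chain you wrote $\mathtt{MI}\leq\tfrac{1}{2}\mathtt{LogDet}\leq\tfrac{1}{2}$, but $\mathtt{LogDet}$ already carries the factor $\tfrac{1}{2}$, so the intended inequality is $\mathtt{MI}\leq\mathtt{LogDet}\leq\tfrac{1}{2}$.
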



\begin{algorithm}[t]
\caption{$(1-\gamma)$-Confidence Auto-PAC \cite{xiao2023pac}}
\label{alg:PAC_original}
\begin{algorithmic}[1]
\REQUIRE deterministic mechanism $\mathcal{M}$, dataset $\mathcal{D}$, sample size $m$, security parameter $c$, mutual information quantities $\beta'$ and $v$.
\FOR{$k=1,2,\ldots,m$}
\STATE Generate $X^{(k)}$ from $\mathcal{D}$. Record $y^{(k)}= \mathcal{M}(X^{(k)})$.
\ENDFOR 
\STATE Calculate $\hat{\mu}={\sum_{k=1}^m y^{(k)}}/{m}$ and 
$\hat{\Sigma} = {\sum_{k=1}^m (y^{(k)}-\hat{\mu})(y^{(k)}-\hat{\mu})^{\top}}/{m}.$
\STATE  Apply SVD: $\hat{\Sigma}=\hat{U}\hat{\Lambda} \hat{U}^{\top},$ where $\hat{\Lambda}$ has eigenvalues $\hat{\lambda}_1 \geq \hat{\lambda}_2 \geq \ldots \geq  \hat{\lambda}_d$.   
\STATE Find $j_0 = \arg \max_j \hat{\lambda}_j$ for $\hat{\lambda}_j > c$.
\IF{$\min_{ 1\leq j \leq j_0, 1 \leq l \leq d}  |\hat{\lambda}_j-\hat{\lambda}_l| > r\sqrt{dc}+2c$}
\FOR{$j=1, 2, \ldots, d$}
\STATE Set {\small$\displaystyle\lambda_{B,j} = \frac{2v}{ \sqrt{\hat{\lambda}_j+10cv/\beta'} \cdot \big(\sum_{j=1}^d \sqrt{\hat{\lambda}_j+10cv/\beta'} \big)}.$}
\ENDFOR
\STATE Set $\Sigma_{\bm{B}} = \hat{U}\Lambda^{-1}_{\bm{B}}\hat{U}^{\top}$.
\ELSE
\STATE Set $\Sigma_{\bm{B}}= (\sum_{j=1}^d \hat{\lambda}_j+dc)/(2v) \cdot \bm{I}_d$.
\ENDIF 
\STATE \textbf{Output}: $\Sigma_{\bm{B}}$.  
\end{algorithmic}
\end{algorithm}

Theorem \ref{thm:PAC_original_thm3} establishes a simple upper bound on the mutual information with Gaussian noise perturbation.
Choosing $\Sigma_{B}$ to implement the Gaussian surrogate bound $\mathtt{LogDet}(\mathcal{M}(X), B) = \beta$ for a privacy budget $\beta$ enables \textit{anisotropic} noise as it estimates the eigenvectors of $\mathcal{M}(X)$ to fit the noise to the geometry of the eigenspace of $\mathcal{M}(X)$.
The result extends naturally to randomized mechanisms (Corollary 2 of \cite{xiao2023pac}).
Building on Theorem \ref{thm:PAC_original_thm3}, Algorithm \ref{alg:PAC_original} (we refer to it as \textit{$(1-\gamma)$-Confidence Auto-PAC}) is proposed by \cite{xiao2023pac} to perform automatic PAC privatization.
Algorithm \ref{alg:PAC_original} aims to determine an Gaussian noise covariance $\Sigma_{B}$, so that
$\mathtt{MI}(X; \mathcal{M}(X)+B) \leq \beta$ is satisfied with confidence at least $1 -$ $\gamma$.

\subsection{Differential Privacy}

In addition to the standard PAC Privacy, we also compare our approach to the differential privacy (DP) framework.
Let $x = (x_1, x_2, \ldots, x_n) \in \mathcal{X}=(\mathcal{X}^{\dagger})^{n}$ be the input dataset, where each data point $x_i$ is defined over some measurable domain $\mathcal{X}^{\dagger}$.
We say two datasets $x, x' \in \mathcal{X}$ are \textit{adjacent} if they differ in exactly one data point.

\begin{definition}[$(\epsilon, \Bar{\delta})$-Differential Privacy \cite{dwork2006calibrating}]\label{def:DP}
    A randomized mechanism $\mathcal{M}:\mathcal{X}\mapsto\mathcal{Y}$ is said to be \textup{$(\epsilon, \Bar{\delta})$-differentially private (DP)}, with $\epsilon\geq 0$ and $\Bar{\delta}\in[0,1]$, if for any pair of adjacent datasets $x, x'$, and any measurable $\mathcal{W}\subseteq \mathcal{Y}$, it holds that 
    $\Pr[\mathcal{M}(x)\in \mathcal{W}]\leq e^{\epsilon}\Pr[\mathcal{M}(x')\in\mathcal{W}] + \Bar{\delta}$.
\end{definition}

The parameter $\epsilon$ is usually referred to as the \textit{privacy budget}, and $\delta\in(0,1]$ represents the failure probability. 
DP is an input-independent adversarial worst-case approaches that focus on the sensitivity magnitude, while Auto-PAC is instance-based and adds anisotropic noise tailored to each direction as needed.
Appendix \ref{app:difference_DP_PAC} characterizes the difference between DP, PAC Privacy, and our Residual-PAC (R-PAC) Privacy.

\section{Characterizing The Gaussian Barrier of Automatic PAC Privatization}\label{sec:utility_characterization_PAC}

This section characterizes the utility of Auto-PAC by focusing on the conservativeness of the implemented mutual information bounds.
To distinguish from Algorithm~\ref{alg:PAC_original} ($(1-\gamma)$-confidence Auto-PAC), Auto-PAC refers to the direct implementation of privacy budgets for the bound $\mathtt{LogDet}(\mathcal{M}(X), B)$ without a target conference level.
The Gaussian surrogate bound is conservative due to a nonzero \textit{Gaussianity gap}, the discrepancy between the \textit{true mutual information} and $\mathtt{LogDet}(\mathcal{M}(X), B)$ defined by (\ref{eq:logdet}):
\begin{equation}\label{eq:gap_def}
    \mathtt{Gap}_{\mathtt{d}} \equiv \mathtt{LogDet}(\mathcal{M}(X), B) - \mathtt{MI}(X; \mathcal{M}(X)+B).
\end{equation}
Define $Z = \mathcal{M}(X) + B$ with mean $\mu_Z = \mu_{\mathcal{M}(X)}$ and covariance $\Sigma_Z = \Sigma_{\mathcal{M}(X)} + \Sigma_B$. Let $P_{\mathcal{M},B}$ denote the true distribution of $Z = \mathcal{M}(X) + B$, and define the \textit{Gaussian surrogate distribution} as
\begin{equation}\label{eq:tilde_Q}
    \widetilde{Q}_{\mathcal{M}} \equiv \mathcal{N}(\mu_Z, \Sigma_Z)
\end{equation}
with the same first and second moments as $Z\sim P_{\mathcal{M},B}$.

\begin{proposition}\label{prop:MI_gap}
    Let $B \sim \mathcal{N}(0, \Sigma_{B})$. Then, $\mathtt{Gap}_{\mathtt{d}} = \mathtt{D}_{\mathrm{KL}}(P_{\mathcal{M},B}\| \widetilde{Q}_{\mathcal{M}})\geq 0$.
    Moreover, $\mathtt{Gap}_{\mathtt{d}}=0$ iff $P_{\mathcal{M},B} = \widetilde{Q}_{\mathcal{M}}$.
\end{proposition}

Proposition \ref{prop:MI_gap} shows that the conservativeness of $\mathcal{M}(X)$ in terms of the Gaussianity gap is equivalent to the KL divergence between the true output distribution and the Gaussian surrogate distribution. 
Thus, Auto-PAC tightly implements a privacy budget if and only if the true perturbed output distribution coincides with the Gaussian surrogate distribution in (\ref{prop:MI_gap}).



\begin{proposition}\label{prop:opt_PAC_algorithm}
    For any privacy budget $\beta > 0$, the noise distribution $Q = \mathcal{N}(0,\Sigma_{B})$ obtained by Auto-PAC is the unique solution of the following problem:
\begin{equation}\label{eq:opt_PAC_algorithm}
    \inf_{ Q' = \mathcal{N}(\mu, \Sigma')} \mathbb{E}_{B\sim Q'}\bigl[\|B\|_2^2\bigr] \quad \textup{s.t.} \quad
\mathtt{MI}(X; \widetilde{Z}) \leq \beta \textup{ with } \widetilde{Z} \sim \widetilde{Q}_{\mathcal{M}}.
\end{equation}
\end{proposition}

Proposition \ref{prop:opt_PAC_algorithm} implies that, if we replace $Z\sim P_{\mathcal{M},B}$ by $\widetilde{Z} \sim \widetilde{Q}_{\mathcal{M}}$, Auto-PAC's zero-mean Gaussian noise is the optimal solution to minimize the magnitude of the Gaussian noise subject to the mutual information constraint.

\begin{proposition}\label{prop:gamma_PAC_conservative}
For the same privacy budget $\beta>0$, let $Q$ and $Q_{\gamma}$, respectively, be the Gaussian noise distribution obtained by Auto-PAC and $(1-\gamma)$-Confidence Auto-PAC with any $\gamma\in[0,1]$. 
Let $B\sim Q$ and $B_{\gamma}\sim Q_{\gamma}$.
Then, the following holds.
\begin{itemize}
    \item[(i)] $\mathtt{MI}(X;\mathcal{M}(X) + B_{\gamma})\leq \mathtt{MI}(X;\mathcal{M}(X) + B)$. 
    \item[(ii)] $\mathbb{E}_{Q_{\gamma}}[\|B_{\gamma}\|^2_{2}] \geq \mathbb{E}_{Q}[\|B\|^2_{2}].$
\end{itemize}
\end{proposition}

In Proposition \ref{prop:gamma_PAC_conservative}, part (i) shows that $(1-\gamma)$-confidence Auto-PAC is more conservative than directly implementing $\mathtt{LogDet}(\mathcal{M}(X), B)$ (Auto-PAC) for the same privacy budget. Part (ii) demonstrates that $(1-\gamma)$-confidence Auto-PAC uses larger noise magnitude than Auto-PAC for the same privacy budget. Thus, in subsequent comparisons involving PAC Privacy, we focus on Auto-PAC.

\subsection{Mechanism Comparison in PAC Privacy}

Definition 9 of \cite{xiao2023pac} defines the optimal perturbation for PAC Privacy that tightly implements the privacy budget while maintaining optimal utility, where utility is captured by a loss function $\mathcal{K}$.
An optimal perturbation $Q^{*}$ is a solution of the following optimization problem:
\begin{equation}\label{eq:PAC_org_OPT}
    \inf_{Q}\; \mathbb{E}_{Q,\mathcal{M}, \mathcal{D}}[\mathcal{K}(B;\mathcal{M})] \quad \text{s.t.} \quad \mathtt{MI}(X;\mathcal{M}(X)+B) \leq \beta,  B \sim Q.
\end{equation}
The choice of utility loss function $\mathcal{K}$ is context-dependent. However, in many applications, we are primarily concerned with the expected Euclidean norm of the noise or a convex function thereof, e.g., $\mathbb{E}_{Q,\mathcal{M}, \mathcal{D}}[\mathcal{K}(B;\mathcal{M})] = \mathbb{E}_{Q}\bigl[\|B\|_2^2\bigr]$.

We now show in Proposition \ref{prop:ordering_pac} that using $\mathbb{E}_{Q,\mathcal{M}, \mathcal{D}}[\mathcal{K}(B;\mathcal{M})] = \mathbb{E}_{Q}\bigl[\|B\|_2^2\bigr]$ is sufficient to obtain perturbations that maintain \textit{coherent ordering} of PAC Privacy using mutual information (i.e., larger privacy budgets yield non-decreasing actual mutual information).

\begin{proposition}\label{prop:ordering_pac}
    Fix a mechanism $\mathcal{M}$ and data distribution $\mathcal{D}$.
    Let $\mathcal{Q}$ denote the collection of all zero-mean noise distributions under consideration, and let $\mathtt{I}_{\mathrm{true}} : \mathcal{Q} \mapsto \mathbb{R}_{\geq0}$ be the true mutual information functional; i.e., $\mathtt{I}_{\mathrm{true}}(Q) = \mathtt{MI}(X;\mathcal{M}(X)+B)$ with $B \sim Q$ for $Q \in \mathcal{Q}$.
    For each privacy budget $\beta \geq 0$, define the feasible region $\displaystyle\mathcal{F}(\beta) \equiv \{Q \in \mathcal{Q} : \mathtt{I}_{\mathrm{true}}(Q) \leq \beta\}.$
    Suppose that $\mathcal{F}(\beta)$ is nonempty for all privacy budgets of interest. 
    For each $\beta \geq 0$, let $Q^{*}(\beta)$ be a solution of the problem:
    \begin{equation}
        \min_{Q} \mathbb{E}_{B \sim Q}[\|B\|^{2}_{2}] \quad \text{s.t.} \quad Q \in \mathcal{F}(\beta).
    \end{equation}
    Then, if $\beta_{1} < \beta_{2}$, we have $\mathtt{I}_{\mathrm{true}}(Q^{*}(\beta_{1})) \leq \mathtt{I}_{\mathrm{true}}(Q^{*}(\beta_{2}))$.  
\end{proposition}

However, if Auto-PAC is used to solve the optimization problem (\ref{eq:opt_PAC_algorithm}), we have the conservative implementation of a given privacy budget.
For any mechanism $\mathcal{M}: \mathcal{X} \mapsto \mathcal{Y}$, we let $\mathtt{Gap}_{\mathtt{d}}(Q) = \mathtt{D}_{\mathrm{KL}}(P_{\mathcal{M},B} \| \widetilde{Q}_{\mathcal{M}})$ with $B \sim Q$.
The next result shows that when $\mathtt{Gap}_{\mathtt{d}}(Q)> 0$, Auto-PAC does not, in general, maintain coherent ordering of PAC Privacy.

\begin{theorem}\label{thm:non_coherent_PAC_org}
    Fix a mechanism $\mathcal{M}$ and data distribution $\mathcal{D}$.
    Let $\mathcal{Q}$ denote the collection of all zero-mean Gaussian distributions under consideration, and let $\mathtt{I}_{\mathrm{true}} : \mathcal{Q} \mapsto \mathbb{R}_{\geq0}$ be the true mutual information functional; i.e., $\mathtt{I}_{\mathrm{true}}(Q) = \mathtt{MI}(X;\mathcal{M}(X)+B)$ with $B \sim Q$ for $Q \in \mathcal{Q}$.
    For each $\beta \geq 0$, let $Q^{*}(\beta)$ be a solution of the optimization in Proposition \ref{prop:opt_PAC_algorithm}.
    For any $0 < \beta_{1} < \beta_{2}$, define $\mathtt{G}(\beta_{2}, \beta_{1}) \equiv \mathtt{Gap}_{\mathtt{d}}(Q^*(\beta_{2})) - \mathtt{Gap}_{\mathtt{d}}(Q^*(\beta_{1})).$
    Then:
    \begin{enumerate}
        \item[(i)] If $\mathtt{G}(\beta_{2}, \beta_{1}) \leq \beta_{2} - \beta_{1}$, then $\mathtt{I}_{\mathrm{true}}(Q^{*}(\beta_{1})) \leq \mathtt{I}_{\mathrm{true}}(Q^{*}(\beta_{2}))$.
        \item[(ii)] If $\mathtt{G}(\beta_{2}, \beta_{1}) > \beta_{2} - \beta_{1}$, then $\mathtt{I}_{\mathrm{true}}(Q^{*}(\beta_{1})) > \mathtt{I}_{\mathrm{true}}(Q^{*}(\beta_{2}))$.
    \end{enumerate}
\end{theorem}

Theorem \ref{thm:non_coherent_PAC_org} characterizes when Auto-PAC maintains coherent ordering of actual information leakage $\mathtt{I}_{\mathrm{true}} = \beta - \mathtt{Gap}_{\mathtt{d}}$, and when not.
Increasing the budget from $\beta_1$ to $\beta_2$ permits extra leakage $\beta_{2} - \beta_{1}$ by using Auto-PAC, but part may be wasted if the mechanism output becomes more non-Gaussian. The wasted portion is $\mathtt{G}(\beta_{2}, \beta_{1}) = \mathtt{Gap}_{\mathtt{d}}(Q^*(\beta_{2})) - \mathtt{Gap}_{\mathtt{d}}(Q^*(\beta_{1}))$. If this waste exceeds the budget increase, then $\mathtt{I}_{\mathrm{true}}$ decreases despite a larger nominal budget, violating coherent ordering. This result cautions against comparing mechanisms using Auto-PAC solely by budgets, as identical budgets may yield different true PAC Privacy leakages depending on their respective Gaussianity gaps.

\subsection{$\mathtt{Gap}_{\mathtt{d}}$ Reduction via Non-Gaussianity Correction}\label{sec:gap_reduction}

In this section, we propose two approaches to reduce $\mathtt{Gap}_{\mathtt{d}}$ after a $\mathcal{N}(0, \Sigma_{B})$ is determined by Auto-PAC.
For any deterministic mechanism $\mathcal{M}$ and Gaussian noise $B \sim \mathcal{N}(0, \Sigma_{B})$, recall the Gaussian surrogate distribution $\widetilde{Q}_{\mathcal{M}} = \mathcal{N}(\mu_{Z}, \Sigma_{Z})$ in (\ref{eq:tilde_Q}).
Let $\mathtt{D}_{Z} = \mathtt{D}_{\mathrm{KL}}(P_{\mathcal{M},B} \| \widetilde{Q}_{\mathcal{M}})$. By Proposition \ref{prop:MI_gap}, $\mathtt{Gap}_{\mathtt{d}} = \mathtt{D}_{Z}$.
For any estimator $\widehat{\mathtt{D}}_{Z}$ of $\mathtt{D}_{Z}$, define the \textit{improved mutual information} estimate:
\[
\mathtt{IMI}(\widehat{\mathtt{D}}_{Z}) \equiv \mathtt{LogDet}(\mathcal{M}(X), B) - \widehat{\mathtt{D}}_{Z}.
\]
For $0\leq \widehat{\mathtt{D}}_{Z} \leq \mathtt{D}_{Z}$, we have $\mathtt{MI}(X;\mathcal{M}(X)+B) \leq \mathtt{IMI}(\widehat{\mathtt{D}}_{Z}) \leq \mathtt{LogDet}(\mathcal{M}(X), B).$
Thus, if we can get $\widehat{\mathtt{D}}_{Z}$ between $\mathtt{D}_{Z}$ and $0$
after Auto-PAC privatization is performed, then for any $\Sigma_{B}$ that ensures $\mathtt{LogDet}(\mathcal{M}(X), B) = \beta$, we have $\mathtt{IMI}(\widehat{\mathtt{D}}_{Z}) = \beta - \widehat{\mathtt{D}}_{Z}$ as surrogate upper bound that is tighter than $\mathtt{LogDet}(\mathcal{M}(X), B)$.
Thus, we can have tighter privacy accounting post-hoc to the Auto-PAC privatization to save additional privacy budget, without requiring direct mutual information estimation.

Before describing the approaches, we first introduce two standard discrepancy measures between $P_{\mathcal{M}, B}$ and $\widetilde{Q}_{\mathcal{M}}$.

\begin{definition}[Donsker–Varadhan (DV) Objective \cite{donsker1975asymptotic}]\label{def:DV_objective}
For probability measures $P$ and $Q$ on a common measurable space,
\[
\mathtt{D}_{\mathrm{KL}}(P\|Q)\;=\;\sup_{f:\,\mathcal{Y}\to\mathbb R}\Big\{\,\mathbb{E}_{P}[f(Y)]-\log \mathbb{E}_{Q}\big[e^{f(Y)}\big]\Big\} ,
\]
where the supremum ranges over measurable $f$ such that $\mathbb{E}_Q[e^{f}]<\infty$.
We call 
$
\mathcal{J}(f;P,Q) \equiv \mathbb{E}_{P}[f]-\log \mathbb{E}_{Q}[e^{f}]
$
the \textit{DV objective}.  In our setting,
$\mathtt{D}_Z=\mathtt{D}_{\mathrm{KL}}(P_Z\|\widetilde{Q}_{\mathcal{M}})
=\sup_{f}\mathcal{J} \big(f;P_{\mathcal{M}, B},\widetilde{Q}_{\mathcal{M}}\big)$.
\end{definition}

\begin{definition}[Sliced Wasserstein Distances (SWD) \cite{rabin2011wasserstein,bonneel2015sliced}]
For $p\geq 1$, the $p$-Wasserstein distance between $P$ and $Q$ on $\mathbb{R}^d$ is
$\displaystyle \mathrm{W}_p(P,Q)\;=\;\Big(\inf_{\eta\in\widehat{\Pi}(P,Q)}\,\mathbb{E}_{(X,Y)\sim\eta}\big[\|X-Y\|_2^p\big]\Big)^{1/p},$
%
where $\widehat{\Pi}(P,Q)$ is the set of couplings with marginals $P$ and $Q$.
The \textit{sliced} $p$-Wasserstein distance averages 1-D Wasserstein distances over directions $v$ on the unit sphere $\mathbb{S}^{d-1}$:
\[
\mathrm{SW}_p^p(P,Q)\;=\;\int_{\mathbb{S}^{d-1}} \mathrm{W}_p^p \big(\mathcal{L}(\langle v,X\rangle),\,\mathcal{L}(\langle v,Y\rangle)\big)\,d\sigma(v),
\]
where $\sigma$ is the uniform (Haar) measure on $\mathbb S^{d-1}$ and $\mathcal{L}(\cdot)$ denotes the law of its argument.
In our setting we write $\mathrm{W}_p(P_{\mathcal{M}, B},\widetilde{Q}_{\mathcal{M}})$ and $\mathrm{SW}_p(P_{\mathcal{M}, B},\widetilde{Q}_{\mathcal{M}})$.
\end{definition}

\begin{definition}[Finite-Sample Lower-Confidence DV Estimator]\label{def:lce-dv}
Fix a function class $\mathcal{F}\subset\{f:\mathbb{R}^d \to \mathbb{R}\}$ with $0\in\mathcal{F}$ and let $\widehat{\mathcal{J}}(f;S_P,S_Q)\;\equiv\;
\frac{1}{|S_P|}\sum_{z\in S_P} f(z)\;-\;\log \Big(\frac{1}{|S_Q|}\sum_{z\in S_Q} e^{f(z)}\Big)$ denote the empirical DV objective on samples $S_P$ from $P_Z$ and $S_Q$ from $\widetilde{Q}_{\mathcal{M}}$.
Draw four independent splits
$S_P^{\mathrm{tr}},S_Q^{\mathrm{tr}},S_P^{\mathrm{val}},S_Q^{\mathrm{val}}$
with sizes $n_P^{\mathrm{tr}},n_Q^{\mathrm{tr}},n_P^{\mathrm{val}},n_Q^{\mathrm{val}}$ respectively, and fit $\widehat{f}_{\mathrm{tr}}\;\in\;\arg\max_{f\in\mathcal{F}}\;
\widehat{\mathcal{J}} \big(f;S_P^{\mathrm{tr}},S_Q^{\mathrm{tr}}\big).$

Let $\Gamma_{\hat{\delta}}=\Gamma_{\hat{\delta}}\big(\mathcal{F},n_P^{\mathrm{val}},n_Q^{\mathrm{val}}\big)$ be any valid uniform deviation bound satisfying, with probability at least $1-{\hat{\delta}}$, $\sup_{f\in\mathcal{F}}\Big|
\widehat{\mathcal{J}}\big(f;S_P^{\mathrm{val}},S_Q^{\mathrm{val}}\big)
-\mathcal{J} \big(f;P_{\mathcal{M}, B},\widetilde{Q}_{\mathcal{M}}\big)\Big|
\leq\Gamma_{\hat{\delta}}$,
where $\mathcal{J}(f;P,Q)$ is the DV objective (Definition \ref{def:DV_objective}).
The \textit{finite-sample lower-confidence estimator} of $\mathtt{D}_Z=\mathtt{D}_{\mathrm{KL}}(P_{\mathcal{M}, B}\|\widetilde{Q}_{\mathcal{M}})$ is
\[
\widehat{\mathtt{D}}_{\mathrm{LCE}}
\equiv
\Big[\widehat{\mathcal{J}}\big(\widehat{f}_{\mathrm{tr}};S_P^{\mathrm{val}},S_Q^{\mathrm{val}}\big)-\Gamma_{\hat{\delta}}\Big]_{+}.
\]
\end{definition}

Definition \ref{def:lce-dv} specifies a finite-sample lower-confidence estimator.

\begin{theorem}[DV-Based Correction]\label{thm:DV}
Let $Z=\mathcal{M}(X)+B$ with deterministic $\mathcal{M}$ and $B\sim\mathcal{N}(0$, $\Sigma_B)$, and let $\widetilde{Q}_{\mathcal{M}}$ be defined by (\ref{eq:tilde_Q}). Assume $P_{\mathcal{M},B}\ll \widetilde{Q}_{\mathcal{M}}$. For any measurable $f:\mathbb{R}^d \to \mathbb{R}$ with $\mathbb{E}_{\widetilde{Q}_{\mathcal{M}}}[e^{f(Z)}]<\infty$, define
\[
\widehat{\mathtt{D}}_Z(f)
\equiv
\mathcal{J} \big(f;P_{\mathcal{M}, B},\widetilde{Q}_{\mathcal{M}}\big)
=\mathbb{E}_{P_Z}[f(Z)]-\log\mathbb{E}_{\widetilde{Q}_{\mathcal{M}}} \big[e^{f(Z)}\big].
\]
Let $\widehat{\mathtt{D}}_{\mathrm{LCE}}$ be the finite-sample lower-confidence estimator from Definition~\ref{def:lce-dv}.
Then: 
\begin{itemize}
\item[\textnormal{(i)}] $0 \leq \sup_{f}\widehat{\mathtt{D}}_Z(f) =\mathtt{D}_{\mathrm{KL}} (P_{\mathcal{M}, B}\|\widetilde{Q}_{\mathcal{M}})$.
\item[\textnormal{(ii)}] For every $f$, $\widehat{\mathtt{D}}_Z(f)\leq \mathtt{D}_{\mathrm{KL}} \big(P_{\mathcal{M}, B}\big\|\widetilde{Q}_{\mathcal{M}}\big)\equiv \mathtt{D}_Z$.
\item[\textnormal{(iii)}] With probability at least $1-{\hat{\delta}}$ (over the independent validation splits in Definition~\ref{def:lce-dv}), $0\leq \widehat{\mathtt{D}}_{\mathrm{LCE}}\leq \mathtt{D}_Z$.
\end{itemize}
\end{theorem}

\begin{theorem}[SWD-Based Correction]\label{thm:SWD}
Let $Z=\mathcal{M}(X)+B$ with deterministic $\mathcal{M}$ and $B\sim\mathcal{N}(0,\Sigma_B)$, and let $\widetilde{Q}_{\mathcal{M}}=\mathcal{N}(\mu_Z,\Sigma_Z)$ be defined by (\ref{eq:tilde_Q}), and let $\lambda_{\max}(\Sigma_Z)$ be the largest eigenvalue of $\Sigma_Z$. Assume $P_{\mathcal{M},B}\ll \widetilde{Q}_{\mathcal{M}}$ and $\Sigma_Z\succ 0$. Define
\[
\widehat{\mathtt{D}}_{Z} \equiv \frac{1}{2\lambda_{\max}(\Sigma_Z)}\mathrm{SW}_2^2\big(P_{\mathcal{M}, B},\widetilde{Q}_{\mathcal{M}}\big).
\]
Then $0\leq \widehat{\mathtt{D}}_{Z}\leq \mathtt{D}_Z$.
\end{theorem}

Theorems \ref{thm:DV} and \ref{thm:SWD} lead to Corollary \ref{thm:pac_privacy}.

\begin{corollary}\label{thm:pac_privacy}
Let $\mathcal{M}: \mathcal{X} \mapsto \mathbb{R}^{d}$ be an arbitrary deterministic mechanism and $B \sim \mathcal{N}(0, \Sigma_{B})$ such that $\mathtt{LogDet}(\mathcal{M}(X), B) = \beta$. Under the assumptions of Theorems \ref{thm:DV} and \ref{thm:SWD}, the perturbed mechanism $Z = \mathcal{M}(X) + B$ is PAC private with 
\begin{equation}\label{eq:gap_reduced}
    \mathtt{MI}(X;Z) \leq \beta - \widehat{\mathtt{D}}_Z < \beta,
\end{equation}
    where $\widehat{\mathtt{D}}_Z > 0$ is obtained by Theorem \ref{thm:DV} ($\widehat{\mathtt{D}}_Z(f)$) or Theorem \ref{thm:SWD}.
    In addition, if $\widehat{\mathtt{D}}_Z = \widehat{\mathtt D}_{\mathrm{LCE}}$, then (\ref{eq:gap_reduced}) holds with probability at least $1-\hat{\delta}$.
\end{corollary}

Corollary \ref{thm:pac_privacy} shows that accounting for non-Gaussianity through the correction term $\widehat{\mathtt{D}}_{Z} > 0$ yields $\mathtt{MI}(X; Z) \leq \beta - \widehat{\mathtt{D}}_{Z} < \beta$, where the correction is obtained via DV-based correction or sliced Wasserstein correction.
In practice, $\widehat{\mathtt{D}}_{Z}$ estimates the Gaussianity gap $\mathtt{Gap}_{d}$, capturing the saved privacy budget, which is particularly valuable for budget savings in mechanism composition.
However, this budget-saving approach is post-hoc after Auto-PAC privatization.
Appendix~\ref*{app:non_gaussianity_correction} in \cite{zhang2025breaking} provides additional discussions and interpretations.
Next, we propose a new privacy framework enabling automatic optimal privacy budget implementation.

\section{Residual-PAC Privacy}

Recall that PAC Advantage Privacy (Definition~\ref{def:PAC_advantage}) quantifies the amount of \textit{privacy leaked by} $\mathcal{M}(X)$ in terms of the posterior advantage $\Delta_f^\delta$ encountered by the adversary.
Complementing this perspective, we introduce the notion of \textit{posterior disadvantage} encountered by the adversary, which captures the amount of \textit{residual privacy protection} that persists after leakage by $\mathcal{M}(X)$. 

To formalize this residual protection, we first define the \textit{intrinsic privacy} of a data distribution $\mathcal{D}$ relative to a fixed reference distribution $\mathcal{R}$ on $\mathcal{X}$ such that \textit{(i)} $\mathrm{supp}(\mathcal{D})\subseteq \mathrm{supp}(\mathcal{R})$ and \textit{(ii)} the f-divergence $\mathtt{D}_{f}(\mathcal{D}\|\mathcal{R})$ is finite (when $\mathtt{D}_{f}$ is the KL-divergence, this means the entropy of $\mathcal{R}$ is finite; see Section \ref{sec:foundation_residual_pac} for the formal definition of Shannon/differential entropy).
The intrinsic privacy is then defined based on $f$-divergence as
\[
\mathtt{IntP}_{f}(\mathcal{D}) = -\,\mathtt{D}_{f}(\mathcal{D}\,\|\,\mathcal{R}),
\]
where $\mathtt{D}_f(\mathcal{D}\|\mathcal{R})$ is the $f$-divergence between $\mathcal{D}$ and $\mathcal{R}$, quantifying how much $\mathcal{D}$ deviates from the reference $\mathcal{R}$. Intuitively, $-\mathtt{D}_f(\mathcal{D}\|\mathcal{R})$ rewards distributions that remain close to the "random guess" using $\mathcal{R}$, and by construction $\mathtt{IntP}_{f}(\mathcal{D})\leq 0$, attaining zero exactly when $\mathcal{D} = \mathcal{R}$.

\textbf{Examples of $\mathcal{R}$. }
When $\mathcal{X}$ is bounded, $\mathcal{R}$ can be the uniform law $\mathcal{U}$ on $\mathcal{X}$.
However, on an unbounded $\mathcal{X}$, the uniform reference $\mathcal{R}=\mathcal{U}$ has infinite volume $\int_{\mathcal{X}}dx = \infty$, potentially making $\mathtt{IntP}_{f}(\mathcal{D})$ vacuous or undefined.
To avoid this, we instead require $\mathcal{R}$ to satisfy $\mathtt{D}_{f}(\mathcal{D}\|\mathcal{R})<\infty$.
For example, one can choose $\mathcal{R}$ by: \textit{(i)} truncated uniform on a large but bounded region containing $\mathrm{supp}(\mathcal{D})$, 
\textit{(ii)} maximum-entropy Gaussian matching known moments of $\mathcal{D}$, or 
\textit{(iii)} smooth pullback of uniform on $(0,1)^d$ via bijection (e.g., component-wise sigmoid).
Under any of these constructions, $\mathcal{R}$ retains the "random-guess" semantics yet has finite $\mathtt{D}_{f}(\mathcal{D}\,\|\,\mathcal{R})$, ensuring $\mathtt{IntP}_{f}(\mathcal{D})$ remains meaningful even on unbounded $\mathcal{X}$.
Please see Appendix~\ref*{app:technical_reference} in \cite{zhang2025breaking} for a detailed discussion.

\begin{definition}[$(\mathtt{R}_{f}^{\delta}, \rho, \mathcal{D})$ Residual-PAC (R-PAC) Privacy]\label{def:residual_PAC}
A mechanism $\mathcal{M}$ is said to be $(\mathtt{R}_{f}^{\delta}, \rho, \mathcal{D})$ \textup{Residual-PAC (R-PAC) private} if it is $(\delta, \rho, \mathcal{D})$ PAC private and
\[
\mathtt{R}_{f}^{\delta} \equiv \mathtt{IntP}_{f}(\mathcal{D}) - \mathtt{D}_f(\mathbf{1}_{\delta} \| \mathbf{1}_{\delta^\rho_o}),
\]
is the \textup{posterior disadvantage},
where $\mathbf{1}_\delta$ and $\mathbf{1}_{\delta^\rho_o}$ are indicator distributions representing the adversary's inference success before and after observing the mechanism's output, respectively.
\end{definition}

The posterior disadvantage $\mathtt{R}_{f}^{\delta}$ captures the \textit{residual privacy guarantee}, which is the portion of intrinsic privacy (w.r.t. a reference $\mathcal{R}$) that remains uncompromised after the privacy loss $\Delta_f^\delta=\mathtt{D}_f(\mathbf{1}_{\delta} \| \mathbf{1}_{\delta^\rho_o})$ (Definition \ref{def:PAC_advantage}).
Then, the total intrinsic privacy is precisely decomposed as
\begin{equation}\label{eq:link_PAC_privacy}
    \mathtt{IntP}_f(\mathcal{D}) = \mathtt{R}_f^\delta + \Delta_f^\delta.
\end{equation}
This relationship provides a complete and interpretable quantification of privacy risk, distinguishing between the privacy that is lost and that which endures after information disclosure via $\mathcal{M}(X)$.
Analogous to PAC Privacy, membership inference attacks (MIA) and R-PAC Membership Privacy can be instantiated from R-PAC Privacy. See Appendix~\ref*{app:MIA} in \cite{zhang2025breaking} for detailed constructions.

\subsection{Foundation of Residual-PAC Privacy}\label{sec:foundation_residual_pac}

In this section, we develop general results to support concrete analyses under R-PAC Privacy framework. We begin by introducing key information-theoretic quantities, entropy and conditional entropy.

\textbf{Entropy. }
The \textit{Shannon entropy} of a discrete random variable $X$ on alphabet $\mathcal{X}$ is given by 
\[
\mathcal{H}(X) = -\sum\nolimits_{x \in \mathcal{X}} P_X(x) \log P_X(x)
\]
while for continuous $X$, the \textit{differential entropy} is 
\[
h(X) = -\int\nolimits_{\mathcal{X}} f_X(x) \log f_X(x) \, dx.
\]

\textbf{Conditional Entropy.}
Let $(X,W)$ be jointly distributed random variables. 
When $X$ is discrete, the \textit{conditional entropy} of $X$ given $W$ is defined by
\[
\mathcal{H}(X|W) \equiv \mathbb{E}_{W}\!\left[\mathcal{H}(X|W=w)\right].
\]
When $X$ is continuous, the conditional entropy is $h(X|W) \equiv \mathbb{E}_{W}\!\left[h(X|W=w)\right]$.
Here, the expectation is $\sum_{w\in\mathcal{W}}P_W(w)(\cdot)$ if $W$ is discrete with mass $P_W$, and $\int_{\mathcal{W}} f_W(w)(\cdot)\,dw$ if $W$ is continuous with density $f_W$.

For ease of exposition, we use $\mathcal{H}(X)$ to denote the entropy of $X$, either Shannon or differential depending on the context, and $\mathcal{H}(X|W)$ to denote the corresponding conditional entropy.
When all entropies are finite, mutual information can equivalently be expressed as
\begin{equation}\label{eq:MI_def_CE}
    \mathtt{MI}(X;W)
= \mathcal{H}(X) - \mathcal{H}(X|W).
\end{equation}

Consider any $f$-divergence $\mathtt{D}_f$, Theorem 1 of \cite{xiao2023pac} shows that the posterior advantage $\Delta_f^{\delta}$ is bounded by the minimum $f$-divergence between the joint distribution of $(X, \mathcal{M}(X))$, denoted by $P_{X, \mathcal{M}(X)}$, and the product of the marginal distribution $P_X$ and any auxiliary output distribution $P_W$ independent of $X$:
\begin{equation}\label{eq:PAC_org_first_bound}
    \Delta_f^{\delta}\; \leq \; \inf_{P_W} \; \mathtt{D}_f\big(P_{X, \mathcal{M}(X)} \;\|\; P_X \otimes P_W\big),
\end{equation}
where $P_{X, \mathcal{M}(X)}$ denotes the joint distribution of the data and mechanism output, $P_{X} = \mathcal{D}$, and $P_W$ ranges over all distributions on the output space.
When $\mathtt{D}_{f}$ is instantiated as $\mathtt{D}_{\mathtt{KL}}$ and $P_W=P_{\mathcal{M}(X)}$, we obtain (\ref{eq:KL_upper_MI}).

Thus, for any $f$-divergence $\mathtt{D}_f$, inequality (\ref{eq:PAC_org_first_bound}) implies that a mechanism $\mathcal{M} : \mathcal{X} \rightarrow \mathcal{Y}$ satisfies $(\mathtt{R}_{f}^{\delta}, \rho, \mathcal{D})$ R-PAC Privacy if  
\begin{equation}\label{eq:residual_inequality_f}
    \mathtt{R}_{f}^{\delta} \; \geq \; \mathtt{IntP}_{f}(\mathcal{D}) - \inf_{P_W} \; \mathtt{D}_f\left(P_{X, \mathcal{M}(X)} \,\|\, P_X \otimes P_W\right).
\end{equation}
Let $R$ be a random variable of the reference $\mathcal{R}$ over $\mathcal{X}$.
Corollary \ref{prop:Residual_PAC_basic} follows from Theorem 1 of \cite{xiao2023pac}.
\begin{corollary}\label{prop:Residual_PAC_basic}
Suppose that $\mathcal{H}(X)$ is finite and let $\mathtt{D}_f$ be the KL divergence. A mechanism $\mathcal{M} : \mathcal{X} \rightarrow \mathcal{Y}$ satisfies $(\mathtt{R}_{f}^{\delta}, \rho, \mathcal{D})$ R-PAC Privacy if 
\[
\mathtt{R}_{f}^{\delta} \geq \mathcal{H}(X|\mathcal{M}(X)) - \mathtt{V},
\]
where $\mathtt{V} = \mathcal{H}(R)$ is the entropy of the reference distribution.
\end{corollary}

Corollary \ref{prop:Residual_PAC_basic} establishes that when $\mathcal{H}(X)$ is finite, residual privacy $\mathtt{R}_{f}^{\delta}$ is lower bounded by $\mathcal{H}(X | \mathcal{M}(X)) - \mathtt{V}$, where $\mathtt{V}$ is independent of both data distribution $\mathcal{D}$ and mechanism $\mathcal{M}$. Since $\mathtt{V}$ is constant, $\mathtt{R}_{f}^{\delta} - \mathtt{V}$ effectively provides a privacy quantification lower-bounded by conditional entropy $\mathcal{H}(X|\mathcal{M}(X))$. 
If 
$\mathtt{D}_{f}(\mathcal{D}\|\mathcal{R}) < \infty$, then the inequality (\ref{eq:residual_inequality_f}) holds without requiring $\mathcal{H}(X) < \infty$.

\subsection{Stackelberg Residual-PAC Automatic Privatization}

In this section, we present our algorithms for automatic R-PAC privatization when the $f$-divergence is instantiated with KL divergence, under which the worst-case residual privacy is quantified by conditional entropy.
For a utility loss function $\mathcal{K}$, we define the optimal perturbation problem for any R-PAC privacy budget $\hat{\beta}$ as:
\begin{equation}\label{eq:general_OPT_RPAC}
    \inf_{Q}\; \mathbb{E}_{Q,\mathcal{M}, \mathcal{D}}[\mathcal{K}(B;\mathcal{M})] \quad \text{s.t.} \quad \mathcal{H}(X|\mathcal{M}(X) + B) \geq \hat{\beta}, \; B \sim Q.
\end{equation}
When $\mathcal{H}(X)$ is finite, by (\ref{eq:MI_def_CE}), any solution $Q^*$ to problem (\ref{eq:general_OPT_RPAC}) also solves (\ref{eq:PAC_org_OPT}) with PAC privacy budget $\beta = \mathcal{H}(X) - \hat{\beta}$.
In addition, since $\mathtt{MI}(X; \mathcal{M}(X) + B) = \mathcal{H}(X) - \mathcal{H}(X|\mathcal{M}(X) + B)$ with finite $\mathcal{H}(X)$, solving the optimal perturbation problem (\ref{eq:general_OPT_RPAC}) with conditional entropy constraints presents the same computational challenges as (\ref{eq:PAC_org_OPT}).

To address this limitation, we present a novel automatic privatization algorithm for R-PAC privacy, termed \textit{Stackelberg Residual-PAC (SR-PAC)}. Our approach is based on a Stackelberg game-theoretic characterization of the optimization (\ref{eq:general_OPT_RPAC}). We show that SR-PAC achieves optimal perturbation without wasting privacy budget. Consequently, when $\mathbb{E}_{Q,\mathcal{M}, \mathcal{D}}[\mathcal{K}(B;\mathcal{M})] = \mathbb{E}_{Q}[\|B\|_2^2]$, SR-PAC can achieve superior utility performance compared to Auto-PAC and Efficient-PAC (Appendix~\ref*{app:efficient_pac} in \cite{zhang2025breaking}) for the same mutual information privacy budget.

\begin{algorithm}[ht]
\caption{Monte Carlo SR-PAC}
\label{alg:SR_PAC}
\begin{algorithmic}[1]
\REQUIRE Privacy budget $\hat{\beta}$, decoder family $\Pi_{\phi}$, 
         perturbation rule family $\Gamma_{\lambda}$, utility loss $\mathcal{K}(\cdot)$, 
         learning rates $\eta_{\phi}, \eta_{\lambda}$, penalty weight $\sigma$, 
         iterations $T_{\lambda}, T_{\phi}$, batch size $m$
\STATE Initialize parameters $\lambda, \phi \sim \text{init}()$
\FOR{$t = 1, \ldots, T_{\lambda}$}
    \IF{$t \bmod T_{\phi} = 0$}
        \STATE \textbf{Update Decoder:}
        \FOR{$i = 1, \ldots, T_{\phi}$}
            \STATE Sample $\{(x_j, b_j, y_j)\}_{j=1}^m$ where $x_j \sim \mathcal{D}$, $b_j \sim Q_{\lambda}$, $y_j = \mathcal{M}(x_j) + b_j$
            \STATE $\widehat{W} = \frac{1}{m}\sum_{j=1}^m [-\log \pi_{\phi}(x_j | y_j)]$
            \STATE $\phi \leftarrow \phi - \eta_{\phi} \nabla_{\phi} \widehat{W}$
        \ENDFOR
    \ENDIF
    \STATE \textbf{Update Perturbation Rule:}
    \STATE Sample $\{(x_j, b_j, y_j)\}_{j=1}^m$ where $x_j \sim \mathcal{D}$, $b_j \sim Q_{\lambda}$, $y_j = \mathcal{M}(x_j) + b_j$
    \STATE $H_c = \frac{1}{m}\sum_{j=1}^m [-\log \pi_{\phi}(x_j | y_j)]$ 
    \STATE $\mathcal{L}_{\lambda} = \frac{1}{m}\sum_{j=1}^m \mathcal{K}(b_j) + \sigma (H_c - \hat{\beta})^2$
    \STATE $\lambda \leftarrow \lambda - \eta_{\lambda} \nabla_{\lambda} \mathcal{L}_{\lambda}$
\ENDFOR
\RETURN Optimal parameters $(\lambda^*, \phi^*)$
\end{algorithmic}
\end{algorithm}

Our SR-PAC algorithm recasts the optimal perturbation problem (\ref{eq:general_OPT_RPAC}) as a Stackelberg game between a \textit{Leader} (who chooses the \textit{perturbation rule} $Q$) and a \textit{Follower} (who chooses the \textit{decoder} attempting to infer $X$ from $Y$).
Let $\Gamma$ denote a rich family of noise distributions.
Let $\Pi = \{\pi: \pi(\cdot|y) \in \Delta(\mathcal{X}), y \in \mathcal{Y}\}$ denote a rich family of decoder distributions (e.g., all conditional density functions on $\mathcal{X}$ given $\mathcal{Y}$, or a parameterized neural network family).

\noindent\textbf{Follower's Problem. }
For a fixed perturbation rule $Q$, the Follower chooses decoder $\pi$ to minimize the expected log score
\[
W(Q,\pi)\equiv \mathbb{E}_{X\sim\mathcal{D},B\sim Q}\left[-\log\pi(X|\mathcal{M}(X) + B)\right].
\]
That is, the follower aims to find $\pi^{*}(Q)\in \arg\inf_{\pi\in\Pi}W(Q,\pi)$.

\noindent\textbf{Leader's Problem.}
Given a privacy budget $\hat{\beta}$, the Leader chooses $Q$ to solve
\[
\begin{aligned}
    \inf_{\,Q\,\in\Gamma\,}
  \mathbb{E}_{X\sim \mathcal{D}, B\sim Q}\bigl[\mathcal{K}(B;\mathcal{M})\bigr],
  \text{ s.t. }
  \inf_{\pi\in\Pi} W(Q,\pi) \geq \hat{\beta}.
\end{aligned}
\]
Therefore, a profile $(Q^*, \pi^*)$ is a \textit{Stackelberg equilibrium} if it satisfies
\begin{equation}\label{eq:stackelberg_equilibrium}
    \begin{cases}
 Q^*\in\arg\inf_{Q\in\Gamma} 
  \mathbb{E}[\mathcal{K}(B;\mathcal{M})], \text{ s.t. }
  W\bigl(Q,\,\pi^*(Q)\bigr)\geq\hat{\beta},\\
  \pi^*(Q) \in \arg\inf_{\pi\in\Pi} W(Q,\pi).
\end{cases}
\end{equation}

When we consider output perturbation and the utility loss $\mathcal{K}$ is chosen such that $Q\mapsto \mathbb{E}_{X\sim P_X, B\sim Q}\bigl[\mathcal{K}(B;\mathcal{M})\bigr]$ is convex in $Q$, the problem (\ref{eq:stackelberg_equilibrium}) is convex in both $Q$ and $\pi$.
Specifically, for each fixed perturbation rule $Q$, the map $\pi\mapsto W(Q,\pi)$ is a convex function of $\pi$. 
Similarly, for each fixed decoder $\pi$, the function $Q\mapsto W(Q,\pi)$ is convex in $Q$.
Because these two convexity properties hold simultaneously, $(Q,\pi)\mapsto W(Q,\pi)$ is jointly convex on $\Gamma\times \Pi$.
By the partial minimization theorem \cite[Section 3.2.5]{boyd2004convex}, taking the pointwise infimum over $\pi$ preserves convexity in $Q$. Thus, $Q \mapsto \inf_{\pi\in\Pi} W(Q,\pi)$ is a convex function of $Q$. 
Consequently, once the Follower replaces $\pi$ by its best response $\pi^{*}(Q)$, the Leader's feasible set $\{Q\in\Gamma: \inf_{\pi\in\Pi}W(Q,\pi)\geq \hat{\beta}\}$ is convex, and minimizing the convex utility loss function $Q\mapsto \mathbb{E}_{X\sim P_X, B\sim Q}\bigl[\mathcal{K}(B;\mathcal{M})\bigr]$ over this set remains a convex program in $Q$.
Meanwhile, the Follower's problem $\inf_{\pi\in\Pi}W(Q,\pi)$ is convex in $\pi$ for any fixed $Q$.
Thus, the Stackelberg game reduces to a single-level convex optimization in $Q$, with the inner decoder problem convex in $\pi$.

Proposition \ref{prop:Stackelberg_convergence} shows that the Stackelberg equilibrium perturbation rule solves (\ref{eq:general_OPT_RPAC}).

\begin{proposition}\label{prop:Stackelberg_convergence}
    Let $(Q^*, \pi^*)$ be a Stackelberg equilibrium satisfying (\ref{eq:stackelberg_equilibrium}) for any given $\hat{\beta}$.
    Then, $Q^*$ solves (\ref{eq:general_OPT_RPAC}) with privacy budget $\hat{\beta}$.
    In addition, in any Stackelberg equilibrium $(Q^*, \pi^*)$, $\pi^*=\pi^*(Q^*)$ is unique.
\end{proposition}

Algorithm~\ref{alg:SR_PAC} provides a Monte-Carlo-based approach to solve the Stackelberg equilibrium (\ref{eq:stackelberg_equilibrium}).
By Monte Carlo sampling, the algorithm trains the decoder by minimizing reconstruction loss on perturbed data, allowing it to adapt to the current noise distribution.
It then updates the perturbation rule by minimizing utility loss subject to the privacy constraint, implemented via a penalty term that drives the privacy cost toward the target budget.
For scalability, the online extended version \cite{zhang2025breaking} (Appendix~\ref*{app:sliced_rpac}) also presents two variants, \textit{Sliced R-PAC Privacy} and \textit{Sliced SR-PAC} algorithm, based on \textit{sliced mutual information} \cite{goldfeld2021sliced}.
The online extended version \cite{zhang2025breaking} also provides finite-sample and approximate-optimization error analyses for the Follower (Appendix~\ref*{app:finite_errors}).

\section{Properties of SR-PAC Privatization}

This section presents some important properties of SR-PAC.

\subsection{Anisotropic Noise Perturbation}

The Auto-PAC perturbs the mechanism using \textit{anisotropic} Gaussian noise as much as needed in each direction of the output. 
This direction-dependent noise addition yields better privacy-utility tradeoffs than isotropic perturbation.
SR-PAC also supports anisotropic perturbation under Assumption \ref{assp:anisotropic}.

\begin{assumption}\label{assp:anisotropic}
   For an arbitrary deterministic mechanism $\mathcal{M}$, we assume the following.
    \begin{itemize}
        \item[(i)] Every $Q\in \Gamma$ is log-concave.
        \item[(ii)] For any orthonormal direction $w\in \mathbb{R}^{d}$, $\langle \mathcal{M}(X), w\rangle$ is non-degenerate.
        \item[(iii)] The utility function $\mathcal{K}$ is radial (depends only on $\|B\|_{2}$) and strictly convex in the eigenvalues of covariance matrix $\Sigma_{Q}$ of $Q$. For example, $\kappa(B) = \|B\|^2_{2}$.
        \item[(iv)] There exist orthonormal $u, v\in\mathbb{R}^{d}$ such that the marginal entropy gain per unit variance along $u$ exceeds that along $v$. That is, for any $\sigma^2>0$, $\frac{\partial}{\partial \sigma^2_{u}} \mathcal{H}(X|Z_u)|_{\sigma^{2}} > \frac{\partial}{\partial\sigma^2_{v}} \mathcal{H}(X|Z_v)|_{\sigma^2},$
        where $Z_w = \mathcal{M}_{w}(X)+B_{w}$, with $A_{w}(X)=\langle A(X), w\rangle$ for $A\in\{\mathcal{M}, B\}$, $w\in\{u,v\}$.
    \end{itemize}
\end{assumption}

Assumption 1 ensures that SR-PAC's optimization is convex and admits a genuinely anisotropic solution: requiring each noise distribution in $\Gamma$ to be log-concave makes the feasible set convex and tractable; non-degeneracy of $< \mathcal{M}(X),w>$ for every unit vector $w$ guarantees that every direction affects information leakage; a strictly convex, radial utility $K$ yields a unique cost-to-noise mapping; and the existence of two orthonormal directions whose marginal entropy gain per unit variance differs implies that allocating noise unevenly strictly outperforms isotropic noise.

\begin{proposition}\label{prop:anisotropic}
    Under Assumption \ref{assp:anisotropic}, any Stackelberg equilibrium perturbation rule $Q^*$ is anisotropic. That is, its covariance matrix $\Sigma_{Q^*}$ satisfies 
    \[
    r_{\max}(\Sigma_{Q^*}) > r_{\min}(\Sigma_{Q^*}),
    \]
    where $r_{\max}(\Sigma_{Q^*})$ and $r_{\min}(\Sigma_{Q^*})$ are the maximum and the minimum eigenvalues of $\Sigma_{Q^*}$.
\end{proposition}

Proposition \ref{prop:anisotropic} demonstrates that SR-PAC allocates noise exclusively to privacy-sensitive directions, with high-leakage dimensions receiving proportionally more noise than low-leakage dimensions. This targeted approach achieves desired privacy levels with minimal total perturbation, preserving task-relevant information with reduced noise.

\subsection{Directional-Selectivity of SR-PAC}
\label{sec:dirsel-opt}

Let $Z$ be a $d$-dimensional real-valued \textit{output vector} produced by a
deterministic mechanism $\mathcal M(X)$. 
Throughout we assume
$\Sigma_Z\succ0$ and finite differential entropy $\mathcal{H}(Z)$.
For any application, let
$S_{\mathrm{task}}\subseteq\mathbb{R}^{d}$ denote a practitioner‑chosen
\textit{task‑critical sub‑space} (the directions whose preservation
matters most) and write
$\Pi_{\mathrm{task}}$ for the orthogonal projector onto it.

\noindent\textbf{Classification tasks. }
In what follows we illustrate the theory with multi‑class
classification, where $Z$ is the \textit{logit} vector,
$\hat y=\arg\max_i Z_i$, and $S_{\mathtt{lab}}
   \equiv\operatorname{span}\{e_\ell-e_j:\,j\neq\ell\}$, where $\mathtt{lab}$ means "label".
Let $\Pi_{\mathtt{lab}}$ be the projector onto $S_{\mathtt{lab}}$.
The analysis for a general $S_{\mathrm{task}}$ is identical after
replacing $\mathtt{lab}$ by $\mathrm{task}$.

For any privacy budget $0<\beta< \mathcal{H}(Z)$, consider $Q^*$ that solves
\[
\inf_{Q: \mathtt{MI}(Z;Z+B)=\beta}\mathbb{E}[\|B\|^2_2].
\]
For every unit vector $w$, let $g(w)\equiv \frac{1}{2}\mathtt{mmse}(\langle Z,w\rangle)$, where $\mathtt{mmse}(\langle Z,w\rangle)\equiv\mathbb{E}\Bigl[\bigl\langle Z,w\bigr\rangle
\;-\;\mathbb{E}\bigl[\langle Z,w\rangle |Y\bigr]
\Bigr]^{2}$ is the \textit{minimum mean‐squared error} of estimating the scalar random variable $\langle Z,w\rangle$ from the noisy observation $Y = Z + B$.

\begin{proposition}\label{prop:directional}
Suppose $\mathcal{H}(Z)$ is finite.
    Fix any $0<\beta<\mathcal{H}(Z)$. The following holds.
    \begin{itemize}
        \item[(i)] Let $\mathcal{N}(0, \Sigma_{\mathrm{PAC}})$ be the Gaussian noise distribution used by the Auto-PAC such that $\mathtt{LogDet}(Z, B_{\mathrm{PAC}}) = \beta$.
    If $Z$ is non-Gaussian, then $\mathbb{E}_{Q^{*}}[\|B\|^2_2]< \mathbb{E}[\|B_{\mathrm{PAC}}\|^2_2]$.
    
    %

    \item[(ii)] Suppose $\sup_{v\in S_{\mathtt{lab}},\|v\|=1} g(v)<\inf_{w\perp S_{\mathtt{lab}},\|w\|=1} g(w)$.
    Let $\beta_{\mathtt{lab}}\equiv \frac{1}{2}\int_{w\bot S_{\mathtt{lab}}}g(w) d\sigma^2_{w}$ denote the largest privacy budget that can be satisfied using noise supported entirely on $S^{\perp}_{\mathtt{lab}}$.
  Then, for every $\beta\leq \beta_{\mathtt{lab}}$, we have $\Pi_{\mathtt{lab}}B^{*}=0\ \text{ a.s.},\;
       \arg\max_{i}(Z_{i}+B^{*}_{i})=\hat y\ \text{ a.s.}$
    \end{itemize}
\end{proposition}

In Proposition \ref{prop:directional}, part (i) shows that SR-PAC always uses strictly less noise magnitude than any Auto-PAC (regardless of how anisotropic the Auto-PAC noise covariance may be) because Auto-PAC treats $Z$ as Gaussian and thus overestimates the required variance when $Z$ is non-Gaussian.
Part (ii) demonstrates that, under the natural ordering of directional sensitivities, SR-PAC allocates its noise budget exclusively in directions orthogonal to the label sub-space until a critical threshold $\beta_{\mathtt{lab}}$ is reached.
In practice, this means SR-PAC perturbs only "utility-harmless" dimensions first, preserving the predicted class and concentrating protection where it is most needed, thereby outperforming Auto-PAC in any scenario where certain directions leak more information than others.

\subsection{Sensitivity to $\beta$}

Sensitivity to the privacy parameter $\beta$ is crucial for predictable and accurate control of privacy-utility trade-off.
Let $\mathtt{Priv}_{\beta}$ and $\mathtt{Util}_{\beta}$, respectively, denote the sensitivities of privacy and utility (for certain measures).
If $\mathtt{Priv}_{\beta} = 1$, then any infinitesimal increase $\Delta\beta$ in the privacy budget raises the true mutual information $\mathtt{MI}(X;Y)$ by exactly $\Delta\beta$.
Thus, no part of the privacy budget is "wasted" or "over-consumed".
By contrast, if $\mathtt{Priv}_{\beta}< 1$, then increasing $\beta$ may force additional noise without achieving the full allowed leakage; and if $\mathtt{Priv}_{\beta}>1$, then increasing the budget by $\Delta\beta$ can increase the true leakage by more than $\Delta\beta$. 
In particular, if the mechanism is calibrated to be tight at $\beta$ (i.e., $\mathtt{MI}(X;Y)=\beta$), then it may become over-budget, i.e., $\mathtt{MI}(X;Y)>\beta + \Delta\beta$. 
%
%
Similarly, if $\mathtt{Util}{\beta}$ is high, then an infinitesimal increase $\Delta\beta$ in the privacy budget yields a large improvement in utility; if $\mathtt{Util}_{\beta}$ is low, the same increase yields a small improvement, indicating inefficient conversion of the privacy budget into utility gains.

Let $V_{\mathrm{SR}}(\beta) \equiv \min_{Q: \mathtt{MI}(X;\mathcal{M}(X)+B) \leq \beta} \mathbb{E}_Q\big[\|B\|_2^2\big]$
be the optimal noise-power curve attained by SR-PAC, and let $\mathtt{MI}_{\mathrm{SR}}(\beta)$ as the corresponding true mutual information attained by SR-PAC.
Let $V_{\mathrm{PAC}}(\beta)\equiv \mathrm{tr}(\Sigma_{B_{\mathrm{PAC}}}(\beta))$, where $Q(\beta)=\mathcal{N}(0,\Sigma_{B_{\mathrm{PAC}}}(\beta))$ solves $\mathtt{LogDet}(\mathcal{M}(X), B_{\mathrm{PAC}}) =\beta$.
In addition, let
$\displaystyle \mathtt{MI}_{\mathrm{PAC}}(\beta) \equiv \beta - \mathtt{Gap}_{\mathtt{d}}(Q(\beta)),$
where $\mathtt{Gap}_{\mathtt{d}}(Q)=\mathtt{D}_{\mathrm{KL}}(P_{\mathcal{M},B}\|\widetilde{Q}_{\mathcal M})$ with $B\sim Q$, and $\widetilde{Q}_{\mathcal M}$ given by (\ref{eq:tilde_Q}).
Define $\mathtt{Priv}^{\mathrm{SR}}_{\beta}\equiv\frac{d}{d\beta}\mathtt{MI}_{\mathrm{SR}}(\beta)$, $\mathtt{Priv}^{\mathrm{PAC}}_{\beta} \equiv \frac{d}{d \beta} \mathtt{MI}_{\mathrm{PAC}}(\beta)$, $\mathtt{Util}^{\mathrm{SR}}_{\beta}\equiv \frac{d}{d\beta}(-V_{\mathrm{SR}}(\beta))$, and $\mathtt{Util}^{\mathrm{PAC}}_{\beta}\equiv \frac{d}{d\beta}(-V_{\mathrm{PAC}}(\beta))$.

\begin{theorem}\label{thm:speed_general}
    For any data distribution $\mathcal{D}$, let $\mathcal{M}$ be an arbitrary deterministic mechanisms such that $\mathcal{M}(X)$ is non-Gaussian with $\Sigma_M \succ 0$.
    The following holds.
    \begin{itemize}
        \item[(i)] $\mathtt{Priv}^{\mathrm{PAC}}_{\beta} \leq\mathtt{Priv}^{\mathrm{SR}}_{\beta} = 1$, with strict inequality for non-Gaussian $\mathcal{M}(X)$.

        \item[(ii)] $\mathtt{Util}^{\mathrm{SR}}_{\beta}\geq \mathtt{Util}^{\mathrm{PAC}}_{\beta}$, with equality only for Gaussian $\mathcal{M}(X)$.
    \end{itemize}
\end{theorem}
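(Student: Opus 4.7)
The plan is to prove the two parts separately, with both relying on Proposition~\ref{prop:Stackelberg_convergence} (SR-PAC binds the MI budget at equality) and Proposition~\ref{prop:MI_gap} (the non-Gaussianity gap equals $\mathtt{D}_{\mathrm{KL}}(P_{\mathcal{M},B}\|\widetilde{Q}_{\mathcal{M}})$).

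For part (i), I would start by invoking Proposition~\ref{prop:Stackelberg_convergence} to conclude that the SR-PAC optimum satisfies $\mathtt{MI}_{\mathrm{SR}}(\beta)=\beta$, so that differentiating immediately yields $\mathtt{Priv}^{\mathrm{SR}}_\beta=1$. For Auto-PAC, Proposition~\ref{prop:MI_gap} gives $\mathtt{MI}_{\mathrm{PAC}}(\beta)=\beta-g(\beta)$ with $g(\beta):=\mathtt{Gap}_{\mathtt{d}}(Q(\beta))$, hence $\mathtt{Priv}^{\mathrm{PAC}}_\beta=1-g'(\beta)$. The essential work is showing $g'(\beta)\ge 0$, with strict inequality for non-Gaussian $\mathcal{M}(X)$. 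I would combine two ingredients: Auto-PAC's binding LogDet constraint forces $\Sigma_B(\beta)$ to decrease monotonically in the Loewner order as $\beta$ grows; and the negentropy $g(\beta)=\mathtt{D}_{\mathrm{KL}}(P_Z\|\widetilde{Q}_{\mathcal{M}})$ is non-increasing under Gaussian convolution. The latter is a direction-wise consequence of de Bruijn's identity after simultaneous diagonalization of $\Sigma_M$ and $\Sigma_B$, strict whenever $\mathcal{M}(X)$ is non-Gaussian because the Fisher information of $Z$ then departs strictly from the Gaussian reference at every finite noise level. Hence shrinking $\Sigma_B$ (i.e., growing $\beta$) strictly enlarges $g$.

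For part (ii), I would apply the envelope theorem to both convex programs: the Lagrangian $\mathbb{E}_Q[\|B\|_2^2]+\lambda(\mathtt{MI}-\beta)$ gives $\mathtt{Util}^{\mathrm{SR}}_\beta=\lambda^\star_{\mathrm{SR}}(\beta)$, the Pareto shadow price of noise power against MI, while the Lagrangian $\mathrm{tr}(\Sigma_B)+\mu(\mathtt{LogDet}-\beta)$ gives $\mathtt{Util}^{\mathrm{PAC}}_\beta=\mu^\star_{\mathrm{PAC}}(\beta)$. The comparison then proceeds by interpreting $\mu^\star_{\mathrm{PAC}}$ as ``noise saving per unit LogDet,'' which translates through part~(i) into ``noise saving per unit of actual MI'' at effective rate $\mu^\star_{\mathrm{PAC}}/(1-g'(\beta))$. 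Because any Auto-PAC noise covariance is feasible for SR-PAC's MI program at budget $\beta-g(\beta)$ (by Proposition~\ref{prop:MI_gap}), Pareto optimality of SR-PAC bounds this effective rate by $\lambda^\star_{\mathrm{SR}}(\beta-g(\beta))$, giving $\mu^\star_{\mathrm{PAC}}(\beta)\le\lambda^\star_{\mathrm{SR}}(\beta-g(\beta))\,(1-g'(\beta))$. Convexity of $V_{\mathrm{SR}}$ (so $\lambda^\star_{\mathrm{SR}}$ is non-increasing in its argument) together with $g'(\beta)\ge 0$ from part~(i) then closes the chain to $\mu^\star_{\mathrm{PAC}}(\beta)\le\lambda^\star_{\mathrm{SR}}(\beta)$, with equality precisely when $g'(\beta)=0$, i.e., when $\mathcal{M}(X)$ is Gaussian.

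The main obstacle will be rigorously closing the ``base-point'' gap in part~(ii): the feasibility chain naturally produces $\lambda^\star_{\mathrm{SR}}$ evaluated at $\beta-g(\beta)$, but the theorem compares it to $\lambda^\star_{\mathrm{SR}}(\beta)$. Although the informal rate-conversion chain above points in the right direction, a clean derivation likely requires writing a joint KKT system for both programs and appealing directly to convex duality between the LogDet and MI constraints, rather than iterating two sequential feasibility arguments. In part~(i), the secondary subtlety is that Auto-PAC's noise rule is not literally a one-parameter Gaussian-smoothing family, so a parameter-dependent orthogonal change of basis is needed before de Bruijn's identity can be applied direction-wise; strict positivity of $g'$ must then be aggregated across directions while preserving the non-Gaussianity of $\mathcal{M}(X)$ in at least one coordinate.
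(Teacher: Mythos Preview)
Your Part~(i) is essentially the paper's argument: both invoke the binding SR-PAC constraint to get $\mathtt{MI}_{\mathrm{SR}}(\beta)=\beta$ identically, write $\mathtt{MI}_{\mathrm{PAC}}(\beta)=\beta-g(\beta)$ via Proposition~\ref{prop:MI_gap}, and establish $g'(\beta)>0$ through de~Bruijn's identity and the strict positivity of the relative Fisher information for non-Gaussian $\mathcal{M}(X)$. The paper's appendix does this in one line by citing de~Bruijn directly; your Loewner-monotonicity and direction-wise diagonalization are reasonable elaborations of that step.

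Your Part~(ii) departs from the paper and carries a genuine gap. The paper does \emph{not} compare Lagrange multipliers across the two programs. Instead it first establishes the \emph{height gap}
\[
\Delta(\beta)\;\equiv\;V_{\mathrm{PAC}}(\beta)-V_{\mathrm{SR}}(\beta)\;>\;0,\qquad \lim_{\beta\to 0^{+}}\Delta(\beta)=0,
\]
using exactly the feasibility observation you cite (Auto-PAC noise at nominal budget $\beta$ is SR-PAC-feasible at actual budget $\tilde\beta=\beta-g(\beta)<\beta$), and then invokes a separate lemma (``height gap $\Rightarrow$ slope gap'' under convexity of $V_{\mathrm{SR}}$) to conclude $V'_{\mathrm{PAC}}(\beta)>V'_{\mathrm{SR}}(\beta)$ directly. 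This sidesteps the base-point mismatch you flag.

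Your multiplier chain breaks in two places. First, the inequality $\mu^\star_{\mathrm{PAC}}(\beta)\le\lambda^\star_{\mathrm{SR}}(\tilde\beta)\,(1-g'(\beta))$ does not follow from ``Pareto optimality'': feasibility gives only $V_{\mathrm{PAC}}(\beta)\ge V_{\mathrm{SR}}(\tilde\beta(\beta))$, and an inequality between functions does not transfer to an inequality between their derivatives without further structure. Second, even granting that step, the closing inequality $\lambda^\star_{\mathrm{SR}}(\tilde\beta)\,(1-g'(\beta))\le\lambda^\star_{\mathrm{SR}}(\beta)$ is unsupported: $\lambda^\star_{\mathrm{SR}}$ non-increasing gives $\lambda^\star_{\mathrm{SR}}(\tilde\beta)\ge\lambda^\star_{\mathrm{SR}}(\beta)$, while $1-g'(\beta)\le 1$ pulls the other way, so the product can land on either side of $\lambda^\star_{\mathrm{SR}}(\beta)$. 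Your suggested fix (a joint KKT system coupling the $\mathtt{LogDet}$ and $\mathtt{MI}$ constraints) does not obviously resolve this, because the two constraints are active at different noise laws and different effective budgets. The paper's height-gap route is precisely what converts the one clean feasibility inequality you do have into the derivative comparison, without needing to align multipliers at mismatched arguments.
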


Theorem \ref{thm:speed_general} proves that SR-PAC with arbitrary noise distributions achieves:
(i) \textit{Exact leakage-budget alignment} ($\mathtt{Priv}^{\mathrm{SR}}_{\beta} = 1$),
(ii) \textit{Stricter utility decay} for Auto-PAC ($\mathtt{Util}^{\mathrm{SR}}_{\beta}\geq \mathtt{Util}^{\mathrm{PAC}}_{\beta}$).
This holds for all non-Gaussian $\mathcal{M}(X)$ when increasing privacy strength (i.e., $\beta$ decreasing).
A robustness analysis under finite-sample calibration and optimization effects is given in Appendix~\ref*{app:implemented_curves} of \cite{zhang2025breaking}.



\begin{figure*}[!t]
    \centering
    \subfloat[\label{fig:c10_acc} CIFAR-10: Accuracy vs.\ $\beta$]{%
        \includegraphics[width=0.24\textwidth]{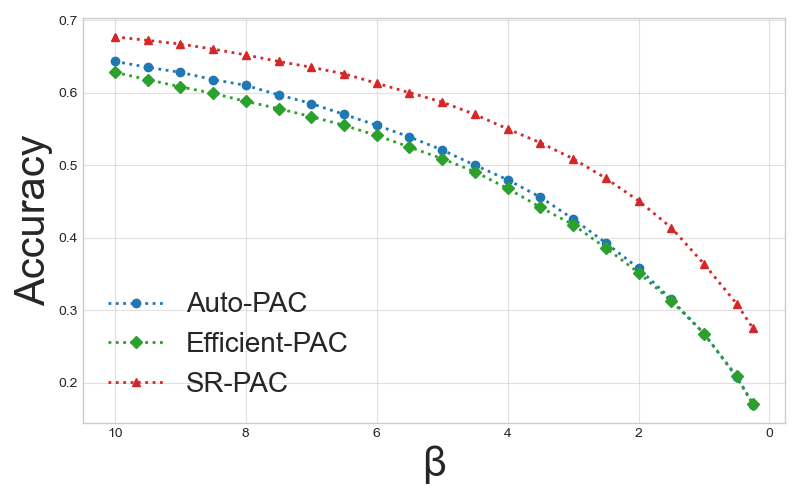}}%
    \hfill
    \subfloat[\label{fig:c100_acc} CIFAR-100: Accuracy vs.\ $\beta$]{%
        \includegraphics[width=0.24\textwidth]{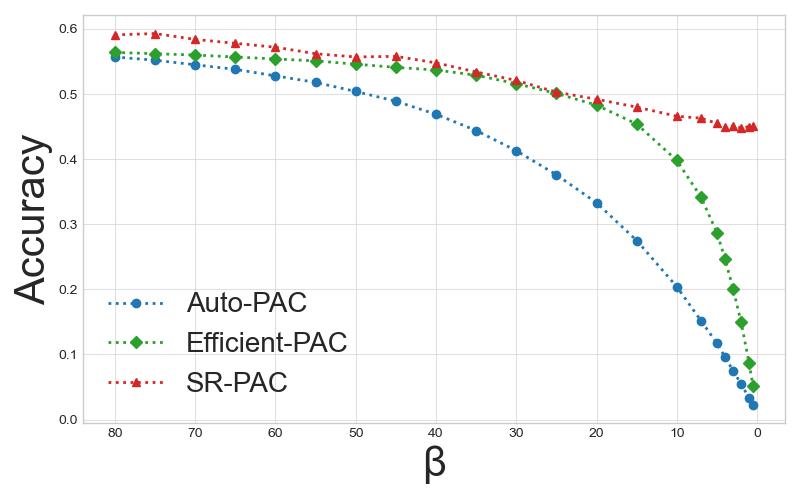}}%
    \hfill
    \subfloat[\label{fig:mnist_acc} MNIST: Accuracy vs.\ $\beta$]{%
        \includegraphics[width=0.24\textwidth]{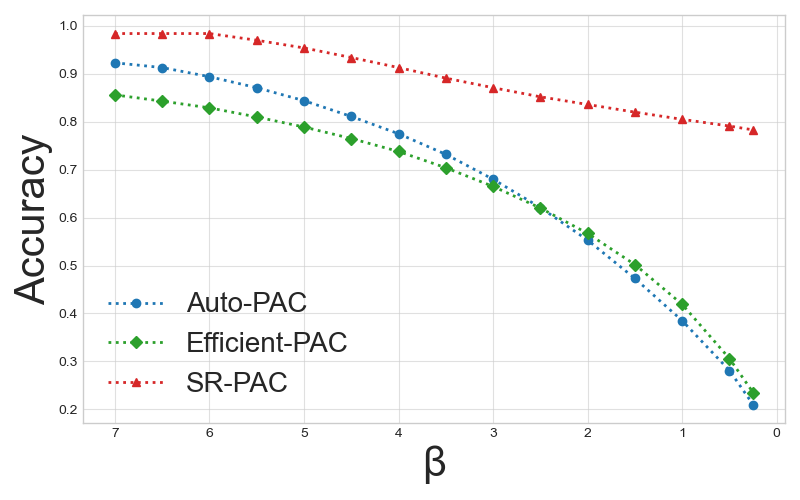}}%
    \hfill
    \subfloat[\label{fig:ag_acc} AG-News: Accuracy vs.\ $\beta$]{%
        \includegraphics[width=0.24\textwidth]{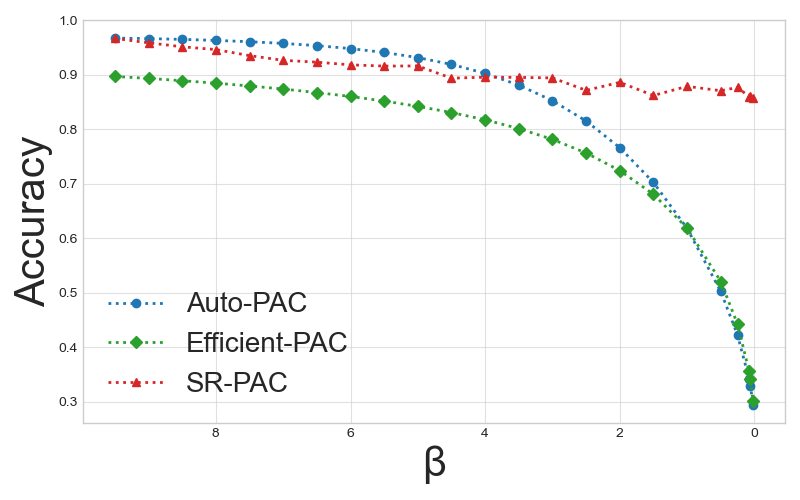}}%


    \subfloat[\label{fig:c10_noise} CIFAR-10: Noise Magnitude vs.\ $\beta$]{%
        \includegraphics[width=0.24\textwidth]{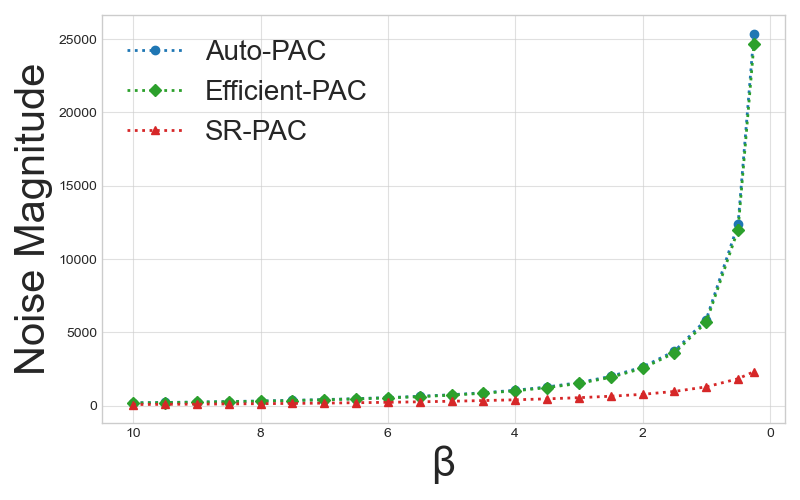}}%
    \hfill
    \subfloat[\label{fig:c100_noise} CIFAR-100: Noise vs.\ $\beta$]{%
        \includegraphics[width=0.24\textwidth]{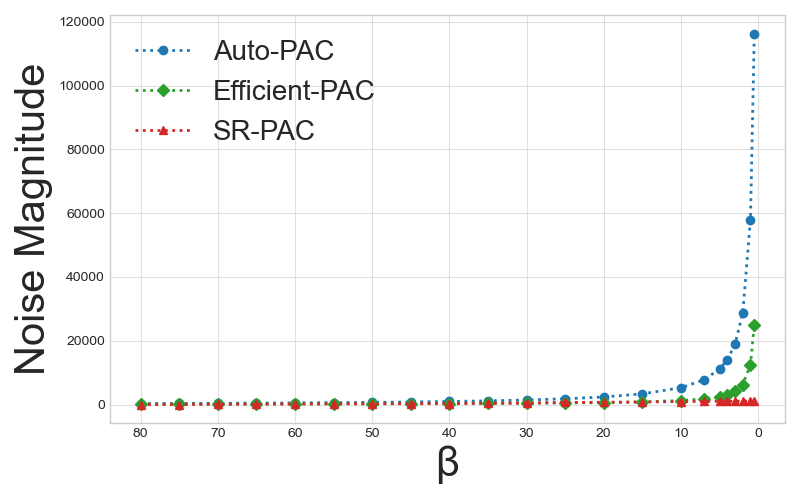}}%
    \hfill
    \subfloat[\label{fig:mnist_noise} MNIST: Noise vs.\ $\beta$]{%
        \includegraphics[width=0.24\textwidth]{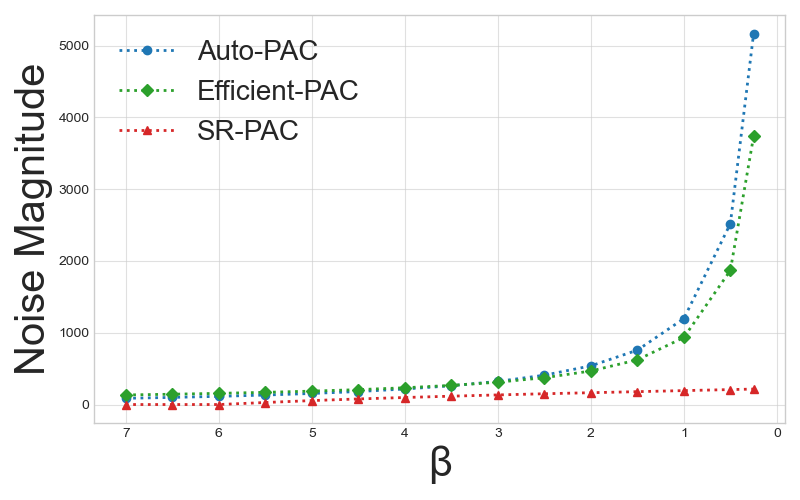}}%
    \hfill
    \subfloat[\label{fig:ag_noise} AG-News: Noise vs.\ $\beta$]{%
        \includegraphics[width=0.24\textwidth]{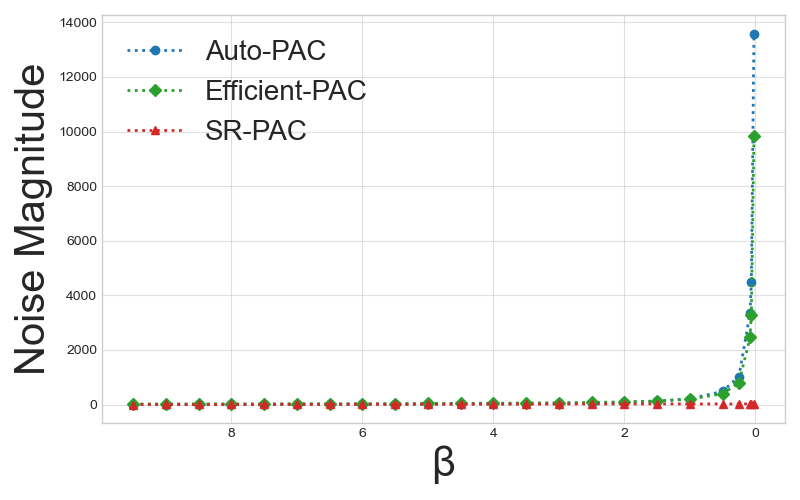}}%


    \subfloat[\label{fig:c10_match} CIFAR-10: Budgets vs. \ $\beta$]{%
        \includegraphics[width=0.24\textwidth]{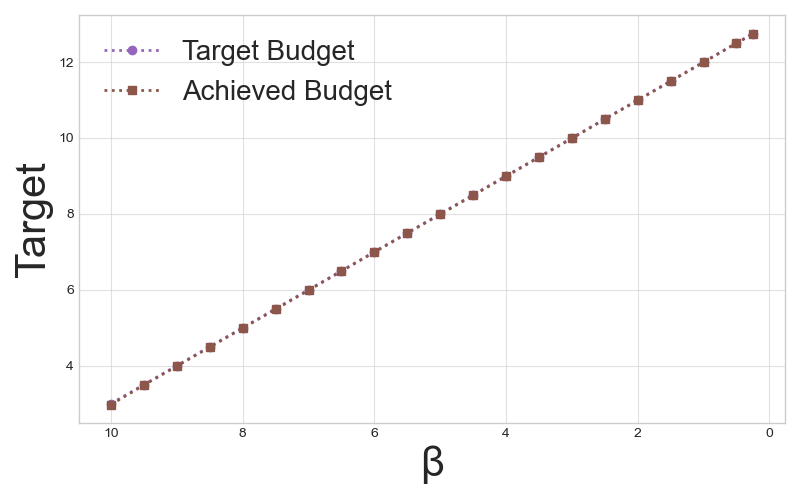}}%
    \hfill
    \subfloat[\label{fig:c100_match} CIFAR-100: Budgets vs. \ $\beta$]{%
        \includegraphics[width=0.24\textwidth]{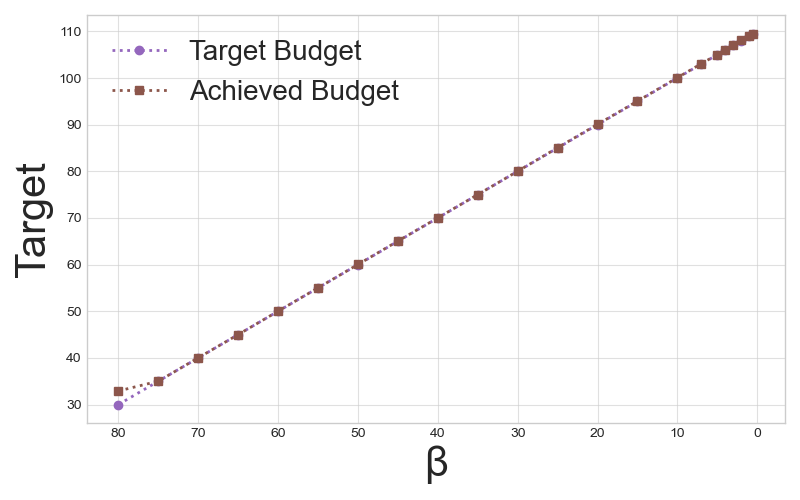}}%
    \hfill
    \subfloat[\label{fig:mnist_match} MNIST: Budgets vs.\ $\beta$]{%
        \includegraphics[width=0.24\textwidth]{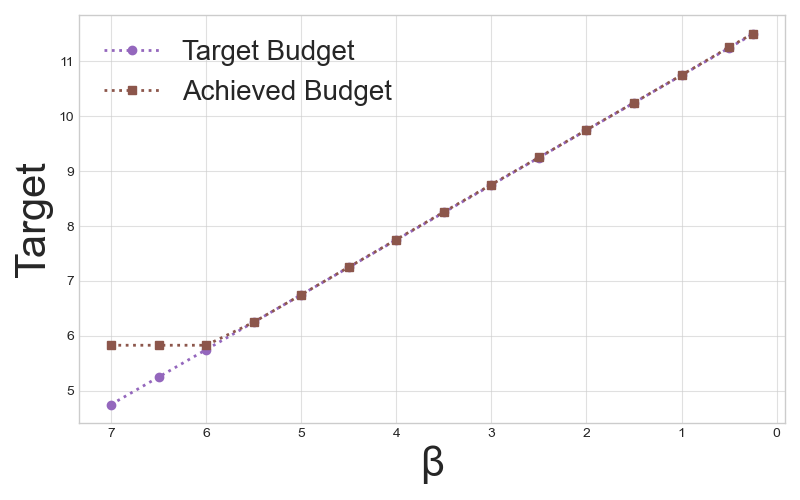}}%
    \hfill
    \subfloat[\label{fig:ag_match} AG-News: Budgets vs.\ $\beta$]{%
        \includegraphics[width=0.24\textwidth]{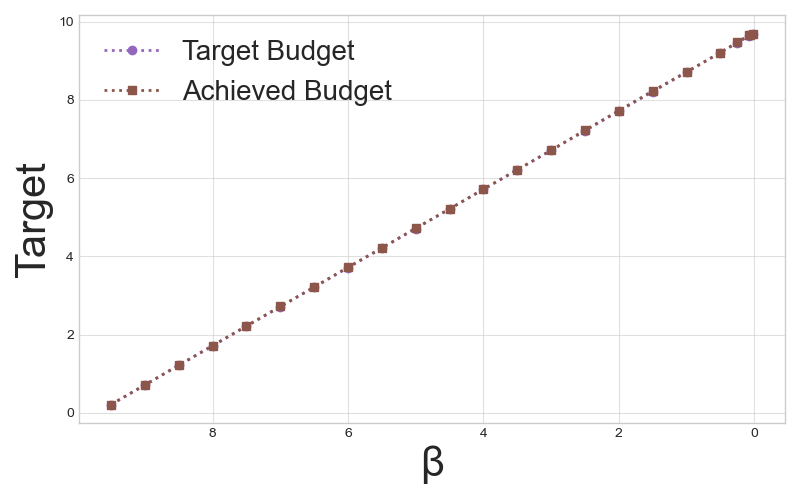}}%

    \caption{Empirical comparisons of SR-PAC, Auto-PAC (Algorithm \ref{alg:PAC_original}), and Efficient-PAC (Algorithm~\ref*{alg:PAC_alg_original} in \cite{zhang2025breaking}) on CIFAR-10, CIFAR-100, MNIST, and AG-News as $\beta$ varies. Each column corresponds to one dataset; within each column, the three panels report (top) classification accuracy of the perturbed model versus the target budget $\beta$, (middle) the average noise magnitude $\mathbb{E}[\lVert B\rVert_{2}^{2}]$ used by each method, and (bottom) the 
    "target versus achieved" privacy budget (conditional entropy) for our SR-PAC. }
    \label{fig:all_metrics}
\end{figure*}

\subsection{Composition}

Graceful composition properties in privacy definitions like DP make privacy loss quantifiable under multiple operations on datasets.
This enables modular system design: each component can be tuned to a local privacy–utility trade-off, while composition rules provide an explicit bound on the overall (global) privacy risk.
Consider $k$ mechanisms $\mathcal{M}_{1}, \mathcal{M}_{2},\dots \mathcal{M}_{k}$, where each $\mathcal{M}_{i}(\cdot, \theta_{i}):\mathcal{X}\mapsto \mathcal{Y}_{i}$ with $\theta_{i}\in\Theta_{i}$ as the random seed.
Let $\vec{\mathcal{Y}}=\prod^{k}_{i=1}\mathcal{Y}_{i}$ and let $\vec{\Theta} = \prod^{k}_{i=1}\Theta_{i}$.
The composition $\overrightarrow{\mathcal{M}}(\cdot, \vec{\theta}):\mathcal{X}\mapsto \vec{\mathcal{Y}}$ is defined as $\overrightarrow{\mathcal{M}}(X, \vec{\theta}) = \left(\mathcal{M}_{1}(X, \theta_{1}), \dots, \mathcal{M}_{k}(X, \theta_{k})\right).$
PAC Privacy composes gracefully \cite{xiao2023pac}. 
In particular, for independent mechanisms applied to the same dataset, mutual information bounds compose additively: if each $\mathcal{M}_{i}$ is PAC Private with bound $\beta_{i}$, then $\overrightarrow{\mathcal{M}}$ has bound $\sum^{k}_{i=1}\beta_{i}$.

R-PAC Privacy also enjoys additive composition with respect to conditional entropy bounds. Suppose each mechanism $\mathcal{M}_{i}$ is R-PAC private with conditional entropy lower bound $\hat{\beta}_{i}$. By (\ref{eq:MI_def_CE}), this implies that $\mathcal{M}_{i}$ is PAC private with privacy budget $\beta_{i} = \mathcal{H}(X) - \hat{\beta}_{i}$. 
Then, by Theorem 7 of \cite{xiao2023pac}, the composition $\overrightarrow{\mathcal{M}}(X, \vec{\theta})$ is PAC private with total mutual information upper bounded by $\sum_{i=1}^{k} (\mathcal{H}(X) - \hat{\beta}_{i})$. Equivalently, the composition $\overrightarrow{\mathcal{M}}(X, \vec{\theta})$ is R-PAC private with overall conditional entropy lower bounded by $\sum_{i=1}^{k} \hat{\beta}_{i} - (k - 1)\mathcal{H}(X)$.
%

However, this additive composition property for mutual information yields conservative aggregated privacy bounds \cite{xiao2023pac}, and utility degradation compounds when each mechanism $\mathcal{M}_i$ uses conservative privacy budgets $\beta_i$. 
To address this limitation, we employ an optimization-based approach within the SR-PAC framework to compute tighter conditional entropy bounds. 
Consider $k$ mechanisms $\mathcal{M}_{1}, \mathcal{M}_{2},\dots, \mathcal{M}_{k}$, where each $\mathcal{M}_{i}$ is privatized by the perturbation rule $Q_i$ to satisfy R-PAC privacy with bounds $\hat{\beta}_{i}$. The Leader designs these perturbation rules $Q_{1},..., Q_{k}$, while the Follower finds the optimal decoder for the joint composition $\overrightarrow{\mathcal{M}}(X, \vec{\theta}) = (\mathcal{M}_1(X), \dots, \mathcal{M}_k(X))$: 
$\displaystyle \inf_{\pi\in\Pi} W(\pi;\overrightarrow{\mathcal{M}}) \equiv \mathbb{E}_{X\sim\mathcal{D}}\left[-\log\pi(X|\overrightarrow{\mathcal{M}}(X),\vec{\theta})\right].$
This game-theoretic formulation allows for tighter privacy-utility trade-offs in composed systems by optimizing the joint privatization strategy.
This joint SR-PAC formulation also extends to adaptive composition, where each $Q_i$ (and the corresponding decoder update) may be chosen sequentially based on previously released privatized outputs, in the same spirit as the adaptive composition procedure of PAC Privacy \cite{xiao2023pac}.

\section{Experiments}

\begin{figure}[t] %
  \centering
  \begin{subfigure}{0.495\columnwidth}
    \includegraphics[width=\linewidth]{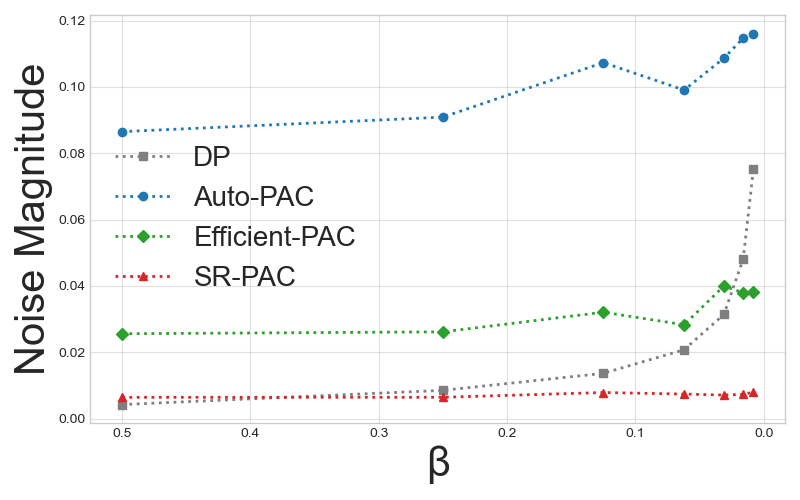}
    \caption{Iris}
    \label{fig:Iris}
  \end{subfigure}
  \hfill
  \begin{subfigure}{0.495\columnwidth}
    \includegraphics[width=\linewidth]{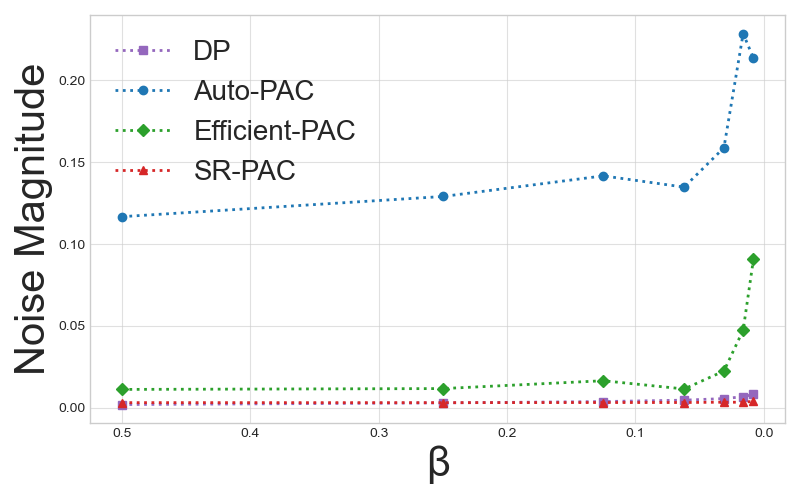}
    \caption{Rice}
    \label{fig:Rice}
  \end{subfigure}
  \caption{Empirical comparisons of DP, Auto-PAC, Efficient-PAC, and SR-PAC on mean estimations, using Iris and Rice datasets, in terms of average noise magnitude $\mathbb{E}[\|B\|^2_2]$. All the numerical values are shown in Tables \ref{tab:Iris_value} and \ref{tab:Rice_value}.}
  \label{fig:DP_comparisons}
\end{figure}

We conduct two sets of experiments to evaluate our approach. First, we compare SR-PAC against Auto-PAC and Efficient-PAC (Appendix~\ref*{app:efficient_pac} in \cite{zhang2025breaking}) using CIFAR-10 \cite{Krizhevsky09learningmultiple}, CIFAR-100 \cite{Krizhevsky09learningmultiple}, MNIST \cite{lecun2010mnist}, and AG-News \cite{zhang2015character} datasets, with results presented in Section \ref{sec:experiment_PACs}. 
We use (R-)PAC to refer to the family of SR-PAC, Auto-PAC, and Efficient-PAC.
Second, we extend this comparison to include DP by equalizing optimal posterior success rates of membership inference (Appendix \ref{app_sub:fair_MIA}) across all methods, making their privacy budgets comparable. For this comparison, we use Iris \cite{fisher1988iris} and Rice \cite{cinar2019classification} datasets, with results shown in Section \ref{S:expdp}. All experiments focus on output perturbation.
Appendix~\ref*{app:more_on_experiments} in \cite{zhang2025breaking} provides more details.

\noindent\textbf{CIFAR-10 and Base Classifier.} 
CIFAR-10 and base classifier. We evaluate on CIFAR-10 ($32\times 32$ RGB, 10 classes) with standard per-channel normalization (mean $0.5$, std $0.5$). 
The unperturbed classifier is a small CNN with two Conv–ReLU–MaxPool blocks ($2\times 2$ pooling; 32 and 64 channels), followed by a 128-unit fully connected ReLU layer and a 10-logit output layer. 
It is trained with cross-entropy loss, and predicts the argmax logit at inference. 
The unperturbed classifier achieves $0.7181\pm 0.0088$ accuracy.

\noindent\textbf{CIFAR-100 and Base Classifier.} We evaluate on CIFAR-100 ($32\times 32$ RGB, 100 classes) with standard per-channel normalization (mean 0.5, std 0.5). The unperturbed classifier is a deeper CNN with three convolutional blocks (two $3\times 3$ Conv–BatchNorm–ReLU layers per block, followed by $2\times 2$ max-pooling), with channel widths 64/128/256. A three-layer MLP head (4096→512→256→100) with ReLU and dropout (0.5) outputs 100 logits, and prediction is by argmax. The unperturbed classifier achieves $0.5914\pm 0.0090$ accuracy.

\noindent\textbf{MNIST dataset and Base Classifier.}
We evaluate on MNIST (60,000 train / 10,000 test) consisting of $28\times 28$ grayscale digit images.
Each image is loaded as a $1\times 28\times 28$ tensor and normalized per channel (mean 0.1307, std 0.3081).
The unperturbed classifier is a CNN with two Conv2d$\to$BatchNorm$\to$ReLU$\to$MaxPool ($2\times 2$) blocks (channels $1\to32\to64$), yielding a $64\times 7\times 7$ feature map, followed by a two-layer fully connected head (128 units with ReLU+Dropout, then 10 logits).
At inference it outputs a 10-dimensional logit vector and predicts by argmax.
The unperturbed classifier achieves $0.9837$ accuracy.


\noindent\textbf{AG-News dataset and Base Classifier.}
We evaluate on AG-News with 120,000 training and 7,600 test articles evenly split across four classes (World, Sports, Business, Sci/Tech), i.e., 30,000 training and 1,900 test examples per class. Each example's title and description are concatenated, lowercased, and tokenized by whitespace (truncated/padded to 64 tokens). We build a 30,000-word vocabulary from the training split, map tokens to indices (out-of-vocabulary as 0), and feed the indices into an \texttt{nn.EmbeddingBag} layer (embedding size 300, mean-pooling) to obtain a 300-dimensional document vector. A two-layer MLP head ($300\to256$ with ReLU and 0.3 dropout, then $256\to4$) produces a 4-dimensional logit vector, and prediction is by argmax. 
The unperturbed mechanism achieves $0.9705$ accuracy.

Recall that $\beta$ upper-bounds $\mathtt{MI}\bigl(X;\mathcal{M}(X)+B\bigr)$, and SR-PAC enforces the equivalent constraint
$\mathcal{H}\bigl(X \mid \mathcal{M}(X)+B\bigr)\ge \widehat{\beta}=\mathcal{H}(X)-\beta$.
Although $\mathcal{H}(X)$ is unknown, we estimate for the purpose of evaluation to verify the tightness of the privacy bounds.
Let $\mathtt{MI}_{0}=\mathtt{MI}\bigl(X;\mathcal{M}(X)\bigr)$. By data processing inequality,
$\mathtt{MI} \bigl(X;\mathcal{M}(X)+B\bigr)\le \mathtt{MI}_{0}$ for any independent $B$, so the feasible budgets are $0<\beta\le \mathtt{MI}_{0}$ and this interval is common to Auto-PAC, Efficient-PAC, and SR-PAC.
At $\beta=\mathtt{MI}_{0}$, the optimal choice is $B=0$, and all methods coincide at the noiseless accuracy. This shared endpoint and feasible domain ensure that comparisons at a common target $\beta$ are well-defined even without the exact $\mathcal{H}(X)$.
Moreover, reparameterizing by achieved mutual information preserves the endpoint and domain, and, together with the small budget errors observed in panels (i–l), does not affect our empirical ordering.
Under additive $\ell_{2}$ output noise, the ordering by total noise magnitude $\mathbb{E} \left[\|B\|_2^2\right]$ coincides with the ordering by accuracy, consistent with the $\ell_{2}$-based behavior reported in prior work.
Hence, the accuracy– and noise–vs.–$\beta$ panels convey the same conclusion in our experiments.

\subsection{(R-)PAC Comparison}\label{sec:experiment_PACs}

For each dataset and its pretrained base classifier $\mathcal{M}$, we plot (i) the test accuracy of the perturbed model as a function of $\beta$, (ii) the average noise magnitude $\mathbb{E}[\lVert B\rVert_{2}^{2}]$ required to achieve each $\beta$, and (iii) SR-PAC's ability to hit the target $\mathcal{H}(X) - \beta$.

\textbf{Accuracy vs. $\beta$ (a–d of Figure \ref{fig:all_metrics}).}
As $\beta$ decreases (moving right), privacy increases and all methods lose test accuracy.
For large $\beta$ (near the no-privacy case), all three algorithms attain accuracies close to the noiseless model.
As $\beta$ tightens, the SR-PAC curve remains strictly above the Auto-PAC and Efficient-PAC curves across datasets.
On \textbf{CIFAR-10} and \textbf{CIFAR-100}, Auto-PAC and Efficient-PAC are visibly separated from each other (not merely from SR-PAC), reflecting their different Gaussian calibrations.
On \textbf{MNIST} and \textbf{AG-News}, the three methods cluster near the top for larger $\beta$, but SR-PAC retains a measurable accuracy edge at matched $\beta$.

\textbf{Noise magnitude vs. $\beta$ (e–h of Figure \ref{fig:all_metrics}).}
As $\beta$ decreases, each algorithm must add more noise, so all three curves rise.
Across all datasets, SR-PAC uses the smallest $\mathbb{E} \bigl[\|B\|^{2}_{2}\bigr]$ at each $\beta$.
Auto-PAC and Efficient-PAC both overshoot—they inject more noise than SR-PAC at matched $\beta$—and on CIFAR-100, MNIST and AG-News, they diverge from each other as well.

The empirical ordering in both accuracy and noise magnitude matches Theorem~\ref{thm:speed_general}.
Moreover, Figure~\ref{fig:all_metrics} (c–d, g–h) exhibits the behavior predicted by Proposition~\ref{prop:directional} on \textbf{MNIST} and \textbf{AG-News}:
for $\beta \le \beta_{\mathtt{lab}}$, SR-PAC allocates noise predominantly in directions (approximately) orthogonal to the label subspace, preserving the predicted class over a wide budget range.
Concurrently, its total noise remains substantially smaller than Auto-PAC and Efficient-PAC, whose conservative Gaussian calibrations overestimate the required variance on heavy-tailed (non-Gaussian) logits.

\textbf{Budgets vs. $\beta$ (i–l of Figure \ref{fig:all_metrics}).}
These panels plot SR-PAC's target privacy budgets in terms of mutual-information bounds $\beta$ (horizontal) against the achieved empirical conditional-entropy budget (vertical).
In every dataset, the red points lie tightly along the $y=x$ line, confirming that SR-PAC solves its follower problem with high accuracy and enforces the desired privacy level with negligible budget error.
This provides a reliable, data-driven guarantee that the privacy constraint is satisfied.

\subsection{Comparison with Differential Privacy}
\label{S:expdp}

We calibrate DP and (R-)PAC to the same (optimal) posterior success rate for membership inference attacks, then compare their utility in terms of noise magnitudes (i.e., $\ell_2$-norm of the difference between original and perturbed outputs).
The base mechanism is a mean estimator.
Appendix~\ref{app_sub:fair_MIA} provides the conversions between (optimal) posterior success rates, DP parameters (DP-to-posterior mapping), and mutual information budgets (MI-to-posterior mapping).
Concretely, for DP we select $(\epsilon,\bar{\delta})$ that yields the target posterior bound via the DP-to-posterior mapping, and for (R-)PAC we choose the budget $\beta$ that yields the same posterior via the MI-to-posterior mapping; with subsampling rate $r=0.5$ we have prior $p=0.5$.
In each trial, we construct a membership vector $m\in\{0,1\}^P$ by i.i.d.\ Bernoulli$(0.5)$ draws, so the member count $S=\sum_i m_i$ is random.
We follow similar treatments for DP as Section 6.3 of \cite{sridhar2024pac}:
the DP baseline clips each row in $\ell_2$ to radius $C$, adds Gaussian noise to the clipped sum, and divides by $S$ to produce the privatized mean; (R-)PAC injects noise calibrated to $\beta$.
We report $\mathbb{E}\left[\|B\|_2^2\right]$ at matched posterior success rates. In our output-perturbed mean setting, this quantity equals the expected squared $\ell_2$ error of the released statistic.
Hence the ordering by $\mathbb{E}\left[\|B\|_2^2\right]$ is identical to the ordering by $\ell_2$ accuracy. 
Appendix~\ref{app_sub:fair_MIA} gives detailed DP vs. (R-)PAC discussion.

\textbf{Figure~\ref{fig:DP_comparisons}.}
On the Iris and Rice mean–estimation tasks, SR-PAC attains the smallest average noise magnitude
$\mathbb{E}\!\left[\lVert B\rVert_2^2\right]$ across privacy budgets $\beta$.
As $\beta$ decreases (stricter privacy), the noise required by Auto-PAC and Efficient-PAC rises much more steeply,
whereas SR-PAC grows gently; see the zoomed view in Fig.~\ref{fig:DPvs_SR_PAC} (Appendix~\ref*{app:more_on_experiments} in \cite{zhang2025breaking}).
The DP baseline remains well above SR-PAC and, at small budgets on Iris, also exceeds Efficient-PAC.
Appendix~\ref*{app:more_on_experiments} in \cite{zhang2025breaking} further reports empirical membership–inference results
for SR-PAC, DP, Auto-PAC, and Efficient-PAC on these privatized mechanisms.

Auto-PAC and Efficient-PAC allocate \textit{anisotropic} noise, but their shapes are task-agnostic and depend only on second-order structure, via covariance scaling (Auto-PAC) or eigen--allocation (Efficient-PAC).
In small-sample regimes (e.g., Iris and Rice), the covariance spectrum is noisy and often ill-conditioned, and these moment-based rules propagate that instability into the noise design, yielding conservative noise levels, especially for small $\beta$.
By contrast, SR-PAC enforces the conditional-entropy budget directly, leading to tighter budget implementation and lower required noise.
Empirically (Figure \ref{fig:DP_comparisons}), SR-PAC attains smaller average noise magnitudes across $\beta$ with smoother scaling.


\section{Conclusion}

This work introduced R-PAC Privacy, an enhanced framework that guarantees privacy beyond Gaussian assumptions while overcoming the conservativeness of existing PAC Privacy algorithms. Our SR-PAC algorithm casts the privacy–utility trade-off as a Stackelberg problem, efficiently using the privacy budget and learning data- and mechanism-specific anisotropic noise. Extensive experiments show that SR-PAC consistently attains tighter privacy guarantees and higher utility than prior approaches, providing a rigorous and practical foundation for scalable privacy assurance in complex applications.


\newpage
\section*{Ethical Considerations}

We propose Residual-PAC Privacy (R-PAC) and its privatization scheme Stackelberg Residual-PAC (SR-PAC) as a privacy protection framework. Our goal is to reduce the conservativeness of prior PAC Privacy mechanisms so that, for a fixed privacy budget, practitioners can achieve better utility without relaxing stated privacy guarantees. All experiments use standard, publicly available benchmark datasets. We do not collect new data, interact with live production systems or APIs, scrape data, or discover/disclose vulnerabilities.
We organize ethical considerations around the generic data-processing pipeline
\[
\textbf{Data} \rightarrow \textbf{Mechanisms} \rightarrow \textbf{Downstream Decision-Making},
\]
and assess benefits and harms using the Menlo Report principles \cite{dittrich2012menlo}: Beneficence, Respect for Persons, Justice, and Respect for Law and Public Interest.
Appendix~\ref*{app:ethical_trade_offs} in \cite{zhang2025breaking} provides ethical considerations about the trade-offs across different privacy standards.

\subsection*{Stakeholders}

\textbf{Data} are collections of records about people (e.g., medical, financial, behavioral). In our experiments, we use widely adopted public benchmarks, but in practice similar mechanisms could be deployed on sensitive real-world data. Relevant stakeholders include data subjects and dataset curators.

\noindent\textbf{Mechanisms} are data-driven systems (e.g., statistical analyses, ML models) operating under R-PAC or other frameworks (e.g., DP). Stakeholders include the privacy research community, data scientists/ML engineers, privacy/security teams, and auditors/defenders who pressure-test privacy claims.

\noindent\textbf{Downstream decision-making} comprises automated or human decisions relying on mechanism outputs (e.g., risk scoring, recommendation, decision support). Stakeholders include affected individuals (patients, applicants, platform users), as well as regulators, standards bodies, and advocacy organizations that rely on formal privacy statements.

\noindent\textbf{Cross-cutting societal stakeholders} (taxpayers, communities broadly subject to algorithmic systems, environmental stakeholders) are indirectly affected by how privacy-preserving data analysis becomes common practices.

\subsection*{Harms and Mitigations}

\textbf{Data stage.} The primary privacy risk arises from how data are used, not from their mere existence. R-PAC does not introduce new collection/scraping/linkage activity; rather, it calibrates and interprets residual privacy risk for data use in mechanisms. A key harm is \textit{misaligned expectations}: curators or deployers may choose parameters that are too weak for a given context, or may overgeneralize scenario-specific guarantees, creating a false sense of protection. 
We mitigate this by restricting experiments to public benchmarks and by emphasizing that guarantees are scenario-specific and depend on the chosen prior, model, and calibration.
Real deployments require context-aware parameter selection and transparent communication of scope and limits.

\noindent\textbf{Mechanism stage.} The primary harm arises from the inevitable privacy--utility trade-off: stronger privacy typically requires injecting randomness that reduces output fidelity.
Additional risks include (i) miscalibration or misuse from incorrect priors, assumptions, or implementations; (ii) privacy-washing or miscommunication if residual-risk metrics are presented as context-free guarantees; and (iii) increased engineering burden that raises the risk of implementation errors, especially for less-resourced teams.
We provide finite-sample analysis and recommend documenting priors, modeling, and calibration choices, using conservative settings when assumptions are uncertain, and treating R-PAC bounds as one input to broader privacy/security review that includes independent auditing and pressure-testing.

\noindent\textbf{Downstream stage.} Noise can degrade decision quality and cause harmful errors in high-stakes contexts (e.g., healthcare, finance, admissions, risk scoring). 
Residual-risk metrics may also be overly relied upon as blanket approval for aggressive data use, even in domains where any non-zero residual risk or utility degradation is unacceptable.
Our work does not deploy real systems; we use benchmarks to study trade-offs. For high-stakes deployments, we recommend conservative parameters, domain-specific evaluation of decision quality, and governance processes that do not treat formal bounds as the sole determinant of acceptability.

\noindent\textbf{Cross-cutting societal impacts.} If widely adopted, R-PAC-style methods may help align formal claims with empirical behavior and make residual risk explicit, supporting more realistic interpretation by auditors and regulators. However, they could be misused to justify aggressive data use if assumptions are obscured. Clear documentation, independent auditing, and appropriate oversight are important safeguards.

\subsection*{Decision to Conduct and Publish This Work}

From a Beneficence perspective, our goal is to improve utility \textit{at a fixed residual-privacy budget}, reducing harms associated with overly conservative or difficult-to-use mechanisms that can make formal privacy less practical. From a Respect for Persons and Justice perspective, making residual risk explicit and tying guarantees to concrete modeling and calibration choices supports more truthful, context-aware communication of privacy properties, while avoiding claims of absolute protection. 
From a Respect for Law and Public Interest perspective, more transparent and realistic privacy accounting can help regulators, standards bodies, and auditors evaluate systems where privacy claims depend on explicitly stated assumptions such as priors and model classes.
We consider it ethically justified to conduct and publish this work, but deployment remains context-dependent.

\section*{Open Science}

All artifacts necessary to evaluate our contribution consist solely of source code, available at the repository on Zenodo:
\url{https://doi.org/10.5281/zenodo.17871622}. 

The repository contains implementations of SR-PAC and all baseline methods. All required benchmarks are either public
datasets that are automatically downloaded by the scripts from their official sources, or small data files
included directly in the repository. Please refer to \texttt{README.md} for further details.

\section*{Acknowledgements}

This work was partially supported by the NSF (IIS-2214141, ITE-2452834), ARO (W911NF-25-1-0059), ONR (N000142412663), and Amazon.

\bibliographystyle{plain}
\bibliography{reference}

\appendix

\section{Discussion: PAC/R-PAC Privacy vs. Differential Privacy}\label{app:difference_DP_PAC}

In this section, we discuss the difference and the relationship between PAC/R-PAC Privacy and DP (Definition \ref{def:DP}).

DP and PAC (and R-PAC) Privacy use different semantics and different privacy quantification metrics.
DP considers the presence or absence of individual records as secrets and ensures that, regardless of external knowledge, an adversary with access to mechanism outputs makes similar conclusions whether or not any individual data record is included in the dataset \cite{dwork2006differential}.
DP employs the worst-case probabilistic \textit{input-independent indistinguishability} to quantify the privacy risk, by using the max-divergence for pure DP and the hockey stick divergence for the approximated DP.

Unlike DP, PAC Privacy can protect secrets that go beyond individual data records.
PAC Privacy measures privacy in terms of the adversary's difficulty in achieving accurate reconstruction, capturing the semantics of the \textit{impossibility of customized adversarial inference}~\cite{xiao2025pac}, where the adversary is assumed to be computationally-unbounded. 
PAC Advantage Privacy uses the posterior advantage $\Delta^{\delta}_{f}$ (Definition \ref{def:PAC_advantage}) to quantify privacy risk, where $\Delta^{\delta}_{f}$ depends on a chosen $f$-divergence, secret (e.g., data) entropy determined by $\mathcal{D}$, an attack model in terms of $\rho$.
When $f$-divergence is instantiated as KL divergence, $\Delta^{\delta}_{f}$ is upper bounded by mutual information that is uniform over all adversaries and admissible $\rho$.

R-PAC and PAC Privacy are two sides of the same coin: PAC quantifies leakage (e.g., via $\Delta^{\delta}_{f}$), while R-PAC quantifies the remaining privacy (e.g., via $\mathtt{R}^{\delta}_{f}$), linked exactly by $\mathtt{IntP}_f(\mathcal{D}) = \mathtt{R}_f^\delta + \Delta_f^\delta$ (equation (\ref{eq:link_PAC_privacy})). 
R-PAC Privacy uses the same semantics as PAC Privacy uses the posterior disadvantage $\mathtt{R}_{f}^{\delta}$ (Definition \ref{def:residual_PAC}) to quantify the remaining privacy after leakage.
When KL divergence is used, the posterior disadvantage $\mathtt{R}_{f}^{\delta}$ is lower bounded by the conditional entropy, which is uniform over all adversaries and admissible $\rho$.

PAC (and R-PAC) Privacy provides a more general framework that quantifies the reconstruction hardness for any sensitive information that an adversary might seek to infer. 
This encompasses not only individual membership inference (a special case), but also broader privacy concerns such as data reconstruction within specified error bounds, identification of multiple participants, or recovery of sensitive attributes.
Crucially, PAC Privacy operates under distributional assumptions about the data (or general secrets) generation process $\mathcal{D}$, enabling instance-based analysis that can potentially require less noise than worst-case DP guarantees.
However, the automatic privatization procedures (e.g., Auto-PAC and Efficient-PAC) proposed to realize PAC Privacy certify the privacy guarantee via an MI budget $\beta$, enforcing Gaussian surrogate bound $\mathtt{LogDet}(\mathcal{M}(X), B)\leq \beta$ as a sufficient condition.

Consequently, the delivered mechanisms inherit MI–based properties and caveats (e.g., data processing, distributional/average–case nature, standard composition scaling, and lack of worst–case indistinguishability unless additional constraints are imposed), even though the abstract PAC Privacy notion itself does not rely on MI. 
Our SR-PAC follows the MI principle but implements an MI bound that is tighter than the Gaussian surrogate bound $\mathtt{LogDet}(\mathcal{M}(X), B)$.


While PAC (resp. R-PAC) privacy uses the semantics of impossibility of customized adversarial inference and is independent of mutual information (resp. conditional entropy), Auto-PAC uses mutual information (resp. conditional entropy) to quantify privacy risk.
In this section, we discuss the difference between mutual information (MI) as privacy quantification and the input-independent indistinguishability of DP.

\textbf{What each notion protects. }
Now, we discuss what each privacy notion protects.
Let $\mathcal{M}:\mathcal{X}\mapsto \mathcal{Y}$ be a randomized mechanism. DP, independent of input distribution, protects \textit{worst-case, per-individual input-independen indistinguishability} by ensuring a uniform bound $\ell(x,y) \leq \varepsilon$ almost surely (up to $\delta$ in the $(\varepsilon,\delta)$ case), where $\ell(x,y) = \log \frac{P_{X|Y}(x|y)}{P_X(x)}$ is the privacy-loss random variable. However, DP does not, in general, ensure that the \textit{average} leakage $\mathtt{MI}(X;Y)$ is small—indeed, $\mathtt{MI}(X;Y)$ can scale with the dataset size unless $\varepsilon$ shrinks appropriately. In contrast, MI-based privacy constrains the \textit{average information} leaked from inputs to outputs under a specific input distribution $\mathcal{D} \in \Delta(\mathcal{X})$. MI controls average leakage: $\mathtt{MI}(X;Y) = \mathbb{E}_{P_{XY}}[\ell(X,Y)] \leq \beta$ upper-bounds the expected log-likelihood gain of an optimal Bayesian adversary. However, MI \textit{does not} by itself bound the worst-case leakage $L \equiv \operatorname*{ess\,sup}\ell$; in particular, $\mathtt{MI}(X;Y) \leq \beta$ is compatible with $L = \infty$ (rare but arbitrarily large disclosures).

\textbf{Worst-case vs. average-case guarantees. }
DP is a distribution-free, worst-case guarantee that must hold for all neighboring datasets and all adversaries, independent of any input distribution. By contrast, MI-based privacy is \textit{distributional}: it controls \textit{expected} leakage under an input distribution $P_X$, typically via $\mathtt{MI}(X;Y) \leq \beta$ for $Y = \mathcal{M}(X)$. Because $\mathtt{MI}(X;Y) = \mathbb{E}_{P_{XY}}[\ell(X,Y)]$, the noise needed to enforce $\mathtt{MI}(X;Y) \leq \beta$ depends on $P_X$: when most probability mass lies on inputs for which $\ell(X,Y)$ is typically small, less perturbation can suffice, and the resulting noise shape may be tailored to that distribution. At the same time, an MI budget does not by itself preclude rare high-leakage cases: if there exists a measurable event $E \subseteq \mathcal{X} \times \mathcal{Y}$ with $P_{XY}(E) = p$ and $\ell(x,y) \geq L$ for all $(x,y) \in E$, then $\mathtt{MI}(X;Y) \geq pL$; hence the constraint $\mathtt{MI}(X;Y) \leq \beta$ forbids such a case only when $pL > \beta$ (and any "perfect disclosure" with $L = \infty$ is incompatible for all $p > 0$).


\textbf{Name-and-shame example. }
One example of "rare high–leakage cases" is the \textit{name-and-shame}.
Let $E=\{(x,y):y=x\}$ denote the event in which the mechanism reveals the input directly, occurring with probability $p$. On $E$, the per–sample leakage is $\ell(x,y)=\log \tfrac{p_{X\mid Y}(x\mid x)}{p_X(x)}=-\log p_X(x)$, which can be very large (and unbounded when $p_X$ has heavy tails or continuous support). Thus this is a small–probability, high–leakage branch. In the discrete case with finite support, one has $\mathtt{MI}(X;Y)=p \mathcal{H}(X)$. Choosing $p=\beta/\mathcal{H}(X)$ makes $\mathtt{MI}(X;Y)=\beta$, which saturates the heuristic $I\mathtt{MI}(X;Y)\ge pL$ when $L$ is interpreted as the average leakage $\mathcal{H}(X)$ on $E$. If one insists on the pointwise form from the paragraph, taking $L_0=\operatorname*{ess\,inf}_{x}(-\log p_X(x))$ yields $\mathtt{MI}(X;Y)(X;Y)\ge p\,L_0$, which still places the example in the same regime. Finally, if "name–and–shame" is modeled as perfect disclosure with continuous $X$, then $\ell=\infty$ on $E$ and the constraint rules it out immediately, since $L=\infty$ is incompatible with any $p>0$.

\textbf{DP perspective on the name-and-shame example. }
To see why this example fundamentally conflicts with the DP notion of rare-but-exact disclosure, consider the per–record "name–and–shame" mechanism $M$ that, independently for each index $i$, outputs $(i,x_i)$ with probability $p$ and $\bot$ otherwise. Let $x$ and $x'$ be neighboring databases that differ only in record $i$, and define the event $E={(i,x_i)}$. Then
\[
\Pr[M(x)\in E]=p,\quad \Pr[M(x')\in E]=0.
\]
The $(\epsilon,\bar{\delta})$–DP inequality for $E$ reads $p\le e^\epsilon\cdot 0+\bar{\delta}=\bar{\delta}$, hence any $(\epsilon,\bar{\delta})$ satisfied by $M$ must obey $\bar{\delta}\ge p$. In particular, with the standard regime $\bar{\delta}\ll 1/n$ (negligible failure probability), such a mechanism is \textit{not} DP for any finite $\epsilon$; conversely, allowing $\bar{\delta}\ge p$ makes the guarantee vacuous on the $p$–fraction of runs that reveal $(i,x_i)$ exactly.

\subsection{Fair Comparison Under MIA}\label{app_sub:fair_MIA}

PAC Privacy and R-PAC Privacy (and also MI-based privacy) address complementary notions of privacy to DP. 
Neither framework dominates the other.
To perform a fair comparison, we focus on the cases when the privacy budgets of DP and PAC/R-PAC Privacy are "equalized".
In particular, we consider Membership Inference Attack (MIA) defined by Definition~\ref*{def:MIA} in Appendix~\ref*{app:MIA} in \cite{zhang2025breaking}.

DP can be understood through the lens of membership inference success rates.
Consider the membership inference scenario from Definition \ref*{def:MIA} (Appendix~\ref*{app:MIA} in \cite{zhang2025breaking}), where we have a dataset of size $n=\frac{N}{2}$ (i.e., each individual data record has a $50\%$ probability of being included in the selected subset $X$).
If a mechanism $\mathcal{M}$ is $(\epsilon, \bar{\delta})$-DP, then by \cite{kairouz2015composition,humphries2023investigating}, an adversary's ability to successfully infer whether a specific individual record $i$ is included in the dataset (i.e., posterior success rate $p_{o} = 1-\delta_i$) is fundamentally limited: 
\begin{equation}\label{eq:posterior_DP}
    p_{o}\leq 1 - \frac{1 - \bar{\delta}}{1 + e^\epsilon}.
\end{equation}
This bound demonstrates how DP parameters directly translate into concrete limits on an adversary's inference capabilities in MIA.
Thus, the maximal posterior success rate permitted by $(\epsilon,\bar{\delta})$-DP is $1 - \frac{1 - \bar{\delta}}{1 + e^\epsilon}$.

In addition, there is a relationship between the posterior success rate $p_{o}$ and the mutual information \cite{sridhar2024pac} (derived from (\ref{eq:KL_upper_MI})):
\begin{equation}
    p_{o}\log\frac{p_{o}}{\bar{p}} + (1-p_{o})\log\frac{1-p_{o}}{1-\bar{p}} \leq \mathtt{MI}(X;\mathcal{M}(X)),
\end{equation}
where $\bar{p}$ is the optimal prior success rate, which is $\max(r, 1-r)$ with $r$ as the subsampling rate that selects the dataset from a data pool.
Thus, given a privacy budget $\mathtt{MI}(X;\mathcal{M}(X))=\beta$ and a prior success rate $\bar{p}$, we can calculate the posterior success level $p_{o}$ and, by (\ref{eq:posterior_DP}), pin down $\epsilon$ for a chosen $\bar{\delta}$ so that DP has an "equivalent" budget to PAC.
The corresponding R-PAC budget is $\mathcal{H}(X) - \beta$.

For per-individual membership, the relevant secret is the membership indicator for person $i$, and the mechanism output is $Y=\mathcal{M}(X)$.
Let $U_i \in \{0,1\}$ denote the membership indicator as specified in Appendix \ref*{app:MIA} of \cite{zhang2025breaking}.
Since $U_i \to X \to Y$ forms a Markov chain, the data-processing inequality gives $\mathtt{MI}(U_i;Y) \leq \mathtt{MI}(X;Y)=\beta.$
The Bernoulli–KL inequality used above applies equally with $\mathtt{MI}(U_i;Y)$ on the right-hand side; replacing it by $\mathtt{MI}(X;Y)$ is therefore conservative and still yields a valid upper bound on the Bayes-optimal membership posterior success $p_o$.
This validates using $\mathtt{MI}(X;Y)$ to compute $p_o(\beta,\bar p)$ for MIA and then selecting $(\epsilon,\bar\delta)$ so that (\ref{eq:posterior_DP}) enforces the same $p_o$ for a fair, like-for-like comparison between DP and PAC/R-PAC.

\subsection{Noise Magnitude}

In this section, we discuss how they differ in \textit{noise magnitude} under an \textit{equalized privacy budget}. Concretely, we fix a mutual-information budget $\beta$ for PAC/R-PAC; when contrasting with DP, we use the $(\epsilon,\bar\delta)$ that induces the \textit{same} posterior-success level via the MI$ \leftrightarrow$DP conversion described in Section \ref{app_sub:fair_MIA}. Even at this matched budget, the required noise can vary substantially. We measure it by the total noise magnitude $V(\beta) \equiv \mathbb{E}\|B\|_2^2$ for outputs $Y=\mathcal{M}(X)+B$. Let the centered output covariance have eigenvalues $\lambda_1 \ge \cdots \ge \lambda_p > 0$ on its informative $p$-dimensional subspace ($p\le d$), and write $R=\max_j \lambda_j$. We first present the \textit{ideal} Auto-PAC baseline derived from the log-det MI bound, then the (SR-PAC) optimizer that tightens noise under the same $\beta$, and finally contrast both with classical DP mechanisms that must mask worst-case sensitivity in $d$ dimensions. (Throughout, Auto-PAC refers to this ideal log-det calibration; the practical Algorithm~\ref{alg:PAC_original} uses estimated eigenvalues and a stabilization $10cv/\beta$, yielding total noise magnitude $(\sum_j \sqrt{\hat\lambda_j + 10cv/\beta})^2/(2v)$, a conservative upper envelope of the ideal baseline.)

\textbf{Auto-PAC. } Let the (centered) mechanism output have covariance eigenvalues $\lambda_1 \ge \dots \ge \lambda_p>0$ (in its informative $p$-dimensional subspace). Auto-PAC calibrates \textit{Gaussian} noise $B\sim\mathcal N(0,\Sigma_B)$ under an MI budget $\beta$, yielding the total noise magnitude
\[
V_{\text{PAC}}(\beta)
\;=\;
\mathbb E\|B\|_2^2
\;=\;
\frac{\Big(\sum_{j=1}^{p}\sqrt{\lambda_j}\Big)^2}{2\beta}.
\]
(When the exhibited calibration targets $\mathtt{MI} \le  \tfrac12$, this specializes to $V_{\text{PAC}}=(\sum_j\sqrt{\lambda_j})^2$.) A general bound is
\[
\Big(\textstyle\sum_{j=1}^p \sqrt{\lambda_j}\Big)^2
\;\le\;
p\sum_{j=1}^p \lambda_j
\;\leq\;
p^2 R,
\]
where $R\equiv \max_j \lambda_j$.
Hence $V_{\text{PAC}}(\beta)=O(p^2R/\beta)$ in the worst case (and improves to $O(pR/\beta)$ if $\sum_j\lambda_j=O(R)$).
\textit{(In practice, Algorithm~1 uses estimated eigenvalues and a stabilization $10cv/\beta$, yielding total noise magnitude $(\sum_j \sqrt{\hat\lambda_j + 10cv/\beta})^2/(2v)$, which is a conservative upper envelope of the ideal log-det calibration.)}
Because differential privacy (DP) must mask \textit{worst-case} changes in all $d$ coordinates, the required noise for $d$-dimensional outputs typically grows like $\sqrt d$ (e.g., $O(\sqrt d/n)$ for mean queries with dataset size $n$)—the classic "curse of dimensionality." Thus, when the data are effectively low-rank ($p\ll d$), Auto-PAC already mitigates this dimensional blow-up.

\textbf{SR-PAC. } SR-PAC \textit{optimizes} the full noise distribution under the same MI budget $\beta$ and strictly improves (or matches) the Gaussian baseline:
\begin{itemize}
    \item \textit{Universal gain (non-Gaussian outputs):} For any non-Gaussian output $Z=\mathcal M(X)$, SR-PAC achieves
    \[
    \mathbb E\|B_{\text{SR}}\|_2^2 \;<\; \mathbb E\|B_{\text{PAC}}\|_2^2
    \quad\text{at the same }\beta,
    \]
    closing the conservativeness of Auto-PAC. (If $Z$ is exactly Gaussian, the gap can vanish.)
    \item \textit{Anisotropic allocation:} The Stackelberg-optimal covariance is provably anisotropic; variance is shifted toward directions with high leakage and away from benign ones, improving utility without violating the MI budget.
    \item \textit{Zero-noise subspaces:} Under a mild separation of directional sensitivities, there exists a threshold $\beta_{\mathrm{lab}}$ such that for all $\beta\le \beta_{\mathrm{lab}}$ SR-PAC injects \textit{no} noise on an $s$-dimensional task-critical subspace (e.g., the $k{-}1$ label directions in classification), reducing the order from $O(p)$ to $O(p{-}s)$ in those regimes.
\end{itemize}

\textbf{Comparison to DP. } Since SR-PAC pointwise dominates Auto-PAC for every $\beta$ and Auto-PAC already avoids DP's $\sqrt d$-type growth, SR-PAC inherits—and sharpens—the dimensional advantage. Writing
\[
V_{\text{SR}}(\beta)
\;=\;
V_{\text{PAC}}(\beta) \;-\; \Delta(\beta),
\qquad
0 \le \Delta(\beta) \le V_{\text{PAC}}(\beta),
\]
we have $\Delta(\beta)>0$ whenever $Z$ is non-Gaussian. In high-dimensional tasks with modest informative rank $p$ and harmless directions ($s>0$), SR-PAC reduces noise from $O(p)$ down to $O(p{-}s)$ (at fixed $\beta$), yielding a strictly better privacy–utility trade-off than both Auto-PAC and classical DP.

\subsection{Computational Complexity}\label{app:computational_cost}

In this section, we concisely characterize the computational complexity of $(1-\gamma)$-Confidence Auto-PAC (i.e., Algorithm \ref{alg:PAC_original}) and SR-PAC.
For simplicity, we still use Auto-PAC to refer to Algorithm \ref{alg:PAC_original}.
Let $d$ be the dimension of the mechanism output $\mathcal{M}(X)\in\mathbb{R}^{d}$.

\textbf{Auto-PAC. }
Let $m$ be the number of Monte Carlo simulations (samples) used by Auto-PAC.
In addition, let $C_{\mathcal{M}}$ denote the cost of one evaluation of the (black-box) mechanism $\mathcal{M}(\cdot)$.
Auto-PAC first draws $m$ i.i.d. samples $X^{(1)}, X^{(2)}, ..., X^{(m)}$.
For each sample $X^{(k)}$, Auto-PAC then evaluates $y^{(k)} = \mathcal{M}(X^{(k)})$; let $O(m C_{\mathcal{M}})$ denote the corresponding costs.
It then forms the empirical mean and full empirical covariance $\hat{\Sigma}\in\mathbb{R}^{d\times d}$, which costs $O(md^{2})$ time and $O(d^2)$ memory.
Finally, it performs an SVD/eigendecomposition of $\hat{\Sigma}$ and constructs $\Sigma_{B}$, which costs $O(d^{3})$ time (full decomposition) and $O(d^{2})$ memory. Overall, the cost of Auto-PAC is 
\begin{itemize}
    \item Time: $O(mC_{\mathcal{M}} + md^{2} + d^{3})$;
    \item Memory: $O(d^{2})$.
\end{itemize}
Here, the $d^{3}$ SVD/eigendecomposition step dominates at large output dimension.

\textbf{SR-PAC (Monte Carlo Stackelberg optimization). }
SR-PAC uses the same black-box sampling access to $\mathcal{M}(\cdot)$ but avoids $d\times d$ SVD operations.
In Algorithm~\ref{alg:SR_PAC}, each update uses a fresh Monte-Carlo batch of size $m$ (lines~6--8 and 12--15).
Let $C_{\mathcal{M}}$ be the cost of one evaluation of $\mathcal{M}(\cdot)$, let $C_{\pi}$ be the cost of one forward/backward pass of the decoder,
and let $C_{g}$ be the cost of sampling and differentiating through the perturbation rule.
Then each decoder-gradient step costs $O\!\left(m\,(C_{\mathcal{M}} + C_{\pi} + C_{g})\right)$ time (lines~6--9),
and each leader update costs $O\!\left(m\,(C_{\mathcal{M}} + C_{\pi} + C_{g})\right)$ time (lines~12--15).
Over $T_{\lambda}$ leader iterations with decoder-update phases triggered every $T_{\phi}$ iterations (line~3),
the total runtime is
\[
O\!\left(T_\lambda\, m\,(C_{\mathcal{M}} + C_{\pi} + C_{g})\;+\;N_{\text{dec}}\cdot m\,(C_{\mathcal{M}} + C_\pi + C_g)\right),
\]
where $N_{\text{dec}}$ is the total number of decoder-gradient steps (e.g., $N_{\text{dec}} \approx \lfloor T_\lambda/T_\phi\rfloor\cdot T_\phi$ in Algorithm~\ref{alg:SR_PAC}).
The memory cost is dominated by storing model parameters and one mini-batch:
\[
O\!\left(p_\pi + p_g + m d\right),
\]
where $p_\pi$ and $p_g$ are the parameter counts of the decoder and perturbation rule, respectively,
and $md$ accounts for holding a batch of $m$ outputs in $\mathbb{R}^d$ during a step.
In particular, SR-PAC avoids the $O(d^2)$ memory footprint and $O(d^3)$ matrix-decomposition bottleneck of Auto-PAC.

\textbf{DP. }
Sensitivity (i.e., the maximal possible change
on the output when a single data record changes) is the key component of DP privatization via noise perturbation.
However, computing sensitivity is, in general, NP-hard \cite{xiao2008output}. 
DP-SGD \cite{abadi2016deep} is a \textit{decompose-then-compose} privatization scheme for DP, which avoids explicit sensitivity computation.
However, DP-SGD adds per-iteration per-example gradient clipping and noise, inducing utility drop and computational overhead in large-scale applications \cite{li2021large}.
Since DP and PAC/R-PAC Privacy adopt fundamentally different semantics and different privatization schemes, it is not self-evident how to compare them fairly in terms of computational complexity.

\textbf{Scalable sliced variants. }
When $d$ is large, the SR-PAC principle also applies to \textit{sliced} objectives (e.g., sliced conditional entropy via sliced mutual information; see Appendix \ref*{app:sliced_rpac} of \cite{zhang2025breaking}),
which replaces high-dimensional estimation with $r$ one-dimensional projections.
This yields an additional factor $O(r)$ over sampling while avoiding $O(d^3)$ operations, i.e., $O(mC_{\mathcal{M}}+rmd)$
for estimating sliced leakage terms (plus the decoder-training term if used).

\newpage

\section{Trade-offs Across Privacy Standards and Ethical Considerations}\label{app:ethical_trade_offs}

Our work fits into a broader ecosystem of privacy frameworks, including differential privacy (DP) and its variants (e.g., R\'enyi DP), PAC-style privacy, and information-theoretic notions. These standards should be viewed as \textit{different languages for expressing and reasoning about the privacy of a mechanism}, with different semantics, parameters, and audiences, rather than as a single total order of "stronger" versus "weaker" privacy. In particular, it is generally not meaningful to compare numerical parameters across different notions directly (e.g., comparing an $\varepsilon$ value in one framework to an $\varepsilon$ value in another) without accounting for the notion's semantics and the relevant conversions or bounds.
For the same underlying mechanism, multiple privacy statements may hold simultaneously under different frameworks, but they summarize risk in different ways; selecting a framework is therefore also a communication, governance, and implementation choice.

\noindent\textbf{Benefits and costs of DP.}
Different privacy standards differ not only in semantics, but also in how privacy can be \textit{quantified, implemented, and accounted for under composition}. Worst-case notions such as DP provide distribution-free guarantees with strong robustness to auxiliary information, and they come with well-developed composition and accounting tools. However, implementing DP in complex pipelines can be challenging: classical DP mechanisms often require bounding (global) sensitivity, which may be difficult to compute tightly for realistic workloads, while DP-SGD avoids explicit sensitivity calculation but introduces per-example clipping and iterative noise injection that can increase computational overhead and degrade utility if not carefully tuned. 
DP is often an excellent choice in practice, particularly when distribution-free guarantees, robustness to unknown auxiliary information, and mature accounting/composition tools are priorities.

\noindent\textbf{Benefits and costs of PAC/R-PAC Privacy.}
PAC-style frameworks (including R-PAC) offer an alternative way to express and calibrate privacy risk based on stated modeling assumptions and calibration procedures; this can make some forms of distribution-aware calibration and auditing more directly expressible within the framework. The corresponding cost is that guarantees are scenario-specific and must be communicated together with their assumptions (e.g., the instantiated prior/model and calibration procedure), and their composition/interpretation must be handled within that same framework rather than by informal cross-framework parameter comparisons.

\noindent\textbf{Ethical risks of misinterpretation.}
A key ethical risk is \textit{false comparability}: stakeholders may treat a notion with "stricter" worst-case semantics as automatically "more private" than another notion, or may treat reported parameters as interchangeable across frameworks. This can lead to privacy-washing (overstating protection), inappropriate deployment decisions, or misguided regulatory comparisons. We therefore emphasize that privacy claims should be reported together with the underlying assumptions and calibration choices, and interpreted within the chosen framework rather than via informal cross-framework comparisons.

\section{Membership Inference Attack}\label{app:MIA}

We first recall the standard definition of membership inference attacks formalized to match PAC Privacy \cite{xiao2023pac,sridhar2024pac}.

\begin{definition}[Membership Inference Attack \cite{xiao2023pac,sridhar2024pac}]\label{def:MIA}
Given a finite data pool $\mathcal{U} = \{u_1, u_2, \ldots, u_N\}$ and some processing mechanism $\mathcal{M}$, $X$ is an $n$-subset of $\mathcal{U}$ randomly selected. An informed adversary is asked to return an $n$-subset $\hat{X}$ as the membership estimation of $X$ after observing $\mathcal{M}(X)$. We say $\mathcal{M}$ is resistant to $(1-\delta_i)$ individual membership inference for the $i$-th datapoint $u_i$, if for an arbitrary adversary,
$$\Pr_{X \leftarrow \mathcal{U}, \tilde{X} \leftarrow \mathcal{M}(X)}(\mathbf{1}_{u_i \in X} = \mathbf{1}_{u_i \in \hat{X}}) \leq 1 - \delta_i$$
Here, $\mathbf{1}_{u_i \in X}$ ($\mathbf{1}_{u_i \in \hat{X}}$) is an indicator which equals 1 if $u_i$ is in $X$ ($\hat{X}$).
\end{definition}

Building on this attack model, we now introduce the corresponding R-PAC membership privacy notion:

\begin{definition}[R-PAC Membership Privacy]
For a data processing mechanism $\mathcal{M}$, given some measure $\rho$ and a data set $\mathsf{U}=(u_1, u_2, \ldots , u_{N})$, we say $\mathcal{M}$ satisfies $(\mathtt{R}_{f}^{\delta}, \rho, \mathsf{U}, \mathcal{D})$-R-PAC Membership Privacy if it is $(\delta, \rho, \mathsf{U}, \mathcal{D})$ PAC Membership private and:
\begin{equation}
\mathtt{R}_{f}^{\delta} \equiv \mathtt{IntP}_{f}(\mathcal{D}) - \mathtt{D}_f(\mathbf{1}_{\delta} \,\|\, \mathbf{1}_{\delta_\rho^o})
\end{equation}
is the \textup{posterior disadvantage}, where:
\begin{itemize}
\item $\mathtt{IntP}_{f}(\mathcal{D}) = -\mathtt{D}_f(\mathcal{D} \,\|\, \mathsf{U})$ is the intrinsic membership privacy of the sampling distribution $\mathcal{D}$ relative to the uniform distribution over $\mathsf{U}$,
\item $\mathbf{1}_{\delta}$ and $\mathbf{1}_{\delta_\rho^o}$ denote the \textit{posterior} and \textit{prior} inference outcomes, respectively (thought of as binary success/failure indicators; equivalently, Bernoulli distributions with success parameters $1-\delta$ and $1-\delta_\rho^o$),
\item $\delta_\rho^o = \inf_{\tilde{\bm{1}}_{\mathsf{U}}} \Pr_{X \sim \mathcal{D}}[\rho(\tilde{\bm{1}}_{\mathsf{U}}, \bm{1}_{\mathsf{U}}) \neq 1]$ is the optimal \textit{prior} error level (so the optimal prior success is $1-\delta_\rho^o$).
\end{itemize}
\label{def:residual_pac_membership}
\end{definition}

$\mathtt{R}_{f}^{\delta}$ quantifies the R-PAC membership privacy that persists after adversarial inference. The total intrinsic membership privacy is decomposed as:
\begin{equation}
\mathtt{IntP}_{f}(\mathcal{D}) = \mathtt{R}_{f}^{\delta} + \Delta_{f}^{\delta},
\qquad
\text{where }\;
\Delta_{f}^{\delta} \;=\; \mathtt{D}_f(\mathbf{1}_{\delta} \,\|\, \mathbf{1}_{\delta_\rho^o})
\end{equation}
is the PAC Membership Privacy loss, providing a complete accounting of membership privacy risk.

\paragraph{KL case and Markov-chain justification.}
When $\mathtt{D}_f$ is the KL divergence, write $Y=\mathcal{M}(X)$ and let $U_i\in\{0,1\}$ be the membership indicator for an individual $i$, and $J\in\{1,\dots,N\}$ the one-hot index with distribution $\mathcal{D}$ (so $\bm{1}_{\mathsf{U}}$ is the one-hot representation of $J$). Then
\[
\Delta_{f}^{\delta}
\;=\;
\mathrm{KL} \big(\mathbf{1}_{\delta}\,\big\|\,\mathbf{1}_{\delta_\rho^o}\big)
\;\le\; I(U_i;Y)
\;\le\; I(J;Y),
\]
where the first inequality is the Bernoulli–KL information–risk bound for membership, and the second follows because $U_i$ is a deterministic function of $J$ and $U_i \to J \to X \to Y$ is a Markov chain (data processing).
Moreover,
\[
\mathtt{IntP}_{\mathrm{KL}}(\mathcal{D})
\;=\;
\mathcal{H}(\mathcal{D}) - \log |\mathsf{U}|\,.
\]
Combining these,
\[
\begin{aligned}
    \mathtt{R}_{f}^{\delta}
&\;=\;
\mathtt{IntP}_{\mathrm{KL}}(\mathcal{D}) - \Delta_{f}^{\delta}
\;\ge\;
\mathcal{H}(J\,|\,Y) - \log |\mathsf{U}|\\
&\;=\;
\mathcal{H}(\bm{1}_{\mathsf{U}} \mid \mathcal{M}(X)) - \mathtt{V},
\end{aligned}
\]
with $\mathtt{V} \equiv \log |\mathsf{U}|$ independent of both $\mathcal{D}$ and $\mathcal{M}$. This shows that the residual term lower-bounds the conditional uncertainty of the one-hot membership indicator given the mechanism output, up to a constant that depends only on the universe size.

\section{Automatic Efficient PAC Privatization}\label{app:efficient_pac}

PAC privacy (Auto-PAC) provides a framework for measuring privacy risk through simulation-based proofs that bound the mutual information between inputs and outputs of black-box algorithms. While this approach offers rigorous privacy guarantees without requiring white-box algorithm modifications, the original implementation faced computational and practical challenges. The initial algorithm required computing the full covariance matrix and performing Singular Value Decomposition (SVD) across the entire output dimension, which becomes prohibitively expensive for high-dimensional outputs. Additionally, black-box privacy mechanisms suffer from output instability caused by random seeds, arbitrary encodings, or non-deterministic implementations, leading to inconsistent noise calibration and suboptimal utility.

Recent work by Sridhar et al.~\cite{sridhar2024pac} addresses these limitations through Efficient-PAC (Algorithm \ref{alg:PAC_alg_original}), which introduces two key improvements. First, they develop an anisotropic noise calibration scheme that avoids full covariance estimation by projecting mechanism outputs onto a unitary basis and estimating only per-direction variances. This leads to a more scalable and sample-efficient algorithm while maintaining rigorous mutual information guarantees. Second, they propose methods for reducing output instability through regularization and canonicalization techniques, enabling more consistent noise calibration and better overall utility. These refinements are particularly impactful in high-dimensional or structure-sensitive learning tasks, where the original PAC scheme may incur unnecessary noise due to variability not intrinsic to the learning objective.

\begin{algorithm}[t]
\caption{Efficient-PAC \cite{sridhar2024pac}}\label{alg:PAC_alg_original}
\begin{algorithmic}[1]
\REQUIRE deterministic mechanism $\mathcal{M}$, data distribution $\mathcal{D}$, precision parameter $\tau$, convergence function $f_{\tau}$, privacy budget $\beta$, unitary projection matrix $A\in\mathbb{R}^{d\times d}$.
\STATE Initialize $m \gets 1$, $\boldsymbol{\sigma}_0 \gets \text{null}$, $\mathbf{G} \gets \text{null}$
\WHILE{$m \le 2$ or $f_{\tau}(\boldsymbol{\sigma}_{m-1}, \boldsymbol{\sigma}_m) \ge \tau$}
    \STATE Sample $X_m \sim \mathcal{D}$, compute $y_m \gets \mathcal{M}(X_m)$
    \STATE Set $g_m \gets [y_m \cdot A_1, \dots, y_m \cdot A_d]$, append to $\mathbf{G}$
    \STATE Set $\boldsymbol{\sigma}_m[k]$ to empirical variance of column $k$ in $\mathbf{G}$, increment $m \gets m + 1$
\ENDWHILE
\FOR{$i = 1$ to $d$}
    \STATE Set $e_i \gets \frac{\sqrt{\boldsymbol{\sigma}_m[i]}}{2\beta} \sum_{j=1}^d \sqrt{\boldsymbol{\sigma}_m[j]}$
\ENDFOR
\RETURN $\Sigma_B$ with $\Sigma_B[i][i] = e_i$
\end{algorithmic}
\end{algorithm}

Theorem \ref{thm:PAC_Alg_thm1} establishes the privacy guarantee of Efficient-PAC.

\begin{theorem}[Theorem 1 of \cite{sridhar2024pac}]\label{thm:PAC_Alg_thm1}
Let $\mathcal{M} : \mathcal{X} \rightarrow \mathbb{R}^d$ be a deterministic mechanism, and let $A \in \mathbb{R}^{d \times d}$ be a unitary projection matrix. Let $\boldsymbol{\sigma} \in \mathbb{R}^d$ be the variance vector of the projected outputs $\mathcal{M}(X) \cdot A$, and let $B \sim \mathcal{N}(0, \Sigma_B)$ be the additive noise with covariance $\Sigma_B = \operatorname{diag}(e_1, \ldots, e_d)$, where $e_i = \frac{\sqrt{\sigma_i}}{2\beta} \sum_{j=1}^d \sqrt{\sigma_j}$.
Then, the mutual information between the input and privatized output satisfies $\mathtt{MI}(X; \mathcal{M}(X) + B) \leq \beta.$
\end{theorem}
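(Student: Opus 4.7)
The plan is to show that the Gaussian log-det bound from Theorem~\ref{thm:PAC_original_thm3} collapses to exactly $\beta$ once the calibrated $e_i$ is substituted and two standard matrix inequalities are invoked. The first step is to apply Theorem~\ref{thm:PAC_original_thm3} after rotating into the basis prescribed by $A$. Because $A$ is unitary and therefore an invertible post-processing, mutual information is preserved, so it suffices to bound $\mathtt{MI}(X; \mathcal{M}(X)\cdot A + B) \leq \tfrac{1}{2}\log\det\!\bigl(I_d + \Sigma_{\mathcal{M}(X)\cdot A}\Sigma_B^{-1}\bigr)$, where the diagonal entries of $\Sigma_{\mathcal{M}(X)\cdot A}$ are exactly $\sigma_1,\ldots,\sigma_d$ by definition of $\boldsymbol{\sigma}$.

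Next I would decouple the $d$ coordinates using Hadamard's inequality for positive semidefinite matrices. Let $C = \Sigma_B^{-1/2}\Sigma_{\mathcal{M}(X)\cdot A}\Sigma_B^{-1/2}$, which is PSD with diagonal entries $\sigma_i/e_i$. Combining the identity $\det(I_d + \Sigma_{\mathcal{M}(X)\cdot A}\Sigma_B^{-1}) = \det(I_d + C)$ (a standard similarity/cyclic-determinant manipulation) with Hadamard's inequality applied to the PSD matrix $I_d + C$ gives $\det(I_d + C) \leq \prod_i (1 + \sigma_i/e_i)$. Applying the elementary bound $\log(1+x)\leq x$ term by term then yields
$$\tfrac{1}{2}\log\det\!\bigl(I_d + \Sigma_{\mathcal{M}(X)\cdot A}\Sigma_B^{-1}\bigr) \;\leq\; \tfrac{1}{2}\sum_{i=1}^d \frac{\sigma_i}{e_i}.$$

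The final step is a direct substitution: with $e_i = \tfrac{\sqrt{\sigma_i}}{2\beta}\sum_j \sqrt{\sigma_j}$, one has $\sigma_i/e_i = \tfrac{2\beta\sqrt{\sigma_i}}{\sum_j \sqrt{\sigma_j}}$, so $\sum_i \sigma_i/e_i = 2\beta$ and the overall bound is $\beta$, as claimed. I expect the main technical hurdle to be the Hadamard step: if $A$ happens to diagonalize $\Sigma_{\mathcal{M}(X)}$, then $\Sigma_{\mathcal{M}(X)\cdot A}$ is already diagonal, the log-det factors exactly, and the only slack is the concavity of $\log(1+x)$; for a general unitary $A$ not aligned with the eigenbasis, the off-diagonal entries of $\Sigma_{\mathcal{M}(X)\cdot A}$ must be absorbed by the PSD Hadamard inequality, which is where the argument's looseness accumulates. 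Everything else is routine algebraic verification, so this decoupling is the only conceptually nontrivial step.
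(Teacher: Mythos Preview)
The paper does not prove this statement at all: Theorem~\ref{thm:PAC_Alg_thm1} is quoted verbatim as ``Theorem~1 of \cite{sridhar2024pac}'' and is used purely as background for the Efficient-PAC baseline, with no proof or proof sketch appearing anywhere in the body or the appendices. So there is no ``paper's own proof'' to compare against.

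That said, your argument is correct and is essentially the canonical one. The three ingredients---(i) invariance of mutual information under the unitary change of basis $A$, (ii) passing from $\det(I_d+\Sigma_{\mathcal{M}(X)\cdot A}\Sigma_B^{-1})$ to $\det(I_d+C)$ with $C=\Sigma_B^{-1/2}\Sigma_{\mathcal{M}(X)\cdot A}\Sigma_B^{-1/2}$ and then applying Hadamard's inequality to the PSD matrix $I_d+C$, and (iii) $\log(1+x)\le x$---combine exactly to $\tfrac12\sum_i\sigma_i/e_i=\beta$ after substituting the calibrated $e_i$. Your diagnosis of where the slack lives is also accurate: when $A$ diagonalizes $\Sigma_{\mathcal{M}(X)}$ the Hadamard step is tight and only the concavity of $\log$ is lossy; for a generic unitary $A$, the off-diagonals of $\Sigma_{\mathcal{M}(X)\cdot A}$ are what Hadamard absorbs. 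One minor point worth making explicit in a write-up is which basis $\Sigma_B=\operatorname{diag}(e_1,\dots,e_d)$ lives in; the algorithm and your proof both implicitly take it to be the $A$-rotated basis, which is consistent with how the $\sigma_i$ are defined.
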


\subsection{Auto-PAC vs.\ Efficient-PAC: Conservativeness}\label{app:Auto_vs_Eff}

Efficient-PAC induces additional conservativeness relative to Auto-PAC. When Efficient-PAC enforces $\mathrm{MI}(X;\mathcal{M}(X)+B)\le \beta$, the proof of Theorem~\ref{thm:PAC_Alg_thm1} in \cite{sridhar2024pac} (Theorem 1) yields
\[
\begin{aligned}
\mathtt{MI}(X;\mathcal{M}(X)+B)
&= \mathtt{MI}(X;\mathcal{M}(X) \cdot  A + B \cdot  A)\\
&\le \tfrac{1}{2}\log\det \Big(I_d + \Sigma_{\mathcal{M}(X)\cdot A}\,\Sigma_B^{-1}\Big)\\
&\le \tfrac{1}{2}\log\det \Big(I_d + \operatorname{diag} \big(\Sigma_{\mathcal{M}(X)\cdot A}\big)\,\Sigma_B^{-1}\Big)\\
&= \tfrac{1}{2}\log \prod_{i} \Big(1+\tfrac{\sigma_i}{e_i}\Big)\\
&= \tfrac{1}{2}\sum_{i}\log \Big(1+\tfrac{\sigma_i}{e_i}\Big)\\
&\le \tfrac{1}{2}\sum_{i}\tfrac{\sigma_i}{e_i}\\
&= \beta,
\end{aligned}
\]
where $\sigma_i=[\operatorname{diag}(\Sigma_{\mathcal{M}(X)\cdot A})]_i$ and $\Sigma_B=\operatorname{diag}(e_1,\ldots,e_d)$. The second inequality is Hadamard's inequality (tight only if $\Sigma_{\mathcal{M}(X)\cdot A}$ is diagonal in the chosen basis), and the last inequality uses $\log(1+x)\le x$ (tight only at $x=0$). Minimizing $\sum_i e_i$ under the linearized constraint $\tfrac12\sum_i \sigma_i/e_i=\beta$ gives the closed form
$e_i=\frac{\sum_j\sqrt{\sigma_j}}{2\beta}\,\sqrt{\sigma_i}$, so that $\sum_{i=1}^d\frac{\sigma_i}{2e_i}=\beta$.
Thus, Efficient-PAC is \textit{weakly more conservative} than Auto-PAC, which (approximately) works in the eigenbasis of $\Sigma_{\mathcal{M}(X)}$ and avoids the Hadamard slack.

Moreover, since $B\cdot A$ is constructed with covariance $\Sigma_{B\cdot A}=A^\top\Sigma_B A$ and $\Sigma_{\mathcal{M}(X)\cdot A}=A^\top\Sigma_{\mathcal M}A$, the exact log-det term is basis-invariant under joint congruence:
\[
\begin{aligned}
    \frac12\log\det \Big(I_d+\Sigma_{\mathcal{M}(X)\cdot A}\,\Sigma_{B\cdot A}^{-1}\Big)
    &=\frac12\log\frac{\det(\Sigma_{B\cdot A}+\Sigma_{\mathcal{M}(X)\cdot A})}{\det(\Sigma_{B\cdot A})}\\
    &=\frac12\log\det \Big(I_d+\Sigma_{\mathcal M}\,\Sigma_B^{-1}\Big)\\
    &= \mathtt{LogDet}(\mathcal{M}(X), B).
\end{aligned}
\]
Therefore, Efficient-PAC implements a budget $\beta$ that upper-bounds the exact Gaussian $\mathtt{LogDet}(\mathcal{M}(X), B)$, with conservativeness decomposing into the Hadamard step and the $\log(1+x)\le x$ linearization.

\begin{remark}
    All our comparisons of Auto-PAC and SR-PAC that rely on the conservativeness of $\mathtt{LogDet}(\mathcal{M}(X), B)$ carry over verbatim for Efficient-PAC because Efficient-PAC implements $\mathtt{LogDet}(\mathcal{M}(X), B)\leq \beta$, where the inequality is in general non-attainable. 
    Thus, conservativeness-related results for $\mathtt{LogDet}(\mathcal{M}(X), B)$ remain valid \textit{a fortiori} for the $\beta$ implemented by Efficient-PAC.
\end{remark}

\begin{remark}
Given any privacy budget, the upper bound implemented by Efficient-PAC induces more conservativeness than directly implementing $\mathtt{LogDet}(\mathcal{M}(X), B)$.
However, there is no universal ordering between the \textit{true} mutual informations $\mathrm{MI}(X;\mathcal{M}(X)+B_{\mathrm{Auto}})$ and $\mathrm{MI}(X;\mathcal{M}(X)+B_{\mathrm{Eff}})$, where $B_{\mathrm{Auto}}$ and $B_{\mathrm{Eff}}$ are the Gaussian noise determined by Auto-PAC and Efficient-PAC for the same privacy budget.
This is because the Gaussianity gaps (explicitly formulated by (\ref{eq:gap_def})) of Auto-PAC and Efficient-PAC can be in general different magnitudes.
\end{remark}

\section{Technical Constructions of Reference Distributions for Intrinsic Privacy
}\label{app:technical_reference}

Intrinsic privacy is defined as
\[
\mathtt{IntP}_f(\mathcal D\Vert\mathcal R)\;=\;-\;\mathtt{D}_f(\mathcal D\Vert\mathcal R),
\]
where $\mathcal R$ is a \textit{reference distribution} that plays the role of an a priori baseline and $\mathtt{D}_f$ is an $f$–divergence (KL in our evaluations).
To make $\mathtt{IntP}_f$ well-defined and mechanism-independent, one must choose $\mathcal R$ so that (i) $\mathrm{supp}(\mathcal D)\subseteq \mathrm{supp}(\mathcal R)$ and (ii) $\mathtt{D}_f(\mathcal D\Vert\mathcal R)<\infty$.
This appendix gives three canonical constructions of $\mathcal R$ together with conditions that guarantee finiteness, and brief practical advice on when to use each choice.

We write $X\sim\mathcal D$ for the data distribution on $\mathbb R^d$.
A reference distribution $\mathcal R$ has density $r(\cdot)$ w.r.t.\ Lebesgue measure (whenever it exists).
For KL, $\mathtt{D}_{\mathrm{KL}}(\mathcal D\Vert\mathcal R)=\mathbb E_{\mathcal D} \big[\ln\frac{d\mathcal D}{d\mathcal R}(X)\big]$ and $H(\mathcal R)$ denotes the (differential) entropy of $\mathcal R$ (log base as in the main text).

\begin{proposition}[Finiteness criteria for $\mathrm{KL}$]
\label{prop:KL_finiteness}
If $\mathcal D\ll\mathcal R$ and $\mathbb E_{\mathcal D} \big[\,|\ln r(X)|\,\big]<\infty$, then $\mathtt{D}_{\mathrm{KL}}(\mathcal D\Vert\mathcal R)<\infty$.
Consequently, any construction of $\mathcal R$ that ensures full support on $\mathbb R^d$ and mild tail control on $r$ suffices for finiteness of $\mathtt{IntP}_{\mathrm{KL}}$.
\end{proposition}

\begin{proof}
Since $\mathcal R$ has Lebesgue density $r$ and $\mathcal D\ll\mathcal R$, we also have $\mathcal D\ll$ Lebesgue; let $p$ denote the Lebesgue density of $\mathcal D$.
By the chain rule for Radon–Nikodym derivatives,
\[
\frac{d\mathcal D}{d\mathcal R}(x) \;=\; \frac{d\mathcal D/dx}{d\mathcal R/dx}(x) \;=\; \frac{p(x)}{r(x)} \quad\text{a.e.}
\]
Hence
\[
\begin{aligned}
    \mathtt{D}_{\mathrm{KL}}(\mathcal D\Vert\mathcal R)
&=
\int p(x)\,\ln \frac{p(x)}{r(x)}\,dx\\
&=
\underbrace{\int p(x)\ln p(x)\,dx}_{-\;\mathcal{H}(\mathcal D)} - \underbrace{\int p(x)\ln r(x) dx}_{\mathbb E_{\mathcal D}[\ln r(X)]}.
\end{aligned}
\]
By assumption, $H(\mathcal D)>-\infty$ and $\mathbb E_{\mathcal D}[\,|\ln r(X)|\,]<\infty$, so both terms on the right-hand side are finite (the first from below, the second in absolute value), and their difference is finite. Therefore $\mathtt{D}_{\mathrm{KL}}(\mathcal D\Vert\mathcal R)<\infty$.
\end{proof}

In the KL case, our residual privacy lower bound involves a constant offset $V=H(\mathcal R)$ (independent of both $\mathcal D$ and the mechanism), so we also highlight when $H(\mathcal R)<\infty$.

\subsubsection*{(a) Maximum-entropy Gaussian}

The maximum-entropy Gaussian is defined as
\[
  \mathcal R \;=\;\mathcal N(\mu,\Sigma),
  \qquad
  \mu=\mathbb E_{\mathcal D}[X],\;
  \Sigma=\mathrm{Cov}_{\mathcal D}(X).
\]
The density function is
\[
  r(x)=\frac{1}{\sqrt{(2\pi)^d\det\Sigma}}\;
  \exp \Bigl(-\tfrac12(x-\mu)^\top\Sigma^{-1}(x-\mu)\Bigr),
\]
with the support $\mathrm{supp}(\mathcal R)=\mathbb R^d$.
The corresponding entropy is 
\[
  H(\mathcal R)=\tfrac12\,\ln \big((2\pi e)^d\det\Sigma\big)\;<\infty.
\]
If $\mathcal D$ is absolutely continuous and $\mathbb E_{\mathcal D} \big[\|X\|^2\big]<\infty$, then $\mathtt{D}_{\mathrm{KL}}(\mathcal D\Vert\mathcal R)<\infty$.

It is a natural default when second moments exist; full support guarantees $\mathrm{supp}(\mathcal D)\subseteq\mathrm{supp}(\mathcal R)$ automatically.
In practice, ensure $\Sigma\succ0$ via standard shrinkage if needed.

\subsubsection*{(b) Smooth pull-back of the unit-cube uniform}

The smooth pull-back construction is defined as follows: let $U\sim\mathrm{Unif}((0,1)^d)$ and choose a $C^1$ bijection
\[
  T:(0,1)^d\to\mathbb R^d,\qquad \det J_T(u)>0.
\]
The reference is the push-forward $\mathcal R=T_{\#}U$ with density
\[
  r(x)=\bigl|\det J_{T^{-1}}(x)\bigr|,
\]
and support $\mathrm{supp}(\mathcal R)=\mathbb R^d$.
The corresponding entropy is
\[
  H(\mathcal R)=\mathbb E_U \big[\ln|\det J_T(U)|\big]\;<\infty
\]
whenever $\ln|\det J_T|$ is integrable on $(0,1)^d$.
If $\mathcal D\ll\mathcal R$ and $\mathbb E_{\mathcal D}[\,|\ln r(X)|\,]<\infty$, then $\mathtt{D}_{\mathrm{KL}}(\mathcal D\Vert\mathcal R)<\infty$.

It is useful when one wishes to encode geometry or tail behavior via the map $T$ while retaining full support and finite $H(\mathcal R)$ through an integrability check on $\ln|\det J_T|$.

\subsubsection*{(c) Truncated uniform on a bounded set}

The truncated uniform is defined as follows: let $B\subset\mathbb R^d$ be compact with $\mathrm{supp}(\mathcal D)\subseteq B$, and set
\[
  \mathcal R=\mathrm{Unif}(B),\qquad
  r(x)=
  \begin{cases}
    1/\mathrm{vol}(B), & x\in B,\\[2pt]
    0, & x\notin B.
  \end{cases}
\]
The support is $\mathrm{supp}(\mathcal R)=B$.
The corresponding entropy is
\[
  H(\mathcal R)=\ln \big(\mathrm{vol}(B)\big)\;<\infty.
\]
Moreover,
\[
  \mathtt{D}_{\mathrm{KL}}(\mathcal D\Vert\mathcal R)
  = -\,H(\mathcal D) + \ln \big(\mathrm{vol}(B)\big),
\]
so finiteness requires $H(\mathcal D)<\infty$.

It is appropriate only when the domain is naturally bounded and the data distribution has finite entropy; otherwise, one should prefer the Gaussian or pull-back constructions.

\section{More on Non-Gaussianity Correction}\label{app:non_gaussianity_correction}

In Section \ref{sec:gap_reduction}, we propose two approaches to approximate the Gaussianity gap $\mathtt{Gap}_{\mathtt{d}}$, which are certified replacements of $\mathtt{D}_{Z}$ to find a tighter mutual information after Auto-PAC privatization.
Theorem \ref{thm:DV} uses Donsker–Varadhan (DV) representation $\mathtt{D}_Z=\sup_f\{\mathbb{E}_{P_{\mathcal{M},B}}f-\log\mathbb E_{\widetilde Q_{\mathcal{M}}}e^f\}$, so that any value of the DV objective at a trained critic $f_\psi$ is a valid lower bound on $\mathtt{D}_{Z}$. 
Under a mild transport condition for $P_{\mathcal{M}, B}$ and $\widetilde{Q}_{\mathcal{M}}$, Theorem \ref{thm:SWD} use the sliced Wasserstein distance (SWD) as the estimation $\widehat{\mathtt{D}}_{Z}$, which is unbiased in the minibatch limit. 
In addition, the estimation achieves a certified $0\leq \widehat{\mathtt{D}}_{Z}\leq \mathtt{D}_{Z}$.

Consequently, our improved mutual information estimate
\[
\mathtt{IMI}(\widehat{\mathtt{D}}_Z) = \mathtt{LogDet}(\mathcal{M}(X),B) - \widehat{\mathtt{D}}_Z
\]
is a provable upper bound on $\mathtt{MI}(X;Z)$ whenever $\widehat{\mathtt{D}}_Z$ is one of the certified corrections above.

\begin{algorithm}[ht]
\caption{DV Gap Correction (minibatch lower bound on $\mathtt{D}_Z$)}
\label{alg:dv_gap_minimal}
\begin{algorithmic}[1]
\REQUIRE Oracle for i.i.d. samples $Z\sim P_{\mathcal{M},B}$; function class $\mathcal{F}=\{f_{\phi}\}$; steps $T$; batch size $m$; step size $\eta$; confidence penalty $c_{m,\delta}$
\STATE Draw an initial batch $\{Z_i\}_{i=1}^{m_0}\sim P_{\mathcal{M},B}$ and estimate $\widehat{\mu}_Z,\widehat{\Sigma}_Z$; define $\widetilde{Q}_{\mathcal{M}}\equiv\mathcal{N}(\widehat{\mu}_Z,\widehat{\Sigma}_Z)$
\STATE Initialize $\phi$
\FOR{$t=1,\dots,T$}
  \STATE Sample $\{Z^{(P)}_j\}_{j=1}^{m}\sim P_Z$
  \STATE Sample $\{Z^{(Q)}_j\}_{j=1}^{m}\sim \widetilde{Q}_{\mathcal{M}}$
  \STATE $\widehat{b} \leftarrow \frac{1}{m}\sum_{j=1}^{m} f_{\phi}(Z^{(P)}_j) - \log\!\Big(\frac{1}{m}\sum_{j=1}^{m} e^{f_{\phi}(Z^{(Q)}_j)}\Big)$
  \STATE $\phi \leftarrow \phi + \eta\,\nabla_{\phi}\widehat{b}$
\ENDFOR
\STATE Evaluate $\widehat{b}_{\mathrm{val}}$ on held-out minibatches; set $\underline{\mathtt{D}}_Z \leftarrow \max\{\widehat{b}_{\mathrm{val}} - c_{m,\delta},0\}$
\STATE \textbf{Return} $\underline{\mathtt{D}}_Z$
\end{algorithmic}
\end{algorithm}

\begin{algorithm}[ht]
\caption{Sliced Wasserstein Gap Correction (training-free lower bound on $\mathtt{D}_Z$)}
\label{alg:sw_gap_minimal}
\begin{algorithmic}[1]
\REQUIRE Oracle for i.i.d. samples $Z\sim P_{\mathcal{M},B}$; number of projections $M$; samples per slice $n$; confidence penalty $\xi_{n,\delta}$
\STATE Draw an initial batch $\{Z_i\}_{i=1}^{n_0}\sim P_{\mathcal{M},B}$ and estimate $\widehat{\mu}_Z,\widehat{\Sigma}_Z$; set $W\leftarrow \widehat{\Sigma}_Z^{-1/2}$
\FOR{$m=1,\dots,M$}
  \STATE Draw $\theta_m$ uniformly on $\mathbb{S}^{d-1}$
  \STATE Draw $n$ fresh samples $Z_i\sim P_Z$ and set $u_i=\theta_m^{\top}W(Z_i-\widehat{\mu}_Z)$
  \STATE Draw $n$ i.i.d. samples $s_i\sim \mathcal{N}(0,1)$
  \STATE Sort $u_{(1)}\le\cdots\le u_{(n)}$ and $s_{(1)}\le\cdots\le s_{(n)}$; set $w_m^{2}\leftarrow \frac{1}{n}\sum_{i=1}^{n}(u_{(i)}-s_{(i)})^{2}$
\ENDFOR
\STATE $\widehat{\mathrm{SW}}_{2}^{2}\leftarrow \frac{1}{M}\sum_{m=1}^{M} w_m^{2}$
\STATE $\underline{\mathtt{D}}_Z \leftarrow \max\{\frac{1}{2}\widehat{\mathrm{SW}}_{2}^{2}-\xi_{n,\delta},0\}$
\STATE \textbf{Return} $\underline{\mathtt{D}}_Z$
\end{algorithmic}
\end{algorithm}

Both approaches admit short, minibatch estimators:
\begin{itemize}
  \item \textbf{DV Correction.} Train a critic $f_{\phi}$ by maximizing
  \[
    \widehat{\mathcal{J}}
    =
    \frac{1}{m}\sum_{i=1}^{m} f_{\phi}(Z_i)
    - \log\!\Big(\frac{1}{m}\sum_{i=1}^{m} e^{\,f_{\phi}(\widetilde{Z}_i)}\Big),
  \]
  where $Z_i\sim P_Z$ and $\widetilde{Z}_i\sim \widetilde{Q}_{\mathcal{M}}$.
  After $T$ steps, set $\widehat{\mathtt{D}}_Z \leftarrow \widehat{\mathcal{J}}(f_{\phi})$.
  Algorithm \ref{alg:dv_gap_minimal} shows an example.
  \item \textbf{SWD Correction.} Sample $K$ directions $v_k \sim \mathrm{Unif}(\mathbb{S}^{d-1})$, project both batches, sort each projection, and average 1D squared distances:
  \[
    \widehat{\mathrm{SW}}_2^2
    =
    \frac{1}{K}\sum_{k=1}^{K}
    \frac{1}{m}\sum_{j=1}^{m}
    \big(\langle v_k,Z\rangle_{(j)} - \langle v_k,\widetilde{Z}\rangle_{(j)}\big)^2.
  \]
  Convert $\widehat{\mathrm{SW}}_2^2$ to $\widehat{\mathtt{D}}_Z$ using the calibration stated in Theorem~\ref{thm:SWD}.
  Algorithm \ref{alg:sw_gap_minimal} gives an example.
\end{itemize}
Each iteration uses a single minibatch pass and either a small critic update (DV) or $K$ sorts of length $m$ (SWD); no backpropagation through $\mathcal{M}$ and no nested inner loops.

\section{Finite-Sample Guarantees and Robustness for SR-PAC}

\subsection{Follower Generalization and Approximate Optimization}\label{app:finite_errors}

Fix a perturbation rule $Q\in\Gamma$. The Follower's objective is
\[
\pi^*(Q)\in\arg\min_{\pi\in\Pi} W(Q,\pi),
\]
where
\[
\begin{aligned}
    W(Q,\pi) \equiv \mathbb{E}_{X\sim \mathcal{D},\,B\sim Q}\big[-\log \pi(X\mid \mathcal{M}(X)+B)\big].
\end{aligned}
\]
Given $m$ i.i.d. samples $(X_i,B_i,Y_i)_{i=1}^m$ with $X_i\sim\mathcal{D}$, $B_i\sim Q$, and
$Y_i=M(X_i)+B_i$, define the empirical risk
\[
\widehat W(Q,\pi)\;=\;\frac{1}{m}\sum_{i=1}^m \big[-\log \pi(X_i\mid Y_i)\big],
\quad
\hat\pi\in\arg\min_{\pi\in\Pi}\widehat W(Q,\pi).
\]
Let $G_\Pi \equiv \{\, g_\pi(x,y) = -\log \pi(x\mid y):\pi\in\Pi\}$ and denote by
$\widehat{\mathcal{R}}_m(G_\Pi)$ the empirical Rademacher complexity of $G_\Pi$ on $m$ samples.

\begin{assumption}[bounded log-likelihood]\label{assp:bounded_log_likelihood}
There exists $B>0$ such that for all $\pi\in\Pi$ and all $(x,y)$ in the support,
$-\log \pi(x\mid y)\in[0,B]$.
\end{assumption}
When densities are unbounded, Assumption \ref{assp:bounded_log_likelihood} is enforced by standard truncation or by lower-bounding the decoder's variance/softmax temperature over a bounded input domain.

\begin{lemma}[Follower's Decoder PAC generalization]\label{lem:decoder-pac}
Fix a Leader's perturbation rule $Q$. Draw i.i.d. samples $(X_{i}, B_{i}, Y_{i})$ with $X_{i}\sim \mathcal{D}, B_{i}\sim Q$, $Y_{i} = \mathcal{M}(X_{i}) + B_{i}$.
Under Assumption \ref{assp:bounded_log_likelihood}, for any $\delta\in(0,1)$, with probability at least $1-\delta$ over the draw of the $m$ samples,
\[
\begin{aligned}
    \bigg|\inf_{\pi\in\Pi} W(Q,\pi) -\widehat W\big(Q,\hat\pi\big)\bigg| 
    &\leq \underbrace{4\,\widehat{\mathcal{R}}_m(G_\Pi)}_{\textup{capacity}}
+
\underbrace{2B\sqrt{\tfrac{2\log(1/\delta)}{m}}}_{\textup{concentration}}\\
&\equiv \varepsilon_{m,\delta}.
\end{aligned}
\]
\end{lemma}

\begin{proof}

Let $G_{\Pi}=\{g_\pi(x,y)=-\log \pi(x\mid y):\pi\in\Pi\}$ with $g_\pi\in[0,B]$ by assumption, and let
\[
\widehat{\mathcal{R}}_m(G_\Pi)\;\equiv\;
\mathbb{E}_{\sigma}\Big[\sup_{g\in G_\Pi}\frac{1}{m}\sum_{i=1}^m \sigma_i\, g(X_i,Y_i)\Big]
\]
be the (empirical) Rademacher complexity on the sample $(X_i,Y_i)_{i=1}^m$, where $\sigma_i\in\{\pm1\}$ are i.i.d. Rademacher variables.
By standard symmetrization and McDiarmid's inequality (bounded differences $B/m$), with probability at least $1-\delta$,
\begin{equation}\label{eq:uniform-deviation}
\sup_{\pi\in\Pi}\big|W(Q,\pi)-\widehat W(Q,\pi)\big|
\;\le\; 2\,\widehat{\mathcal{R}}_m(G_\Pi)\;+\; B\sqrt{\tfrac{2\log(1/\delta)}{m}}.
\end{equation}
On the same event, let $\pi^*\in\arg\min_{\pi}W(Q,\pi)$ and $\hat\pi\in\arg\min_{\pi}\widehat W(Q,\pi)$. Then
\[
\begin{aligned}
    \inf_{\pi}W(Q,\pi)-\widehat W(Q,\hat\pi)
&= W(Q,\pi^*)-\widehat W(Q,\hat\pi)\\
&\geq - \sup_{\pi} \big|W-\widehat W\big| -\sup_{\pi}\big|W-\widehat W\big|\\
&\geq -2\Delta,
\end{aligned}
\]
where $\Delta=2\,\widehat{\mathcal{R}}_m(G_\Pi)+B\sqrt{2\log(1/\delta)/m}$ is the right-hand side of \eqref{eq:uniform-deviation}.
Thus
\[
\begin{aligned}
    \Bigl|\inf_{\pi\in\Pi} W(Q,\pi)\;-\;\widehat W\big(Q,\hat\pi\big)\Bigr|
&\leq 2\Delta\\
&=4\,\widehat{\mathcal{R}}_m(G_\Pi)\;+\;2B\sqrt{\tfrac{2\log(1/\delta)}{m}}.
\end{aligned}
\]

\end{proof}

\begin{algorithm}[ht]
\caption{Monte Carlo SR-PAC (with PAC-adjusted Penalty)}\label{alg_app_SR_PAC}
\begin{algorithmic}[1]
\REQUIRE Privacy budget $\hat{\beta}$, parametrized decoder family $\Pi_{\phi}$, 
         perturbation rule family $\Gamma_{\lambda}$, utility loss $\mathcal{K}(\cdot)$, 
         learning rates $\eta_{\phi}, \eta_{\lambda}$, penalty weight $\sigma$, 
         iterations $T_{\lambda}, T_{\phi}$, batch size $m$
\STATE Initialize parameters $\lambda, \phi \sim \text{init}()$
\FOR{$t = 1, \ldots, T_{\lambda}$}
    \IF{$t \bmod T_{\phi} = 0$}
        \STATE \textbf{Update Decoder:}
        \FOR{$i = 1, \ldots, T_{\phi}$}
            \STATE Sample $\{(x_j, b_j, y_j)\}_{j=1}^m$ where $x_j \sim \mathcal{D}$, $b_j \sim Q_{\lambda}$, $y_j = \mathcal{M}(x_j) + b_j$
            \STATE $\widehat{W} = \frac{1}{m}\sum_{j=1}^m [-\log \pi_{\phi}(x_j | y_j)]$
            \STATE $\phi \leftarrow \phi - \eta_{\phi} \nabla_{\phi} \widehat{W}$
        \ENDFOR
    \ENDIF
    \STATE \textbf{Update Perturbation Rule:}
    \STATE Sample $\{(x_j, b_j, y_j)\}_{j=1}^m$ where $x_j \sim \mathcal{D}$, $b_j \sim Q_{\lambda}$, $y_j = \mathcal{M}(x_j) + b_j$
    \STATE $H_c = \frac{1}{m}\sum_{j=1}^m [-\log \pi_{\phi}(x_j | y_j)]$ 
    \STATE $\mathcal{L}_{\lambda} = \frac{1}{m}\sum_{j=1}^m \mathcal{K}(b_j) + \sigma \big(H_c - (\hat{\beta} + \varepsilon_{m,\delta})\big)^2_{+}$
    \STATE $\lambda \leftarrow \lambda - \eta_{\lambda} \nabla_{\lambda} \mathcal{L}_{\lambda}$
\ENDFOR
\RETURN Optimal parameters $(\lambda^*, \phi^*)$
\end{algorithmic}
\end{algorithm}

\paragraph{Approximate follower optimization.}
In practice, the decoder (Follower) update may return an $\varepsilon_{\mathrm{opt}}$-approximate minimizer
$\tilde\pi$ of the empirical objective, i.e.,
$\widehat W(Q,\tilde\pi)\le \inf_{\pi\in\Pi}\widehat W(Q,\pi)+\varepsilon_{\mathrm{opt}}$.
On the same event as Lemma~\ref{lem:decoder-pac}, we then have
\[
\inf_{\pi\in\Pi} W(Q,\pi)
\;\ge\;
\widehat W(Q,\tilde\pi) - \varepsilon_{m,\delta}-\varepsilon_{\mathrm{opt}} .
\]

\begin{corollary}[Finite-Sample Feasibility for Leader]\label{cor:finite_sample_leader}
Let $\hat\beta$ be the residual-PAC budget in the Leader's constraint
$\inf_{\pi\in\Pi}W(Q,\pi)\geq \hat\beta$.
If the batch cross-entropy $H_c=\widehat W(Q,\tilde{\pi})$, where $\tilde{\pi}$ satisfies $\widehat{W}(Q,\tilde\pi)\leq \inf_{\pi\in\Pi}\widehat{W}(Q,\pi)+\varepsilon_{\mathrm{opt}}$ (we take $\tilde{\pi}$ to be the decoder $\pi_{\phi}$ used in Algorithm~\ref{alg_app_SR_PAC}), satisfies
\[
H_c \;\geq\; \hat\beta + \varepsilon_{m,\delta} +\varepsilon_{\mathrm{opt}},
\]
then, with probability at least $1-\delta$,
$\inf_{\pi\in\Pi}W(Q,\pi)\geq \hat\beta$.
\end{corollary}

\paragraph{PAC-adjusted penalty.}
Define the PAC-adjusted threshold
\[
\hat{\beta}_{\mathrm{PAC}} \equiv\hat\beta+\varepsilon_{m,\delta} + \varepsilon_{\mathrm{opt}}.
\]
In the ideal ERM case, we have $\varepsilon_{\mathrm{opt}}$ and then $\hat\beta_{\mathrm{PAC}} =\hat\beta+\varepsilon_{m,\delta}$.
A convenient implementation is to use $\hat\beta_{\mathrm{PAC}}$ \textit{in place of} $\hat\beta$ inside the Leader's penalty; i.e., set
\[
\textup{penalty}= \sigma \bigl(H_c-\hat{\beta}_{\mathrm{PAC}}\bigr)_+^2,
\]
where $\sigma$ is the penalty weight and $(\cdot)_+=\max\{\cdot,0\}$.
Algorithm \ref{alg_app_SR_PAC} shows the SR-PAC with the PAC-adjusted penalty.
By Corollary~\ref{cor:finite_sample_leader}, any iterate with zero penalty (or sufficiently small penalty when smooth proxies are used) satisfies the population constraint with probability at least $1-\delta$ at sample size $m$.

\paragraph{Sample complexity (reading $\varepsilon_{m,\delta}$).}
If $\widehat{\mathcal{R}}_m(G_\Pi)\le C/\sqrt m$, then any
\[
m\;\;\geq\;\;\Bigl(\tfrac{4C}{\eta}+\tfrac{2B\sqrt{2\log(1/\delta)}}{\eta}\Bigr)^2
\]
ensures $\varepsilon_{m,\delta}\le\eta$, so enforcing $H_c\geq \hat{\beta}+\eta$ certifies feasibility with probability $\geq 1-\delta$.
By Corollary~\ref{cor:finite_sample_leader}, this variant certifies feasibility with probability at least $1-\delta$ at sample size $m$.

\subsection{Robustness of Sensitivity Conclusions in Theorem~\ref{thm:speed_general}}
\label{app:implemented_curves}

Theorem~\ref{thm:speed_general} characterizes the \textit{ideal} sensitivity behavior of SR-PAC through the
population-optimal curves $V_{\mathrm{SR}}(\beta)$ and $\mathtt{MI}_{\mathrm{SR}}(\beta)$.
In practice, SR-PAC is implemented with finite samples and Monte-Carlo estimates, and the inner/outer
optimization may terminate before reaching the exact Stackelberg optimum.
This subsection defines implementation-level error quantities for interpreting how the conclusions of Theorem~\ref{thm:speed_general} behave under finite-sample estimation and optimization effects.

\paragraph{Implemented SR-PAC curves.}
Let $\widehat Q_{\mathrm{SR}}(\beta)$ denote the (possibly non-Gaussian) noise distribution returned by the
SR-PAC solver when targeting privacy budget $\beta$, and let $B\sim \widehat Q_{\mathrm{SR}}(\beta)$.
Define the \textit{implemented} noise-power and leakage curves
\[
\widehat V_{\mathrm{SR}}(\beta)\;\equiv\;\mathbb{E}\big[\|B\|_2^2\big],\qquad
\widehat{\mathtt{MI}}_{\mathrm{SR}}(\beta)\;\equiv\;\mathtt{MI}\big(X;\mathcal{M}(X)+B\big).
\]
Accordingly, define the implemented sensitivities
\[
\widehat{\mathtt{Priv}}^{\mathrm{SR}}_{\beta}\;\equiv\;\frac{d}{d\beta}\widehat{\mathtt{MI}}_{\mathrm{SR}}(\beta),
\qquad
\widehat{\mathtt{Util}}^{\mathrm{SR}}_{\beta}\;\equiv\;\frac{d}{d\beta}\bigl(-\widehat V_{\mathrm{SR}}(\beta)\bigr).
\]

\paragraph{Calibration and optimization errors.}
We separate two sources of deviation from the ideal curves:
\begin{itemize}
\item \textbf{Calibration error:}
\[
\varepsilon_{\mathrm{cal}}(\beta)\;\equiv\;\widehat{\mathtt{MI}}_{\mathrm{SR}}(\beta)\;-\;\beta.
\]
This captures (true) leakage-budget misalignment at target $\beta$.
In implementations, $\varepsilon_{\mathrm{cal}}(\beta)$ is controlled indirectly via finite-sample/Monte-Carlo
estimates of $\mathtt{MI}$ together with concentration bounds.
\item \textbf{Optimization suboptimality:}
\[
\eta_{\mathrm{opt}}(\beta)\;\equiv\;\widehat V_{\mathrm{SR}}(\beta)\;-\;V_{\mathrm{SR}}(\beta)\;\ge 0,
\]
i.e., the excess noise power relative to the population-optimal SR-PAC curve. 
Here, $\eta_{\mathrm{opt}}(\beta)$ is global suboptimality of the returned noise curve, and it is different from $\varepsilon_{\mathrm{opt}}$ used earlier for approximate Follower training at a fixed $Q$.
\end{itemize}

To compare implemented SR-PAC sensitivities against Auto-PAC, recall the Auto-PAC curve $V_{\mathrm{PAC}}(\beta)\equiv \mathrm{tr}(\Sigma_{B_{\mathrm{PAC}}}(\beta))$, where
$Q(\beta)=\mathcal{N}(0,\Sigma_{B_{\mathrm{PAC}}}(\beta))$ solves
$\mathtt{LogDet}(\mathcal{M}(X),B_{\mathrm{PAC}})=\beta$, and the corresponding leakage curve
$\mathtt{MI}_{\mathrm{PAC}}(\beta)\equiv \beta-\mathtt{Gap}_{\mathrm d}(Q(\beta))$ as defined in the paragraph above
Theorem~\ref{thm:speed_general}.

\begin{corollary}[Robustness of sensitivity conclusions]
\label{cor:srpac_robustness}
In the setting of Theorem~\ref{thm:speed_general}, assume
\[
\varepsilon_{\mathrm{cal}}(\beta)\in\big[0,\mathtt{Gap}_{\mathrm d}(Q(\beta))\big),\quad
\eta_{\mathrm{opt}}(\beta)\in\big[0,\,V_{\mathrm{PAC}}(\beta)-V_{\mathrm{SR}}(\beta)\big).
\]
In addition, assume $\varepsilon_{\mathrm{cal}}$ is differentiable in $\beta$.
Then:
\begin{itemize}
\item[(i)] The implemented privacy sensitivity remains close to perfect budget alignment:
\[
\bigl|\widehat{\mathtt{Priv}}^{\mathrm{SR}}_{\beta}-1\bigr|\;\le\;\bigl|\varepsilon'_{\mathrm{cal}}(\beta)\bigr|.
\]
\item[(ii)] The utility-sensitivity advantage of SR-PAC persists:
\[
\widehat{\mathtt{Util}}^{\mathrm{SR}}_{\beta}\;\ge\;\mathtt{Util}^{\mathrm{PAC}}_{\beta},
\]
with equality only in the jointly Gaussian case (equivalently, when the Auto-PAC bound is tight).
\end{itemize}
\end{corollary}

Part~(i) shows that SR-PAC retains (near) one-for-one budget-to-leakage control whenever the calibration error
varies smoothly with $\beta$.
Part~(ii) states that as long as the optimization suboptimality $\eta_{\mathrm{opt}}(\beta)$ is smaller than the
theoretical gap $V_{\mathrm{PAC}}(\beta)-V_{\mathrm{SR}}(\beta)$, SR-PAC continues to convert privacy budget into
utility at least as efficiently as Auto-PAC.

\section{Sliced R-PAC and Sliced SR-PAC}\label{app:sliced_rpac}

In this section, we introduce sliced variants, \textit{Sliced R-PAC Privacy} and \textit{Sliced SR-PAC}, to improve scalability in high-dimensional settings, including both high-dimensional data (or secret) spaces and high-dimensional mechanism outputs.
The sliced variants are based on \textit{sliced mutual information (SMI)} \cite{goldfeld2021sliced}.

\subsection{Preliminaries: Sliced Mutual Information}

Before introducing SMI, we set up some important notations and notions.
Let $\mathcal{P}(\mathbb{R}^d)$ denote the set of all Borel probability measures on $\mathbb{R}^d$.
Throughout, we primarily focus on absolutely continuous random variables.
In addition, we use $X$ and $Y$ for two arbitrary random variables.
Given a map $f:\mathbb{R}^d\to\mathbb{R}^{d'}$ and a distribution $P_X\in\mathcal{P}(\mathbb{R}^d)$, we write $f_\sharp P_X$ for the pushforward of $P_X$ under $f$, defined by
$f_\sharp P_X(A)=P_X\big(f^{-1}(A)\big)$ for measurable sets $A$.
Let $\mathbb{S}^{d-1}$ be the unit sphere in $\mathbb{R}^d$, with surface area $S_{d-1}=2\pi^{d/2}/\Gamma(d/2)$, where $\Gamma$ is the gamma function.
Finally, we define the slice in direction $\theta$ by $\pi^\theta(x)\equiv\theta^\top x$.

\begin{definition}[Sliced MI \cite{goldfeld2021sliced}]\label{def:SMI}
Let $(X,Y)\sim P_{X,Y}\in \mathcal{P}(\mathbb{R}^{d_x}\times \mathbb{R}^{d_y})$.
Draw $\Theta\sim \mathrm{Unif}(\mathbb{S}^{d_x-1})$ and $\Phi\sim \mathrm{Unif}(\mathbb{S}^{d_y-1})$ independently, and independently of $(X,Y)$.
The sliced mutual information (SMI) between $X$ and $Y$ is defined by
\begin{equation}\label{eq:SMI}
\begin{aligned}
    &\mathtt{SMI}(X;Y)
\equiv \mathtt{MI}(\Theta^\top X;\Phi^\top Y \mid \Theta,\Phi)\\
&= \frac{1}{S_{d_x-1}S_{d_y-1}}
\oint_{\mathbb{S}^{d_x-1}}\oint_{\mathbb{S}^{d_y-1}}
\mathtt{MI}(\theta^\top X;\phi^\top Y) \, d\theta \, d\phi.
\end{aligned}
\end{equation}
\end{definition}

Intuitively, SMI measures dependence between high-dimensional variables by averaging the mutual information of their one-dimensional random projections.
By the data processing inequality, $\mathtt{SMI}(X;Y)\leq \mathtt{MI}(X;Y)$ \cite{goldfeld2021sliced}, so this slicing process necessarily discards some information.
Even so, it has been shown that SMI retains several central features of mutual information, including the ability to distinguish independence from dependence, as well as analogues of the chain rule and entropy decompositions \cite{goldfeld2021sliced}.
Furthermore, SMI also enjoys the variational representation, in terms of an optimization, similar to the Donsker-Varadhan respesentation of MI (See Proposition 3 of \cite{goldfeld2021sliced}).

\begin{definition}[Sliced Entropy and Sliced Conditional Entropy \cite{goldfeld2021sliced}]\label{DEF:SH}
Let $(X,Y)\sim P_{X,Y}\in \mathcal{P}(\mathbb{R}^{d_x}\times \mathbb{R}^{d_y})$.
Draw $\Theta\sim \mathrm{Unif}(\mathbb{S}^{d_x-1})$ and $\Phi\sim \mathrm{Unif}(\mathbb{S}^{d_y-1})$ independently and independently of $(X,Y)$.
The \textit{sliced entropy} of $X$ is defined as
\[
\begin{aligned}
    \mathtt{SH}(X) &\equiv \mathcal{H}(\Theta^\top X \mid \Theta)\\
&= \mathbb{E}_{\Theta}\!\left[\mathcal{H}(\theta^\top X)\right],
\end{aligned}
\]
and the \textit{sliced conditional entropy} of $X$ given $Y$ is
\[
\begin{aligned}
    \mathtt{SH}(X\mid Y)
&\equiv \mathcal{H}\!\big(\Theta^\top X \,\big|\, \Theta,\Phi,\Phi^\top Y\big)\\
&= \mathbb{E}_{\Theta,\Phi}\!\left[\mathcal{H}\!\big(\theta^\top X \,\big|\, \phi^\top Y\big)\right].
\end{aligned}
\]
\end{definition}

Conceptually, sliced entropy captures the average uncertainty in random one-dimensional projections of $X$.
The sliced conditional entropy $\mathtt{SH}(X\mid Y)$ represents the residual uncertainty in $\Theta^\top X$ once both the projection direction and the corresponding one-dimensional projection of $Y$ (namely, $\Phi$ and $\Phi^\top Y$) are revealed.

\subsection{Sliced R-PAC Privacy}

We now formally define Sliced R-PAC Privacy, a sliced variant of R-PAC Privacy that quantifies residual privacy from a one-dimensional lens.
Sliced R-PAC replaces the original high-dimensional posterior disadvantage with an average over one-dimensional random projections.
Intuitively, slicing reduces the dimensionality of both the secret space and the mechanism output space, thereby improving scalability when either (i) the secret space is high-dimensional, or (ii) the released mechanism output is high-dimensional, or both.
Importantly, the slicing in Sliced R-PAC privacy is purely a tool for privacy quantification: we do not slice the underlying data or the mechanism outputs themselves.

Let $X\sim \mathcal{D}\in\mathcal{P}(\mathbb{R}^{d_x})$ with $d_x\ge 1$, and let $Y=\mathcal{M}(X)\in\mathbb{R}^{d_y}$ with $d_y\ge 1$.
Then $(X,Y)\sim P_{X,\mathcal{M}(X)}\in\mathcal{P}(\mathbb{R}^{d_x}\times\mathbb{R}^{d_y})$.
Let $\Theta\sim \mathrm{Unif}(\mathbb{S}^{d_x-1})$ and $\Phi\sim \mathrm{Unif}(\mathbb{S}^{d_y-1})$ be independent and independent of $(X,Y)$.
For each $(\theta,\phi)$, define the one-dimensional projections
\[
X_{\theta} \equiv \theta^\top X, \qquad Y_{\phi} \equiv \phi^\top Y.
\]
For random directions, we write $X_\Theta\equiv\Theta^\top X$ and $Y_\Phi\equiv\Phi^\top Y$.
We write $P_{X_{\theta}}$ and $P_{X_{\theta}\mid Y_{\phi}}$ for the corresponding (projected) marginal and posterior distributions.

\paragraph{Sliced intrinsic privacy.}
Let $\mathcal{R}$ be a fixed reference distribution on $\mathcal{X}\subseteq \mathbb{R}^{d_x}$ satisfying
$\mathrm{supp}(\mathcal{D})\subseteq\mathrm{supp}(\mathcal{R})$ and
$\mathtt{D}_{f}(\mathcal{D}\|\mathcal{R})<\infty$.
For each direction $\theta$, define the pushed-forward laws
$\mathcal{D}_\theta \equiv (\pi^\theta)_\sharp \mathcal{D}$ and $\mathcal{R}_\theta \equiv (\pi^\theta)_\sharp \mathcal{R}$ on $\mathbb{R}$.
We define the \textit{sliced intrinsic privacy} by
\begin{equation}\label{eq:sl_IntP}
\mathtt{IntP}^{\mathrm{sl}}_{f}(\mathcal{D})
\equiv
-\,\mathbb{E}_{\Theta}\Big[\mathtt{D}_{f}\big(\mathcal{D}_{\Theta}\,\|\,\mathcal{R}_{\Theta}\big)\Big].
\end{equation}

\paragraph{Sliced posterior advantage and posterior disadvantage.}
For each pair $(\theta,\phi)$, viewing $Y_\phi$ as the one-dimensional disclosed output, let $\delta^{\rho}_{o}(\theta,\phi)$ denote the adversary's optimal inference success after observing $Y_\phi$ under the same PAC success criterion as in Definition~\ref{def:PAC_advantage}.
Define the corresponding \textit{sliced posterior advantage} by
\begin{equation}\label{eq:sl_Delta}
\Delta_{f}^{\delta,\mathrm{sl}}
\equiv
\mathbb{E}_{\Theta,\Phi}\Big[\mathtt{D}_f\big(\mathbf{1}_\delta \,\|\, \mathbf{1}_{\delta^{\rho}_{o}(\Theta,\Phi)}\big)\Big].
\end{equation}
We then define the \textit{sliced posterior disadvantage} (residual guarantee) by
\begin{equation}\label{eq:sl_R}
\mathtt{R}_{f}^{\delta,\mathrm{sl}}
\equiv
\mathtt{IntP}^{\mathrm{sl}}_{f}(\mathcal{D})
-
\Delta_{f}^{\delta,\mathrm{sl}}.
\end{equation}
Consequently, the sliced intrinsic privacy admits the exact decomposition
\begin{equation}\label{eq:sl_decomp}
\mathtt{IntP}^{\mathrm{sl}}_{f}(\mathcal{D})
=
\mathtt{R}_{f}^{\delta,\mathrm{sl}}
+
\Delta_{f}^{\delta,\mathrm{sl}}.
\end{equation}

\begin{definition}[Sliced-R-PAC Privacy]\label{def:sl_rpac}
A mechanism $\mathcal{M}$ is \textit{$(\mathtt{R}_{f}^{\delta,\mathrm{sl}},\rho,\mathcal{D})$ Sliced-R-PAC private}
if it is $(\delta,\rho,\mathcal{D})$ PAC private and its sliced posterior disadvantage is given by
\eqref{eq:sl_R}, i.e.,
$\mathtt{R}_{f}^{\delta,\mathrm{sl}}
\equiv
\mathtt{IntP}^{\mathrm{sl}}_{f}(\mathcal{D})
-
\Delta_{f}^{\delta,\mathrm{sl}}$.
\end{definition}

Sliced-R-PAC Privacy shares the same semantics as PAC and R-PAC Privacy frameworks but uses a different privacy quantification measure.

\paragraph{KL Divergence and Sliced Conditional Entropy.}
When $\mathtt{D}_{f} = \mathtt{D}_{\mathtt{KL}}$ and $\mathcal{H}(X)$ is finite, we can obtain a clean information-theoretic characterization similar to Corollary~\ref{prop:Residual_PAC_basic}.
In particular, a $\mathcal{M} : \mathcal{X} \rightarrow \mathcal{Y}$ satisfies $(\mathtt{R}_{f}^{\delta,\mathrm{sl}},\rho,\mathcal{D})$ Sliced-R-PAC private if 
\[
\mathtt{R}_{f}^{\delta,\mathrm{sl}}\geq \mathtt{SH}(X\mid Y) - \mathcal{H}(\mathcal{R}_{\Theta}).
\]
Thus, up to a constant (i.e., $\mathcal{H}(\mathcal{R}_{\Theta})$) determined by the reference distribution $\mathcal{R}$ and $\Theta$, the KL-divergence instantiation of Sliced-R-PAC can be fully captured by the sliced conditional entropy $\mathtt{SH}(X\mid Y)$.

\subsection{Sliced SR-PAC Privatization}\label{sec:sl_srpac}

We now present a sliced variant of SR-PAC for automatic privatization in high dimensions, which uses slicing to define a scalable optimization to implement conditional entropy constraints.

Consider output perturbation with $Y=\mathcal{M}(X)+B$, where $B\sim Q\in\Gamma$.
Let $\Pi$ denote a decoder family.
For $(\theta,\phi)$, the follower aims to infer $X_\theta=\theta^\top X$ from $Y_\phi=\phi^\top Y$.

\paragraph{Follower's Problem.}
For a fixed perturbation rule $Q$, define the sliced log-score objective
\begin{equation}\label{eq:W_sl}
\begin{aligned}
W_{\mathrm{sl}}(Q,\pi)\equiv&
\mathbb{E}_{\Theta,\Phi}\;
\mathbb{E}_{X\sim\mathcal{D},\,B\sim Q}
\Big[\\
&-\log \pi\big(\Theta^\top X \,\big|\, \Phi^\top(\mathcal{M}(X)+B),\,\Theta,\,\Phi\big)\Big].
\end{aligned}
\end{equation}
The Follower's best response is $\pi^*(Q)\in\arg\inf_{\pi\in\Pi} W_{\mathrm{sl}}(Q,\pi)$.
When $\Pi$ is sufficiently rich to realize the true projected posterior, the optimal value equals the sliced conditional entropy:
\begin{equation}\label{eq:W_sl_equals_SH}
\inf_{\pi\in\Pi} W_{\mathrm{sl}}(Q,\pi)=\mathtt{SH}(X\mid Y), \qquad Y=\mathcal{M}(X)+B,\; B\sim Q.
\end{equation}

\noindent\textbf{Leader's Problem.}
Given a sliced R-PAC conditional-entropy budget $\hat{\beta}$, the leader chooses $Q$ to solve
\begin{equation}\label{eq:sl_leader}
\inf_{Q\in\Gamma}\;
\mathbb{E}_{X\sim\mathcal{D},\,B\sim Q}\big[\mathcal{K}(B;\mathcal{M})\big]
\quad\text{s.t.}\quad
\inf_{\pi\in\Pi}W_{\mathrm{sl}}(Q,\pi)\ge \hat{\beta}.
\end{equation}

\noindent\textbf{Sliced Stackelberg Equilibrium.}
A pair $(Q^*,\pi^*)$ is a \textit{Sliced-SR-PAC Stackelberg equilibrium} if
\begin{equation}\label{eq:sl_stackelberg}
\begin{cases}
Q^*\in\arg\inf_{Q\in\Gamma}
\mathbb{E}\big[\mathcal{K}(B;\mathcal{M})\big],
\ \text{s.t.}\ 
W_{\mathrm{sl}}\bigl(Q,\pi^*(Q)\bigr)\ge \hat{\beta},\\
\pi^*(Q)\in\arg\inf_{\pi\in\Pi} W_{\mathrm{sl}}(Q,\pi).
\end{cases}
\end{equation}
By \eqref{eq:W_sl_equals_SH}, the equilibrium perturbation rule targets the desired sliced conditional entropy constraint while remaining scalable in high-dimensional secret and output spaces.

\begin{remark}[One-sided slicing]
For clarity of exposition, we focus on two-sided slicing in this section, i.e., projecting both the secret and the mechanism output via $(\Theta^\top X,\Phi^\top Y)$.
One-sided variants follow analogously:
(i) slicing only the secret uses $(\Theta^\top X, Y)$, and
(ii) slicing only the output uses $(X,\Phi^\top Y)$,
with the corresponding definitions obtained by removing the unnecessary averaging over directions.
\end{remark}

\section{Proof of Theorem \ref{thm:DV}}

Let $Z=\mathcal M(X)+B$ with deterministic $\mathcal M$ and $B\sim\mathcal{N}(0,\Sigma_B)$ where $\Sigma_B\succ 0$. Write $P_{\mathcal{M}, B}$ for the law of $Z$. Let the Gaussian surrogate be $\widetilde{Q}_{\mathcal{M}}=\mathcal{N}(\mu_Z,\Sigma_Z)$.
For any measurable $f:\mathbb{R}^d\to\mathbb{R}$ with $\mathbb{E}_{\widetilde{Q}_{\mathcal{M}}}[e^{f}]<\infty$, define the Donsker–Varadhan (DV) objective
\[
\mathcal{J} \big(f;P_{\mathcal{M}, B},\widetilde{Q}_{\mathcal{M}}\big)
=
\mathbb{E}_{P_{\mathcal{M}, B}}[f(Z)]-\log\mathbb{E}_{\widetilde Q_{\mathcal{M}}} \big[e^{f(Z)}\big].
\]
Let
\[
\widehat{\mathtt{D}}_Z(f)\equiv\mathcal{J}(f;P_{\mathcal{M}, B},\widetilde{Q}_{\mathcal{M}}).
\]

Next, we construct the finite-sample estimation of the DV objective.
Given finite samples $S_P$ from $P_{\mathcal{M}, B}$ and $S_Q$ from $\widetilde{Q}_{\mathcal{M}}$, let
\[
\widehat{\mathcal{J}}(f;S_P,S_Q)
\equiv
\frac{1}{|S_P|}\sum_{z\in S_P} f(z)
-
\log \Big(\frac{1}{|S_Q|}\sum_{z\in S_Q} e^{f(z)}\Big).
\]
Fix a function class $\mathcal{F}\subset\{f:\mathbb{R}^d\to\mathbb{R}\}$ with $0\in\mathcal{F}$. Draw four independent splits $S_P^{\mathrm{tr}},\;S_Q^{\mathrm{tr}},\;S_P^{\mathrm{val}},\;S_Q^{\mathrm{val}}$ with sizes $n_P^{\mathrm{tr}},n_Q^{\mathrm{tr}},n_P^{\mathrm{val}},n_Q^{\mathrm{val}}$, respectively, and fit
\[
\widehat{f}_{\mathrm{tr}}\in\arg\max_{f\in\mathcal{F}}\;
\widehat{\mathcal{J}} \big(f;S_P^{\mathrm{tr}},S_Q^{\mathrm{tr}}\big).
\]

Assume that for some $\Gamma_{\hat{\delta}}=\Gamma_{\hat{\delta}} \big(\mathcal{F},n_P^{\mathrm{val}},n_Q^{\mathrm{val}}\big)$,
\begin{equation}
\Pr \left(
\sup_{f\in\mathcal F}\Big|
\widehat{\mathcal{J}} \big(f;S_P^{\mathrm{val}},S_Q^{\mathrm{val}}\big)
-\mathcal{J} \big(f;P_{\mathcal{M}, B},\widetilde{Q}_{\mathcal{M}}\big)\Big|
\leq\Gamma_{\hat{\delta}}
\right)\geq 1-\hat{\delta}.
\label{eq:uniform-dev-app}
\end{equation}
Define the \textit{finite-sample lower-confidence estimator}
\[
\widehat{\mathtt{D}}_{\mathrm{LCE}}
\equiv
\Big[
\widehat{\mathcal{J}} \big(\widehat{f}_{\mathrm{tr}};S_P^{\mathrm{val}},S_Q^{\mathrm{val}}\big)
-\Gamma_{\hat{\delta}}
\Big]_+.
\]

\paragraph{Proof of (i)}
We apply the Gibbs variational principle.
For any $P\ll Q$ and measurable $f$ with $\mathbb{E}_Q[e^f]<\infty$,
\[
\mathbb{E}_P[f]-\log\mathbb{E}_Q[e^f]
\le\mathtt{D}_{\mathrm{KL}}(P\|Q),
\]
with equality at $f^*=\log\frac{dP}{dQ}+c$ (any constant $c$).

Taking the supremum over $f$ yields
\[
\sup_{f} \mathcal{J}(f;P,Q)=\mathtt{D}_{\mathrm{KL}}(P\|Q).
\]
Apply with $P=P_{\mathcal{M}, B}$, $Q=\widetilde{Q}_{\mathcal M}$. 
It is nonnegative because $f\equiv 0$ is admissible and $\mathcal J(0;P,Q)=0$.

\paragraph{Proof of (ii)}
This is immediate from Part (i):
\[
\begin{aligned}
\widehat{\mathtt{D}}_Z(f)=\mathcal{J}(f;P_{\mathcal{M}, B},\widetilde{Q}_{\mathcal{M}})
&\leq\sup_g \mathcal{J}(g;P_{\mathcal{M}, B},\widetilde{Q}_{\mathcal{M}})\\
&=\mathtt{D}_{\mathrm{KL}}(P_{\mathcal{M}, B}\|\widetilde{Q}_{\mathcal{M}}).
\end{aligned}
\]

\paragraph{Proof of (iii)}
Conditioning on the training splits $(S_P^{\mathrm{tr}},S_Q^{\mathrm{tr}})$, $\widehat{f}_{\mathrm{tr}}$ (a measurable function of the training data) is independent of the validation splits $(S_P^{\mathrm{val}},S_Q^{\mathrm{val}})$. On the event in \eqref{eq:uniform-dev-app}, we have for all $f\in\mathcal{F}$,
\[
\mathcal{J}(f;P_{\mathcal{M}, B},\widetilde{Q}_{\mathcal{M}})
\ge
\widehat{\mathcal{J}}(f;S_P^{\mathrm{val}},S_Q^{\mathrm{val}})-\Gamma_{\hat{\delta}}.
\]

Taking $f=\widehat{f}_{\mathrm{tr}}$ and then the supremum over $f$ on the left,
\[
\mathtt{D}_{\mathrm{KL}}(P_{\mathcal{M}, B}\|\widetilde{Q}_{\mathcal{M}})
=\sup_{f} \mathcal{J}(f;P_{\mathcal{M}, B},\widetilde{Q}_{\mathcal{M}})
\geq
\widehat{\mathcal{J}} \big(\widehat{f}_{\mathrm{tr}};S_P^{\mathrm{val}},S_Q^{\mathrm{val}}\big)-\Gamma_{\hat{\delta}}.
\]

Hence, on that event still,
\[
0
\leq
\Big[\widehat{\mathcal{J}} \big(\widehat{f}_{\mathrm{tr}};S_P^{\mathrm{val}},S_Q^{\mathrm{val}}\big)-\Gamma_{\hat{\delta}}\Big]_+
\leq
\mathtt D_{\mathrm{KL}}(P_{\mathcal{M}, B}\|\widetilde{Q}_{\mathcal{M}}),
\]
which is the claim with probability at least $1-\hat{\delta}$.

\section{Proof of Theorem \ref{thm:SWD}}

Given the true (perturbed output) distribution $P_{\mathcal{M}, B}$ and the Gaussian surrogate distribution $\widetilde{Q}_{\mathcal{M}} = \mathcal{N}(\mu_Z,\Sigma_Z)$ (with matched mean and covariance), define 
\[
\widehat{\mathtt{D}}_{Z} \equiv \frac{1}{2\lambda_{\max}(\Sigma_Z)} \, \mathrm{SW}_2^2\big(P_{\mathcal{M}, B},\widetilde Q_{\mathcal M}\big),
\]
where $\lambda_{\max}(\Sigma_Z)$ is the largest eigenvalue of $\Sigma_Z$ and $\mathrm{SW}_2$ denotes the sliced 2-Wasserstein distance.

By definition of the Wasserstein metric, $\mathrm{SW}_2^2(P_{\mathcal{M}, B},\widetilde{Q}_{\mathcal{M}}) \geq 0$, and $\lambda_{\max}(\Sigma_Z) > 0$ since $\Sigma_Z$ is positive semidefinite and non-degenerate. Hence,
\[
\widehat{\mathtt{D}}_{Z} \geq 0.
\]
In addition, it is well known (see e.g., \cite{bonneel2015sliced}) that the sliced Wasserstein distance provides a lower bound on the true 2-Wasserstein distance:
\[
\mathrm{SW}_2^2(P_Z,\widetilde Q_{\mathcal M}) \leq W_2^2(P_Z,\widetilde Q_{\mathcal M}).
\]
Finally, Lemma \ref{lemma:GaussianT2LambdaMax} (shown below) implies
\[
\begin{aligned}
    \frac{1}{2\lambda_{\max}(\Sigma_Z)} \mathrm{SW}_2^2\big(P_{\mathcal{M}, B},\widetilde Q_{\mathcal M}\big)\leq \mathtt{D}_{\mathrm{KL}}(P\|Q).
\end{aligned}
\]

Therefore, we obtain
\[
0\leq \widehat{\mathtt{D}}_{Z}\leq \mathtt D_Z
\]

\begin{lemma}\label{lemma:GaussianT2LambdaMax}
    Let $Q=\mathcal{N}(\mu,\Sigma)$ with $\Sigma\succ0$ and let $P$ be a probability measure on $\mathbb{R}^d$ with $P\ll Q$. Then
\[
W_2^2(P,Q)\leq 2 \lambda_{\max}(\Sigma)\mathtt{D}_{\mathrm{KL}}(P\|Q).
\]
\end{lemma}

\begin{proof}

Let $T(x)=\Sigma^{-1/2}(x-\mu)$.
Then, $T$ is invertible affine.
In addition, let $P'=T_{\#}P$ and $\gamma=\mathcal{N}(0,I)$. 
Thus, applying the change of variables yields
\[
\mathtt{D}_{\mathrm{KL}}(P'\|\gamma)=\mathtt{D}_{\mathrm{KL}}(P\|Q).
\]

Let $S(x)=\Sigma^{1/2}x+\mu$. For any coupling $\pi$ of $P'$ and $\gamma$, $(S\times S)_{\#}\pi$ is a coupling of $P$ and $Q$, and
\[
\begin{aligned}
    \int\|x-y\|^2\,\mathrm{d}\big((S\times S)_{\#}\pi\big)
&=\int\|\Sigma^{1/2}(u-v)\|^2\,\mathrm{d}\pi(u,v)
\\
&\leq \|\Sigma^{1/2}\|_{\mathrm{op}}^2\int\|u-v\|^2\,\mathrm{d}\pi(u,v),
\end{aligned}
\]
where $\|\cdot\|_{\mathrm{op}}$ some operator norm.
Taking the infimum over couplings yields
\[
W_2(P,Q)\leq \|\Sigma^{1/2}\|_{\mathrm{op}} W_2(P',\gamma).
\]
Hence,
\[
W_2^2(P,Q)\leq \lambda_{\max}(\Sigma)W_2^2(P',\gamma).
\]
Then, the Talagrand inequality \cite{talagrand1996transportation,otto2000generalization}implies
\[
W_2^2(P',\gamma)\le 2\,\mathtt{D}_{\mathrm{KL}}(P'\|\gamma).
\]
Therefore, we obtain
\[
\begin{aligned}
    W_2^2(P,Q)&\leq \lambda_{\max}(\Sigma) W_2^2(P',\gamma)\\
    &\leq 2 \lambda_{\max}(\Sigma) \mathtt{D}_{\mathrm{KL}}(P'\|\gamma)\\
&=2\lambda_{\max}(\Sigma) \mathtt{D}_{\mathrm{KL}}(P\|Q).
\end{aligned}
\]
\end{proof}

\qed

\section{Proof of Corollary \ref{prop:Residual_PAC_basic}}

By Theorem 1 of \cite{xiao2023pac}, a mechanism $\mathcal{M}$ satisfies $(\delta, \rho, \mathcal{D})$-PAC privacy where 
\[
\mathtt{D}_{\mathrm{KL}}(\mathbf{1}_{\delta} \| \mathbf{1}_{\delta_\rho^o})\leq \mathtt{MI}(X;\mathcal{M}(X)).
\]
Thus, 
\[
\begin{aligned}
    \mathtt{R}_{\mathrm{KL}}^{\delta} &\geq \mathtt{IntP}_{\mathrm{KL}}(\mathcal{D})- \inf_{P_W} \mathtt{D}_{\mathrm{KL}}\left(P_{X, \mathcal{M}(X)} \,\|\, P_X \otimes P_W\right) \\
    &\geq \mathtt{IntP}_{\mathrm{KL}}(\mathcal{D}) - \mathtt{MI}(X;\mathcal{M}(X)),
\end{aligned}
\]
where $\mathtt{IntP}_{\mathrm{KL}}(\mathcal{D}) = -\mathtt{D}_{\mathrm{KL}}(\mathcal{D}\| \mathcal{U})=\mathcal{H}(X)-\mathtt{V}$, where $\mathtt{V}=\log(|\mathcal{X}|)$ if $\mathcal{H}$ is Shannon entropy, and $\mathtt{V}=\log(\int_{\mathcal{X}}dx)$ if $\mathcal{H}$ is differential entropy. Thus, we get $\mathtt{R}_{f}^{\delta} \geq \mathcal{H}(X|\mathcal{M}(X)) - \mathtt{V}$.
\qed

\section{Proof of Proposition \ref{prop:MI_gap}}

Recall that $Z = \mathcal{M}(X) + B$ with $B\sim\mathcal{N}(0,\Sigma_{B})$ independent of $X$, where $\mathcal{M}$ is a deterministic mechanism.
Then, we have
\[
\mathtt{MI}(X; Z) = \mathcal{H}(Z) - \mathcal{H}(Z \mid X) = \mathcal{H}(Z) - \mathcal{H}(B).
\]
Now consider the Gaussian surrogate distribution
$\widetilde{Q}_{\mathcal{M}} = \mathcal{N}(\mu_Z, \Sigma_Z)$, where $\mu_Z = \mu_{\mathcal{M}(X)}$ and $\Sigma_Z = \Sigma_{\mathcal{M}(X)} + \Sigma_B$.
Its entropy is given by 
\[
\mathcal{H}(\widetilde{Z}) = \frac{1}{2} \log\left[(2\pi e)^d \det(\Sigma_Z)\right],
\]
with $\widetilde{Z}\sim\widetilde{Q}_{\mathcal{M}}$,
and similarly, $\mathcal{H}(B) = \frac{1}{2} \log\left[(2\pi e)^d \det(\Sigma_B)\right]$.
Hence, 
\[
\begin{aligned}
    \frac{1}{2} \log \det\left(I_d + \Sigma_{\mathcal{M}(X)} \Sigma_B^{-1} \right) &= \frac{1}{2} \log \left( \frac{\det(\Sigma_Z)}{\det(\Sigma_B)} \right) \\
    &= \mathcal{H}(\widetilde{Z}) - \mathcal{H}(B).
\end{aligned}
\]
So, we obtain 
\[
\begin{aligned}
    \mathtt{Gap}_{\mathrm{d}}&=\left[\mathcal{H}(\widetilde{Q}_{\mathcal{M}}) - \mathcal{H}(B)\right] - \left[\mathcal{H}(Z) - \mathcal{H}(B)\right]\\
    &= \mathcal{H}(\widetilde{Z}) - \mathcal{H}(Z).
\end{aligned}
\]

Let $\widetilde{q}$ be the density function of $\widetilde{Q}_{\mathcal{M}}$, and let $p$ be the density function of $P_{\mathcal{M},B}$.
Since $\widetilde{Q}_{\mathcal{M}} = \mathcal{N}(\mu_Z, \Sigma_Z)$ is Gaussian, we have
\[
\begin{aligned}
    \mathcal{H}(\widetilde{Z}) &= -\log \widetilde{q}(z) \\
    &= \frac{d}{2}\log(2\pi)+\frac{1}{2}\log\text{det}\;\Sigma_{Z}\frac{1}{2}\left(z-\mu_{Z}\right)^{\top}\Sigma^{-1}_Z(z-\mu_{Z}).
\end{aligned}
\]
Taking expectation under $p$ yields
\[
\begin{aligned}
    \mathcal{H}(Z, \widetilde{Z}) &= \frac{d}{2}\log(2\pi) \\
    &+ \frac{1}{2}\log\text{det}\;\Sigma_{Z}+\frac{1}{2}\mathbb{E}_{q}\left[\left(Z-\mu_{Z}\right)^{\top}\Sigma^{-1}_Z(Z-\mu_{Z})\right].
\end{aligned}
\]
since $\widetilde{Z}\sim\widetilde{Q}_{\mathcal{M}}$ matches $Z \sim P_{Z}$ in mean and covariance, we have
\[
\begin{aligned}
    \mathbb{E}_{q}\left[\left(Z-\mu_{Z}\right)^{\top}\Sigma^{-1}_Z(Z-\mu_{Z})\right] = \text{tr}\left(\Sigma^{-1}_{Z} \Sigma_{Z}\right) = \text{tr}(I_d) = d.
\end{aligned}
\]
Thus, 
\[
\begin{aligned}
    \mathcal{H}(Z, \widetilde{Z}) &= \frac{d}{2}\log(2\pi) + \frac{1}{2}\log\text{det}\;\Sigma_{Z} + \frac{d}{2}\\
    &= \mathcal{H}(\widetilde{Z}).
\end{aligned}
\]
Therefore, 
\[
\begin{aligned}
    \mathtt{D}_{\mathrm{KL}}(P_{\mathcal{M}, B} \,\|\, \widetilde{Q}_{\mathcal{M}}) &= \mathcal{H}(Z,\widetilde{Z}) - \mathcal{H}(Z)\\
    &=\mathcal{H}(\widetilde{Z}) - \mathcal{H}(Z)\\
    &= \mathtt{Gap}_{\mathrm{d}}.
\end{aligned}
\]
Therefore, $\mathtt{Gap}_{\mathrm{d}} = \mathtt{D}_{\mathrm{KL}}(P_{\mathcal{M}, B} \,\|\, \widetilde{Q}_{\mathcal{M}}) \geq 0$,
with equality if and only if $P_{\mathcal{M}, B} = \widetilde{Q}_{\mathcal{M}}$, i.e., $Z$ is exactly Gaussian with distribution $\mathcal{N}(\mu_Z, \Sigma_Z)$.
\qed

\section{Proof of Proposition \ref{prop:opt_PAC_algorithm}}

Since $B\sim \mathcal{N}(0, \Sigma_{B})$, we have $\mathbb{E}[\|B\|_2^2]
=\operatorname{tr}\bigl(\mathbb{E}[B\,B^\top]\bigr)
=\operatorname{tr}(\Sigma_{B})$.
Hence, minimizing $\mathbb{E}[\|B\|_2^2]$ over zero-mean Gaussian is equivalent to minimizing the trace $\operatorname{tr}(\Sigma_{B})$ over $\Sigma_{B} \succeq0$.

Recall that $Z = \mathcal{M}(X) + B$. Then, $Z$ has mean $\mu_Z = \mu_{\mathcal{M}(X)}$ and covariance $\Sigma_Z = \Sigma_{\mathcal{M}(X)} + \Sigma_B$, where $\Sigma_{\mathcal{M}(X)}$ denotes the covariance of $\mathcal{M}(X)$.
In addition, recall that $\widetilde{Q}_{\mathcal{M}} = \mathcal{N}(\mu_Z, \Sigma_Z)$ is the Gaussian distribution with the same first and second moments as $Z$.
Then, by standard Gaussian-entropy formulas, we have
\[
\begin{aligned}
    \mathtt{MI}(X;Z) &= H(Z) - H(Z|X) = \frac{1}{2}\log\frac{\mathrm{det}(\Sigma_{Z})}{\mathrm{det}(\Sigma_{B})}\\
    &=\frac{1}{2}\log\mathrm{det}(I + \Sigma_{\mathcal{M}(X)}\Sigma^{-1}_{B}).
\end{aligned}
\]
In particular, Algorithm \ref{alg:PAC_original} implements $\mathtt{MI}(X;Z) \leq \beta$.

Since both $\mathrm{tr}(\Sigma_{B})$ and $\log\mathrm{det}(I + \Sigma_{\mathcal{M}(X)}\Sigma^{-1}_{B})$ are unitarily invariant, we may diagonalize $\Sigma_{\mathcal{M}(X)}$ as
\[
\Sigma_{\mathcal{M}(X)} = U\mathrm{diag}(r_{1},\dots,r_{d})U^{T}, \; r_i>0,
\]
where $U$ is the orthogonal eigenvector matrix from the eigendecomposition of $\Sigma_{\mathcal{M}(X)}$.
Writing $\Sigma_{B} = \hat{U}\mathrm{diag}(\ell_{1}, \dots, \ell_{d})\hat{U}^{T}$ with $\ell_{i}>0$, the problem
\[
\min_{\Sigma_{B}\succeq 0} \mathrm{tr}(\Sigma_{B}), \text{ s.t. } \frac{1}{2}\log\mathrm{det}(1+\Sigma_{\mathcal{M}(X)}\Sigma^{-1}_{B}) = \beta,
\]
becomes
\[
\min_{\ell_{1},\dots, \ell_{d}>0} \sum^{d}_{i=1}\ell_{i}, \text{ s.t. } \frac{1}{2}\sum^{d}_{i=1}\log(1+\frac{r_i}{\ell_{i}}) = \beta.
\]
Hence, each coordinate $\ell_{i}$ appears only in the term $\log(1+\frac{r_{i}}{\ell_{i}})$.

Let $\lambda>0$ as the Lagrange multiplier.
The Lagrangian is
\[
\begin{aligned}
    &\mathcal{L}(\ell_{1}, \dots, \ell_{d}, \lambda) \\
    &= \sum^{d}_{i=1}\ell_{i} + \lambda\left(\frac{1}{2}\sum^{d}_{i=1}\log(1+\frac{r_{i}}{\ell_{i}})\mathtt{MI}(X;\widetilde{Z}) - \beta\right).
\end{aligned}
\]
Setting $\frac{\partial \mathcal{L}}{\partial \ell_{i}}=0$ gives 
\[
1 = \lambda \frac{r_{i}}{2\ell_{i}(\ell_{i}+r_{i})}  \Rightarrow 2\ell_{i}(\ell_{i}, r_{i}) = \lambda r_{i}.
\]
Equivalently, $\ell^{2}_{i} + r_{i}\ell_{i} - \lambda\frac{r_{i}}{2}=0$, which gives a unique $\ell_{i}(\lambda) = \frac{-r_{i}+\sqrt{r^{2}_{i}+2\lambda r_{i}}}{2}>0$.

Let 
\[
F(\lambda) = \frac{1}{2}\sum^{d}_{i=1}\log(1+\frac{r_{i}}{\ell_{i}(\lambda)}).
\]
We can have the following:
\begin{itemize}
    \item As $\lambda\rightarrow 0^{+}$, each $\ell_{i}(\lambda)\rightarrow 0^{+}$, leading to $F(\lambda)\rightarrow +\infty$.

    \item As $\lambda\rightarrow +\infty$, each $\ell_{i}(\lambda)\rightarrow +\infty$, leading to $F(\lambda)\rightarrow +\infty$. 
\end{itemize}
%
%
In addition, $\frac{d F(\lambda)}{d \lambda}<0$ throughout.
Thus, $F$ is strictly decreasing from $+\infty$ down to $0$. 
Therefore, there is a unique $\lambda^*>0$ such that $F(\lambda^*)=\beta$. 
At this $\lambda^*$, each $\ell^{*}_{i}=\ell^*_{i}(\lambda^*)$ is unique.
Thus, $\Sigma^*_{B}=\hat{U}\mathrm{diag}(\ell^*_{1},\dots, \ell^*_{d})\hat{U}^{T}$ is unique minimizer of $\mathrm{tr}(\Sigma_{B})$.
By construction, 
\[
\frac{1}{2}\log\mathrm{det}\left(1+\Sigma_{\mathcal{M}(X)}(\Sigma^*_{B})^{-1}\right) = \beta.
\]
Therefore, it is also the unique minimizer of (\ref{eq:opt_PAC_algorithm}).
\qed

\section{Proof of Proposition \ref{prop:gamma_PAC_conservative}}

For additive Gaussian noise with covariance $\Sigma_B\succ 0$,
\[
\mathtt{MI}(X;\mathcal M(X)+B)\ \le\ \tfrac12\,\log\det \big(I+\Sigma_M\Sigma_B^{-1}\big).
\]
The trace–optimal (eigen-aligned) choice that enforces the log-det constraint at level $\hat\beta$ has eigenvalues
$e_i^*=\alpha\sqrt{\lambda_i}$ in the $U$–basis, with $\alpha=\frac{S}{2\hat{\beta}}$.
Hence
\[
\mathbb{E}\|B\|_2^2=\operatorname{tr}(\Sigma_B^*)
=\sum_i e_i^*
= \frac{S^2}{2\hat\beta}.
\tag{1}
\label{eq:trace-logdet-opt}
\]

Algorithm~\ref{alg:PAC_original} constructs a diagonal precision $\Lambda_B$ in the empirical eigenbasis and, when the "gap" test passes, let
\[
\lambda_{B,i}=\frac{2v}{\sqrt{\hat{\lambda}_i+\delta}\ \sum_k \sqrt{\hat{\lambda}_k+\delta}} \text{ and } \delta=\frac{10cv}{\beta}.
\]
Plugging the population spectrum ($\hat{\lambda}=\lambda$, same $U$) yields noise eigenvalues
\[
e_i^\gamma
=\frac{\sqrt{\lambda_i+\delta}\ \sum_k \sqrt{\lambda_k+\delta}}{2v},
\]
and
\[
\operatorname{tr}(\Sigma_B^\gamma)
=\frac{\Big(\sum_k \sqrt{\lambda_k+\delta}\Big)^2}{2v}.
\]
Since $\delta\geq 0$ and $v\leq \hat{\beta}$, we have
\[
\operatorname{tr}(\Sigma_B^\gamma)
\geq \frac{S^2}{2v}
\geq \frac{S^2}{2\hat{\beta}}
=\operatorname{tr}(\Sigma_B^*),
\]
which proves (i) in this branch by \eqref{eq:trace-logdet-opt}–\eqref{eq:trace-logdet-opt}.
Moreover,
\[
\frac{e_i^\gamma}{e_i^*}
=\frac{\sum_k \sqrt{\lambda_k+\delta}}{\sum_k \sqrt{\lambda_k}}
 \frac{\hat\beta}{v} \sqrt{1+\frac{\delta}{\lambda_i}}
\ \ge\ 1.
\]
Thus, $\Sigma_B^\gamma\succeq \Sigma_B^*$.
For additive Gaussian noise perturbation, increasing $\Sigma_B$ (in positive semidefinite order) implies the log-det bound decreases, hence the mutual information decreases:
\[
\begin{aligned}
    \mathtt{MI}\big(X;\mathcal{M}(X)+B_\gamma\big)
&\leq \tfrac12\log\det \big(I+\Sigma_M(\Sigma_B^\gamma)^{-1}\big)\\
&\leq \tfrac12\log\det \big(I+\Sigma_M(\Sigma_B^*)^{-1}\big).
\end{aligned}
\]
Since the Auto-PAC design saturates the bound at $\hat\beta$ (and the true MI is bounded by it), we obtain
\[
\mathtt{MI}\big(X;\mathcal M(X)+B_\gamma\big)\ \le\ \mathtt{MI}\big(X;\mathcal M(X)+B\big),
\]
establishing (ii) in this branch.

When the "gap" test fails, Algorithm~\ref{alg:PAC_original} uses $\Sigma_B^\mathrm{iso}=\alpha I$ with
$\alpha=(\sum_i \hat\lambda_i + d c)/(2v)$, hence
\[
\operatorname{tr}(\Sigma_B^\mathrm{iso})
=\frac{d}{2v}\Big(\sum_i \lambda_i + dc\Big)
\ \ge\ \frac{1}{2v}\,S^2
\ \ge\ \frac{S^2}{2\hat\beta}
=\operatorname{tr}(\Sigma_B^*),
\]
where we used Cauchy–Schwarz $S^2=(\sum_i\sqrt{\lambda_i})^2\le d\sum_i\lambda_i$ and $v\le\hat\beta$. Thus (i) also holds in this branch.

For (ii), both designs enforce the same budget $\hat\beta$:
\[
\mathtt{MI}\big(X;\mathcal M(X)+B_\gamma\big)\ \le\ \hat\beta,
\qquad
\mathtt{MI}\big(X;\mathcal M(X)+B\big)\ \le\ \hat\beta.
\]
In the distinct–eigenvalues branch we proved the stronger order $\le$ between the two MI's. In the isotropic fallback, the same order holds whenever $\alpha I \succeq \Sigma_B^*$ (e.g., for nearly isotropic spectra); otherwise we keep the common upper bound $\hat\beta$.
Either way, the stated inequality (ii) is satisfied in the branch where Algorithm~\ref{alg:PAC_original}'s eigenbasis matches $U$, and the confidence guarantee always preserves the budget.
\qed

\section{Proof of Proposition \ref{prop:ordering_pac}}

Since $\beta_1 < \beta_2$, any distribution $Q$ satisfying $I_{\mathrm{true}}(Q) \leq \beta_1$ necessarily satisfies $I_{\mathrm{true}}(Q) \leq \beta_2$. Consequently, we have the inclusion $\mathcal{F}(\beta_{1})\subseteq \mathcal{F}(\beta_{2})$.
Let $A$ and $\hat{A}$ be arbitrary sets with $A \subseteq \hat{B}$, and let $f$ be any real-valued function defined on $B$. Then, $\inf_{x\in A} f(x)\geq \inf_{x\in\hat{A}} f(x)$, with equality holding when the infimum over $\hat{A}$ is attained within the subset $A$. Applying this with $A = \mathcal{A}(\beta_{1})$, $\hat{A} = \mathcal{F}(\beta_{2})$, and $f(Q) = \mathbb{E}_{Q}\bigl[\|B\|_2^2\bigr]$ yields 
\[
\inf_{Q\in\mathcal{F}(\beta_{1})}\mathbb{E}_{Q}\bigl[\|B\|_2^2\bigr] \geq \inf_{Q\in\mathcal{F}(\beta_{2})}\mathbb{E}_{Q}\bigl[\|B\|_2^2\bigr]
\]  
By definition, $Q^{*}(\beta_{i})$ achieves the infimum of $\mathbb{E}_{Q}\bigl[\|B\|_2^2\bigr]$ over $\mathcal{F}(\beta_{i})$ for $i=1,2$. Therefore, 
\[
\begin{aligned}
    \mathbb{E}_{Q^{*}(\beta_{1})}\bigl[\|B\|_2^2\bigr] &= \inf_{Q\in \mathcal{F}(\beta_{1})}\mathbb{E}_{Q}\bigl[\|B\|_2^2\bigr] \geq \inf_{Q\in \mathcal{F}(\beta_{2})}\mathbb{E}_{Q}\bigl[\|B\|_2^2\bigr] \\
    &= \mathbb{E}_{Q^{*}(\beta_{2})}\bigl[\|B\|_2^2\bigr].
\end{aligned}
\]
\qed

\section{Proof of Theorem \ref{thm:non_coherent_PAC_org}}

By Lemma \ref{lemma:monotone_gap} (which is shown and proved later), the function $g(\beta)=\mathtt{Gap}_{\mathtt{d}}(Q^{*}(\beta))$ is nondecreasing in $\beta$.
Thus, for any $0<\beta_1<\beta_2$, we have $\mathtt{Gap}_{\mathtt{d}}(Q^{*}(\beta_{2}))\geq \mathtt{Gap}_{\mathtt{d}}(Q^{*}(\beta_{1}))$, which yields 
$\mathtt{G}(\beta_2,\beta_1)
=\mathtt{Gap}_{\mathtt{d}}(Q^*(\beta_2))-\mathtt{Gap}_{\mathtt{d}}(Q^*(\beta_1))\geq0.$
Recall the relationship between true mutual information and the bound $\mathtt{LogDet}(\mathcal{M}(X), B) = \beta$: 
\[
\mathtt{I}_{\mathrm{true}}\bigl(Q^*(\beta)\bigr)
=\beta \;-\;\mathtt{Gap}_{\mathtt{d}}\bigl(Q^*(\beta)\bigr).
\]
Hence, for $0<\beta_{1}<\beta_{2}$, 
\[
\begin{aligned}
&\mathtt{I}_{\mathrm{true}}\bigl(Q^*(\beta_{2})\bigr)
\;-\;\mathtt{I}_{\mathrm{true}}\bigl(Q^*(\beta_{1})\bigr)\\
&=\bigl[\beta_2 - \mathtt{Gap}_{\mathtt{d}}(Q^*(\beta_2))\bigr]
      - \bigl[\beta_1 - \mathtt{Gap}_{\mathtt{d}}(Q^*(\beta_1))\bigr]\\
&=(\beta_2-\beta_1)\;-\;\bigl[\mathtt{Gap}_{\mathtt{d}}(Q^*(\beta_2))-\mathtt{Gap}_{\mathtt{d}}(Q^*(\beta_1))\bigr]\\
&=(\beta_2-\beta_1)\;-\;\mathtt{G}(\beta_2,\beta_1).
\end{aligned}
\]
The two bullet points now follow immediately:
(i) If $\mathtt{G}(\beta_2,\beta_1)\le\beta_2-\beta_1$, then
    \[
    \mathtt{I}_{\mathrm{true}}(Q^*(\beta_{2}))
    -\mathtt{I}_{\mathrm{true}}(Q^*(\beta_{1}))
    =(\beta_2-\beta_1)-\mathtt{G}(\beta_2,\beta_1)\ge0,
    \]
    i.e.\ $\mathtt{I}_{\mathrm{true}}(Q^*(\beta_{1}))\le\mathtt{I}_{\mathrm{true}}(Q^*(\beta_{2}))$.
(ii) If $\mathtt{G}(\beta_2,\beta_1)>\beta_2-\beta_1$, then
    \[
    \mathtt{I}_{\mathrm{true}}(Q^*(\beta_{2}))
    -\mathtt{I}_{\mathrm{true}}(Q^*(\beta_{1}))
    =(\beta_2-\beta_1)-\mathtt{G}(\beta_2,\beta_1)<0,
    \]
    i.e.\ $\mathtt{I}_{\mathrm{true}}(Q^*(\beta_{1}))>\mathtt{I}_{\mathrm{true}}(Q^*(\beta_{2}))$.

\qed

\begin{lemma}\label{lemma:monotone_gap}
 Fix a mechanism $\mathcal{M}$ and a data distribution $\mathcal{D}$.
    Let $Q^{*}(\beta)$ be the solution of (\ref{eq:opt_PAC_algorithm}). Then, $\mathtt{Gap}_{\mathtt{d}}(Q^{*}(\beta))$ is a nondecreasing function of $\beta$.
\end{lemma}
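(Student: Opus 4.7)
The overall plan is to split the claim into two monotonicity facts about the unique minimizer $Q^*(\beta) = \mathcal{N}(0, \Sigma_B^*(\beta))$ obtained from Proposition~\ref{prop:opt_PAC_algorithm}: first, that $\Sigma_B^*(\beta)$ is nonincreasing in $\beta$ with respect to the Loewner order; second, that the Gaussianity gap $\mathtt{D}_{\mathrm{KL}}(P_{\mathcal{M},B} \,\|\, \widetilde{Q}_\mathcal{M})$ is nonincreasing in the noise covariance (again in the Loewner order). Chaining these two facts delivers the desired monotonicity in $\beta$.

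For the first fact, I will reuse the closed‑form characterization derived in the proof of Proposition~\ref{prop:opt_PAC_algorithm}. After noting (by unitary invariance and a standard Hadamard‑type argument) that the optimum diagonalizes in the eigenbasis of $\Sigma_{\mathcal{M}(X)}$, the optimal eigenvalues take the form $\ell_i^*(\lambda) = \tfrac{1}{2}\bigl(-r_i + \sqrt{r_i^2 + 2\lambda r_i}\bigr)$, with Lagrange multiplier $\lambda^*(\beta)$ determined by $F(\lambda^*) = \beta$. Since $\ell_i^*$ is strictly increasing in $\lambda$ and $F$ is strictly decreasing in $\lambda$ (both established there), $\lambda^*(\beta)$ is strictly decreasing in $\beta$, so every eigenvalue of $\Sigma_B^*(\beta)$ is nonincreasing in $\beta$; hence $\Sigma_B^*(\beta_1) \succeq \Sigma_B^*(\beta_2)$ whenever $\beta_1 \le \beta_2$.

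For the second fact, I will invoke the data processing inequality under Gaussian convolution. Given $\Sigma_{B,1} \succeq \Sigma_{B,2}$, write $W_1 \stackrel{d}{=} W_2 + W'$ with $W' \sim \mathcal{N}(0, \Sigma_{B,1} - \Sigma_{B,2})$ independent of $W_2$. Let $P_i$ denote the law of $\mathcal{M}(X) + W_i$ and $G_i = \mathcal{N}(\mu_{\mathcal{M}(X)}, \Sigma_{\mathcal{M}(X)} + \Sigma_{B,i})$ its moment‑matched Gaussian, so that $\mathtt{Gap}_{\mathtt{d}}(Q_i) = \mathtt{D}_{\mathrm{KL}}(P_i \,\|\, G_i)$ by Proposition~\ref{prop:MI_gap}. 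The crucial observation is that convolution with $\mathcal{N}(0, \Sigma_{B,1}-\Sigma_{B,2})$ sends $P_2 \mapsto P_1$ and $G_2 \mapsto G_1$ simultaneously: means and covariances add under independent sums, so the moment‑matched Gaussian of a Gaussian convolution equals the Gaussian convolution of the moment‑matched Gaussian. The DPI for KL divergence then yields $\mathtt{D}_{\mathrm{KL}}(P_1 \,\|\, G_1) \le \mathtt{D}_{\mathrm{KL}}(P_2 \,\|\, G_2)$. Specialising $\Sigma_{B,1} = \Sigma_B^*(\beta_1)$ and $\Sigma_{B,2} = \Sigma_B^*(\beta_2)$ and combining with the first fact gives $\mathtt{Gap}_{\mathtt{d}}(Q^*(\beta_1)) \le \mathtt{Gap}_{\mathtt{d}}(Q^*(\beta_2))$.

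The main obstacle is the ``moment‑matching commutes with Gaussian convolution'' step in the second fact, since without it one could not feed the \emph{same} Markov kernel to both arguments of the KL divergence and the DPI would not apply. Everything else reduces either to the closed‑form Lagrangian analysis already carried out in the proof of Proposition~\ref{prop:opt_PAC_algorithm}, or to the standard data processing inequality.
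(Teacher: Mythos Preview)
Your proposal is correct and takes a genuinely different route from the paper's proof. Both arguments share the first ingredient: the Loewner monotonicity $\Sigma_B^*(\beta_1)\succeq\Sigma_B^*(\beta_2)$ for $\beta_1\le\beta_2$, extracted from the closed form in Proposition~\ref{prop:opt_PAC_algorithm} (the paper phrases this infinitesimally as $d\Sigma_B/d\beta\preceq 0$). Where you diverge is in showing that $\mathtt{Gap}_{\mathtt{d}}$ is monotone nonincreasing in $\Sigma_B$. The paper differentiates $g(\beta)=\mathcal{H}(\mathcal{N}(0,\Sigma_Z))-\mathcal{H}(P_{\mathcal{M}}\ast\mathcal{N}(0,\Sigma_B))$ in $\Sigma_B$, invokes de Bruijn's identity to get $\nabla_{\Sigma_B}\mathcal{H}(P_{\mathcal{M}}\ast\mathcal{N}(0,\Sigma_B))=\tfrac12 J$, and then uses the Cram\'er--Rao bound $J\succeq\Sigma_Z^{-1}$ to conclude $\nabla_{\Sigma_B}g\preceq 0$; the chain rule finishes via the sign of $\operatorname{tr}\bigl((\nabla_{\Sigma_B}g)(d\Sigma_B/d\beta)\bigr)$. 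You instead observe that adding independent Gaussian noise $\mathcal{N}(0,\Sigma_{B,1}-\Sigma_{B,2})$ is a single Markov kernel that simultaneously pushes $P_2\to P_1$ and $G_2\to G_1$ (because moment matching commutes with Gaussian convolution), and then apply the data processing inequality for KL. Your argument is more elementary: it avoids de Bruijn, Fisher information, and the differentiability of $\beta\mapsto\Sigma_B^*(\beta)$, and it works directly at the level of $\beta_1\le\beta_2$ rather than through derivatives. The paper's approach, on the other hand, delivers a pointwise derivative formula, which can be useful if one later wants quantitative control over $\tfrac{d}{d\beta}\mathtt{Gap}_{\mathtt{d}}$ (as is implicitly done elsewhere in the paper). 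One small point worth making explicit in your write-up: the Loewner comparison requires not just that the eigenvalues $\ell_i^*$ decrease but that the eigenbasis of $\Sigma_B^*(\beta)$ is the \emph{same} for all $\beta$; you allude to this via the Hadamard-type argument, and it is indeed implicit in the Lagrangian analysis of Proposition~\ref{prop:opt_PAC_algorithm}, but stating it clearly will make the Loewner step airtight.
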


\begin{proof}
Let $g(\beta) = \mathtt{Gap}_{\mathtt{d}}(Q^{*}(\beta))$ and $\Sigma_Z = \Sigma_{\mathcal{M}(X)} + \Sigma_B$ with $\Sigma_B = \Sigma_B^*(\beta)$. By definition,
\[
g(\beta) = H\bigl(\mathcal N(0,\Sigma_Z)\bigr) - H\bigl(P_{\mathcal M} * \mathcal N(0,\Sigma_B)\bigr).
\]
Differentiate with respect to $\beta$ via the chain rule:
\[
\frac{d g}{d\beta} = \left\langle \nabla_{\Sigma_B} \left[ H(\mathcal N(0,\Sigma_Z)) - H(P_{\mathcal M} * \mathcal N(0,\Sigma_B)) \right], \frac{d\Sigma_B}{d\beta} \right\rangle.
\]
The gradient of Gaussian entropy is $\nabla_{\Sigma_B} H(\mathcal N(0,\Sigma_Z)) = \frac{1}{2} \Sigma_Z^{-1}$. By de Bruijn's identity \cite{stam1959some},
\[
\nabla_{\Sigma_B} H(P_{\mathcal M} * \mathcal N(0,\Sigma_B)) = \frac{1}{2} J(P_{\mathcal M} * \mathcal N(0,\Sigma_B)),
\]
where $J(\cdot)$ is the Fisher information. The Cramér–Rao bound gives $J(P_{\mathcal M} * \mathcal N(0,\Sigma_B)) \succeq \Sigma_Z^{-1}$. Thus,
\[
\nabla_{\Sigma_B}g = \frac{1}{2}\left( \Sigma_Z^{-1} - J(P_{\mathcal M} * \mathcal N(0,\Sigma_B)) \right) \preceq 0.
\]
From Proposition~\ref{prop:opt_PAC_algorithm}, $\frac{d\Sigma_B}{d\beta} \preceq 0$ (strictly negative when $\Sigma_B$ changes). Since both $\nabla_{\Sigma_B}g$ and $\frac{d\Sigma_B}{d\beta}$ are symmetric negative semidefinite,
\[
\frac{d g}{d\beta} = \left\langle \nabla_{\Sigma_B}g, \frac{d\Sigma_B}{d\beta} \right\rangle = \operatorname{tr}\left(\left(\nabla_{\Sigma_B}g\right) \left(\frac{d\Sigma_B}{d\beta}\right)\right) \geq 0,
\]
as the trace of the product of two negative semidefinite matrices is nonnegative. Hence $g(\beta)$ is nondecreasing.
\end{proof}

\section{Proof of Proposition \ref{prop:Stackelberg_convergence}}

Fix any $Q$. 
The Follower's problem is to find $\pi^*(Q)$ solving $\inf_{\pi\in\Pi} W(Q,\pi)$. 
By definition
\[
\begin{aligned}
    &W(Q,\pi)=\mathbb{E}_{X\sim\mathcal{D},B\sim Q}\left[-\log\pi(X|\mathcal{M}(X) + B)\right]\\
    &-\int_{\mathcal{X}, \mathcal{Y}, \mathbb{R}^{d}}P_{X}(x)G_{\mathcal{M}, Q}(y|x,b)\log\pi(x|y+b)dx dy db,
\end{aligned}
\]
where $P_{X}(x)$ is the density function associated with data distribution $\mathcal{D}$, and $G_{\mathcal{M}, Q}(y|x,b)$ is the conditional density function given $\mathcal{M}$ and $Q$.

Let $\eta_{Q}:\mathcal{Y}\mapsto\Delta(\mathcal{X})$ denote the posterior distirbution given $P_{X}$ and $G_{\mathcal{M}, Q}$.
For any $\pi\in\Pi$, consider
\[
\begin{aligned}
    &W(Q,\pi)-W(Q,\eta_{Q}) \\
    &= \int_{\mathcal{X}, \mathcal{Y}, \mathbb{R}^{d}}P_{X}(x)G_{\mathcal{M}, Q}(y|x,b)\log\eta_{Q}(x|y+b)dx dy db\\
    &- \int_{\mathcal{X}, \mathcal{Y}, \mathbb{R}^{d}}P_{X}(x)G_{\mathcal{M}, Q}(y|x,b)\log\pi(x|y+b)dx dy db\\
    &=\int_{\mathcal{X}, \mathcal{Y}, \mathbb{R}^{d}}P_X(x)G_{\mathcal{M}, Q}(y|x,b)\log\frac{\eta_{Q}(x|y+b) }{\pi(x|y+b)}dx dy db.
\end{aligned}
\]
Let
\[
\mathbf{P}_{Q}(y)\equiv\int_{\mathcal{X}, \mathbb{R}^{d}} P_X(x)G_{\mathcal{M}, Q}(y|x,b)dx db.
\]
By definition, we have
\[
\begin{aligned}
    \eta_{Q}\mathbf{P}_{Q}(y) = \int_{\mathcal{X}}P_X(x)G_{\mathcal{M}, Q}(y|x,b).
\end{aligned}
\]
Thus, for all $Q\in\Gamma$, 
\[
W(Q,\pi)-W(Q,\eta_{Q}) = \mathtt{D}_{\mathrm{KL}}(\eta_{Q}\| \pi)\geq0.
\]
Then, $W(Q,\pi)\geq W(Q,\eta_{Q})$, where the equality holds if and only if $\pi = \eta_{Q}$.
That is, for any $Q\in\Gamma$, there is a unique $\pi(Q)$ as a solution of $\inf_{\pi\in\Pi} W(Q,\pi)$.
In addition, when $\pi(Q) = \eta_{Q}$, $W(Q,\pi(Q))$ is the conditional entropy.

\qed

\section{Proof of Proposition \ref{prop:anisotropic}}

Based on (iii) of Assumption \ref{assp:anisotropic}, consider $K(Q)\;=\;\mathbb{E}_{B\sim Q}\bigl[g(\|B\|)\bigr]$, where $g:\mathbb{R}_{+}\to\mathbb{R}$ is strictly increasing and strictly convex.

Suppose, to reach a contradiction, that an optimal $Q^{*}$ is isotropic with
$\Sigma_{Q^{*}}=\sigma^{2}I_d$ and attains the constraint with equality: $\mathcal{H}(X\mid \mathcal{M}(X)+B)=\hat\beta$.

For small $\Delta_v>0$ define the perturbed covariance
\[
\Sigma'(\Delta_v)\equiv
 (\sigma^{2}-\Delta_v)\,vv^{\top}
+(\sigma^{2}+\Delta_u)\,uu^{\top}
+\sigma^{2}P_{\{u,v\}^{\perp}},
\]
with $\Delta_u\in(0,\Delta_v)$ to be chosen.  Denote by  
$\mathtt{h}(\sigma_u^{2},\sigma_v^{2})\equiv \mathcal{H}(X|Y)$ the conditional entropy evaluated at those directional variances.

Because $\mathtt{h}$ is $C^{1}$ and strictly increasing in each argument, we have 
$$\left.
\frac{\partial \mathtt{h}}{\partial\sigma_u^{2}}\right|_{\sigma^{2}}
>\left.
\frac{\partial \mathtt{h}}{\partial\sigma_v^{2}}\right|_{\sigma^{2}}
>0.$$
Hence the map  
$$\phi_{\Delta_v}(\Delta_u)\equiv h(\sigma^{2}+\Delta_u,\sigma^{2}-\Delta_v)$$
is continuous and strictly increasing near $\Delta_u=0$, with  
\[
\phi_{\Delta_v}(0)=\hat\beta-
\frac{\partial h}{\partial\sigma_v^{2}}\Delta_v+o(\Delta_v)<\hat\beta.
\] 
By the Intermediate Value Theorem, there exists a unique
$\Delta_u\in(0,\Delta_v)$ such that $\phi_{\Delta_v}(\Delta_u)=\hat\beta$,
i.e.\ the perturbed noise $Q'$ \textit{satisfies the privacy constraint exactly}.

Because $g$ is strictly convex,
\[
g(\sigma^{2}+\Delta_u)-g(\sigma^{2})
<g'(\sigma^{2})\,\Delta_u,
\]
\[
g(\sigma^{2}-\Delta_v)-g(\sigma^{2})
>g'(\sigma^{2})(-\Delta_v).
\]
Therefore $\mathcal{K}(Q')-\mathcal{K}(Q^{*})
<g'(\sigma^{2})(\Delta_u-\Delta_v)<0.$
That is, $Q'$ is feasible and cheaper than $Q^{*}$, contradicting optimality.
Hence no optimum can be isotropic, so every minimiser must have
$\lambda_{\max}(\Sigma)>\lambda_{\min}(\Sigma)$.

\qed

\section{Proof of Proposition \ref{prop:directional}}

\subsection{Part (i):}

Since entropy is maximised by a Gaussian with fixed covariance, the entropy-power inequality give
\[
\mathcal{H}(Z+B_{\mathtt{pac}})<\mathcal{H}(Z_{\text{G}}+B_{\mathtt{pac}}),
\]
where $Z_{\text{G}}$ is Gaussian with covariance $\Sigma_{Z}$.
Thus, $\mathtt{MI}(Z;Z+B_{\mathtt{pac}})<\mathtt{MI}(Z_{\text{G}};Z_{\text{G}}+B_{\mathtt{pac}})=\beta$.
To raise the mutual information back up to $\beta$, we can strictly reduce every directional variance of $B_{\mathtt{pac}}$.
The optimizer $Q^*$ therefore expands strictly less power.
That is, $\mathbb{E}_{Q^*}[\|B\|_{2}^{2}]<\mathbb{E}[\|B_{\mathtt{pac}}\|_{2}^{2}]$.

\subsection{Part (ii):}
Let $\sigma_{w}^{2}\equiv\mathtt{Var} \langle B,w\rangle$.
Form the Lagrangian 
\[
\mathcal L(Q,\lambda) =\mathbb{E}_{Q}[\|B\|_{2}^{2}]
   +\lambda\bigl(\mathtt{MI}(Z;Z+B)-\beta\bigr).
\]
For the stationarity condition w.r.t.\ each $\sigma_{w}^{2}$ we need the gradient of mutual information.  
By \cite{palomar2005gradient}, we have 
\[
\partial_{\sigma_w^{2}} \mathtt{MI}(Z;Z+B)=g(w).
\]
Hence $\partial_{\sigma_{w}^{2}}\mathcal L = 1 + \lambda\,g(w)$.
The KKT conditions therefore read
\[
     1+\lambda g(w)=0\quad\text{if }\sigma_{w}^{2}>0,
     \;
     1+\lambda g(w)\ge 0\quad\text{if }\sigma_{w}^{2}=0,
\]
for a unique $\lambda<0$.
Under the assumption 
$$\displaystyle \sup_{v\in S_{\mathtt{lab}},\|v\|=1} g(v)<\inf_{w\perp S_{\mathtt{lab}},\|w\|=1} g(w),
$$
these equalities can hold only for as long as the required mutual information reduction does not exceed $\beta_{\mathtt{lab}}$.
Therefore, $\sigma_{v}^{2}=0$ for every $v\in S_{\mathtt{lab}}$.
With those label‑directions undisturbed,
each class margin $e_\ell-e_j$ retains its sign, whence
$\arg\max_i(Z_i+B^{*}_i)=\hat y$.

\qed

\section{Proof of Theorem \ref{thm:speed_general}}

\subsection{Part (i)}

Let $Z = \mathcal{M}(X) + B$.
For SR-PAC, the perurbation rule $Q_{\mathrm{SR}}$ satisfies $\mathcal{H}(X|Z) = \mathcal{H}(X) - \beta.$
By definition, we have $\mathtt{MI}_{\mathrm{SR}}(\beta) = \beta.$
Thus, $\mathtt{Priv}^{\mathrm{SR}}_{\beta} = 1$.

For Auto-PAC, the noise $B_{\mathrm{PAC}}\sim\mathcal{N}(0, \Sigma_{B_{\mathrm{PAC}}}(\beta))$ satisfies $\tfrac{1}{2}\log\det\left(I_d + \Sigma_{\mathcal{M}(X)} \Sigma_{B_{\mathrm{PAC}}}^{-1}(\beta)\right) = \beta.$
By Proposition \ref{prop:MI_gap}, the true mutual information is
\[
\mathtt{MI}_{\mathrm{PAC}}(\beta) = \beta - \mathtt{Gap}_{d}(\beta),
\]
where $\mathtt{Gap}_{\mathtt{d}}(\beta) = \mathtt{D}_{\mathrm{KL}}(P_{\mathcal{M},B_{\mathrm{PAC}}}\| \widetilde{Q}_{\mathcal{M}})\geq 0$.
When $\mathcal{M}(X)$ is non-Gaussian, $\mathtt{Gap}_{\mathtt{d}}(\beta)>0$ for all $\beta>0$.
By de Bruijin's idensity (e.g., \cite{park2012equivalence}),
\[
\frac{d}{d\beta} \mathtt{Gap}_{\mathtt{d}}(\beta) = \frac{1}{2} \mathcal{J}(P_{\mathcal{M} + B_{\mathrm{PAC]}} }(\beta)\| \widetilde{Q}_{\mathcal{M}})>0,
\]
where $\mathcal{J}(\cdot\|\cdot)$ is the relative Fisher information. 
Thus, $\mathtt{Priv}^{\mathrm{PAC}}_{\beta}=\frac{d}{d\beta} \mathtt{MI}_{\mathrm{PAC}}(\beta)<1=\mathtt{MI}_{\mathrm{SR}}(\beta)$.

\subsection{Part (ii)} 
It is well known that for a fixed prior, mutual information is convex in the channel law. 
When $Z= \mathcal{M}(X) + B$, the "channel law" in our setting of the deterministic mechanism is determined by the perturbation rule $Q$. Thus, the mapping $Q\mapsto \mathtt{MI}(Q)\equiv\mathtt{MI}(X;\mathcal{M}(X)+B$ is convex.
The objective $\mathcal{K}(Q) = \mathbb{E}_{Q}[\|B\|^{2}_{2}]$ is linear (hence convex) in $Q$.
In addition, the constraint set $\{Q: \mathtt{MI}(Q)\leq \beta\}$ is convex.
Then, Slater's condition holds because: 
\begin{itemize}
    \item[(i)]  when $\Sigma_{B} \rightarrow \infty$, $\mathtt{MI}(Q)\rightarrow 0<\beta$; 

    \item[(ii)] $V(\beta)$ is finite for all $\beta>0$ since $\mathbb{E}[\|\mathcal{M}(X)\|^2_{2}]<\infty$.
\end{itemize}
Hence, the strong duality applies here. 
Thus, $V(\beta)$ is convex and differentiable. 
The primal-dual problem is formulated as
\[
\widehat{V}(\beta) = \inf_{Q}\;\max_{\lambda}\mathcal{K}(Q) + \lambda(\mathtt{MI}(Q) -\beta).
\]
The envelop theorem implies $\widehat{V}'(\beta) = \lambda^*(\beta)>0$, where $\lambda^*(\beta)$ is the unique optimal dual variable (because $\mathcal{K}(Q) + \lambda(\mathtt{MI}(Q) -\beta)$ is strict convex in $Q$ for $\lambda>0$).
Therefore, $\lambda^*(\beta)$ is non-decreasing.

Let $\widetilde{\beta}(\beta) = \beta - \mathtt{Gap}_{d}(Q(\beta))<\beta$.
Since the Gaussian noise $B_{\mathrm{PAC}}(\beta)$ satisfies $\mathtt{MI}_{\mathtt{PAC}}(B_{\mathrm{pAC}}(\beta)) = \widetilde{\beta}(\beta)$, we have
\[
V_{\mathrm{PAC}}(\beta) = \mathcal{K}(B_{\mathrm{PAC}}(\beta))\geq V(\widetilde{\beta}(\beta)).
\]
Since $\widetilde{\beta}(\beta)<\beta$ and $V$ is strictly increasing, $V(\widetilde{\beta}(\beta))> V(\beta)$.
Therefore, for all $\beta>0$,
\[
\Delta(\beta) \equiv V_{\mathrm{PAC}}(\beta)  - V_{\mathrm{SR}}(\beta)>0,
\]
and $\lim_{\beta\rightarrow 0^+} \Delta(\beta) = 0$.

By Lemma \ref{lem:slopegap1} (stated and proved below) to $g(\beta) = V_{\mathrm{PAC}}(\beta)$ and $f(\beta) = V(\beta)$, we have $g'(\beta)>f'(\beta)$ for all $\beta>0$. That is, $V'_{\mathrm{PAC}}(\beta)>V'_{\mathrm{SR}}(\beta)$.
Thus, $\mathtt{Util}^{\mathrm{SR}}_{\beta}\geq \mathtt{Util}^{\mathrm{PAC}}_{\beta}$, with equality only for Gaussian $\mathcal{M}(X)$.

\qed

\begin{lemma}[Height gap $ \Rightarrow $ slope gap]\label{lem:slopegap1}
Let $g,f:(0,\infty) \to \mathbb R$ be differentiable,
and assume $f$ is convex.
If $g(\beta)>f(\beta)$ for every $\beta>0$ and $g(0)=f(0)$,
then $g'(\beta)>f'(\beta)$ for every $\beta>0$.
\end{lemma}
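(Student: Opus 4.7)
The plan is to set $h = g - f$, so that $h$ is differentiable on $(0,\infty)$ with $h(0) = 0$ and $h > 0$ throughout; the conclusion $g'(\beta) > f'(\beta)$ is equivalent to $h'(\beta) > 0$ at every $\beta > 0$. After subtracting the common boundary value $g(0) = f(0)$, I would first invoke convexity of $f$: since $f'$ is non-decreasing, the supporting-line inequality at $0$ gives $f(\beta) - f'(\beta)\beta \leq f(0) = 0$, hence $f'(\beta) \geq f(\beta)/\beta$ for every $\beta > 0$. A natural secant-slope attempt then applies the mean value theorem to $g$ on $[0,\beta]$ to produce $\xi_\beta \in (0,\beta)$ with $g'(\xi_\beta) = g(\beta)/\beta > f(\beta)/\beta$, which combined with the supporting-line bound yields $g'(\xi_\beta) > f(\beta)/\beta \leq f'(\beta)$ — information about $g'$ at an auxiliary point $\xi_\beta$ rather than at the target $\beta$. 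A parallel contradiction attempt would suppose $g'(\beta_0) \leq f'(\beta_0)$ at some $\beta_0 > 0$ and propagate via monotonicity of $f'$ on $[\beta_0,\infty)$ to try to clash with $h > 0$ in the tail.

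The core obstacle, and the step I expect to require the most care, is that strict pointwise positivity of $h$ together with $h(0) = 0$ does not by itself force pointwise strict positivity of $h'$. A standard caution is $h(\beta) = \beta\,(2 + \sin(1/\beta))$: it is strictly positive on $(0,\infty)$ and extends continuously to $h(0) = 0$, yet $h'(\beta) = 2 + \sin(1/\beta) - \cos(1/\beta)/\beta$ takes arbitrarily large negative values along $\beta_n = 1/(2\pi n)$. Convexity of $f$ alone is therefore not sufficient to rule out such oscillatory behavior in $h$; closing the argument in full generality requires extra regularity of $g$, most naturally convexity of $g$ itself.

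In the application here, $g = V_{\mathrm{PAC}}$ and $f = V_{\mathrm{SR}}$ are both value functions of strictly convex programs in $\beta$, so both are in fact convex and $C^1$. With $g$ convex, the analogous supporting-line inequality gives $g'(\beta) \geq g(\beta)/\beta$, which combined with the earlier bound produces the bracketing $g'(\beta) \geq g(\beta)/\beta > f(\beta)/\beta \leq f'(\beta)$. The last inequality points the wrong way, so to complete the slope comparison I would finish quantitatively: first use strict convexity of $f$ to show that $f(\beta)/\beta$ lies strictly below $f'(\beta)$ by a margin controlled by the modulus of convexity of $f$, and then use the non-Gaussianity gap of Proposition \ref{prop:MI_gap} to show that $g(\beta)/\beta$ exceeds $f(\beta)/\beta$ by a larger margin. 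Verifying that the second margin dominates the first uniformly on $(0,\infty)$ is the delicate step and is where the specific smoothness and strict convexity furnished by the underlying optimization structure must be brought to bear; the bare hypotheses of the lemma as stated are not sufficient for a purely analytic proof.
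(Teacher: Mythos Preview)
Your diagnosis is correct: the lemma as stated is false, and your oscillatory counterexample (take $f\equiv 0$ and $g(\beta)=\beta(2+\sin(1/\beta))$) is a genuine one. The paper's own argument contains a sign error: from $g(\beta+h)>f(\beta+h)\ge f(\beta)+hf'(\beta)$ one actually obtains
\[
\frac{g(\beta+h)-g(\beta)}{h}\;>\;f'(\beta)\;-\;\frac{g(\beta)-f(\beta)}{h},
\]
with a \emph{minus} where the paper writes a plus; the right-hand side tends to $-\infty$ as $h\downarrow 0$ and yields nothing. (With the paper's sign the right-hand side would tend to $+\infty$, which is equally unusable.) So the paper's proof does not establish the claim either.

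Your proposed repair of also assuming $g$ convex is still not enough. Take $f(\beta)=\beta^{2}$ and set $g(\beta)=\tfrac{3}{2}\beta$ on $[0,1]$, $g(\beta)=\tfrac{3}{2}\beta+2(\beta-1)^{2}$ on $(1,\infty)$. Then $g$ is $C^{1}$ and convex, $g(0)=f(0)=0$, and one checks $g(\beta)-f(\beta)=\beta\bigl(\tfrac{3}{2}-\beta\bigr)>0$ on $(0,1]$ while $g(\beta)-f(\beta)=\beta^{2}-\tfrac{5}{2}\beta+2>0$ on $(1,\infty)$ (negative discriminant); yet $g'(1)=\tfrac{3}{2}<2=f'(1)$. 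Hence convexity of both functions still does not force the slope inequality, and the secant-versus-tangent bracketing you sketch cannot be closed in general. Your closing remark is therefore the right conclusion: the bare hypotheses cannot be made to work, and any valid proof of $\mathtt{Util}^{\mathrm{SR}}_{\beta}\ge\mathtt{Util}^{\mathrm{PAC}}_{\beta}$ in Theorem~\ref{thm:speed_general} must exploit specific structure of $V_{\mathrm{PAC}}$ and $V_{\mathrm{SR}}$ rather than this general lemma.
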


\begin{proof}
Fix $\beta>0$.  For $h>0$ small,
$f(\beta+h)\ge f(\beta)+hf'(\beta)$ by convexity.
Hence
\[
  \frac{g(\beta+h)-g(\beta)}{h}
  \;\ge\;f'(\beta)+\frac{g(\beta)-f(\beta)}{h}.
\]
Sending $h\downarrow0$ gives $g'(\beta)\ge f'(\beta)$.
If equality held we would need $g(\beta)=f(\beta)$, contradicting the
strict height gap.  Hence $g'(\beta)>f'(\beta)$.
\end{proof}

\section{More on Experiments}\label{app:more_on_experiments}

\subsection{More On The results of Fig. \ref{fig:DP_comparisons}}

Fig. \ref{fig:DPvs_SR_PAC} zooms in on the SR-PAC curves from Fig. \ref{fig:DP_comparisons} and shows a clear monotone increase in expected noise power $\mathbb{E}\|B\|^2=\operatorname{tr}(\Sigma)$ as $\beta$ decreases. This is consistent with the tighter privacy requirement $H_c \ge H_M-\beta$ pushing the mechanism to add more noise in the high-privacy (small-$\beta$) regime.

Fig. \ref{fig:DP_target} further shows that SR-PAC implements the privacy constraint \textit{conservatively}: the achieved conditional entropy $H_c$ typically lies at or slightly above the target line $\mathcal{H}(\mathcal{X})-\beta$. Equivalently, the effective mutual information $\mathtt{MI}(M;Y)=\mathcal{H}(\mathcal{X})-H_c$ is at or below the nominal budget $\beta$; i.e., the realized mechanism is (slightly) \textit{more private than necessary}. 
This benign overshoot is expected from finite-sample estimation and our fixed-CRN calibration with a positive tolerance, which is designed to avoid budget violations. If desired, the conservatism can be reduced by tightening the calibration tolerance, enlarging the CRN bank, or applying a final back-off on the noise scale until $H_c$ falls within a small band above the target. Importantly, even with this conservative bias, SR-PAC attains lower noise power and higher accuracy than Auto-PAC and Efficient-PAC at the same nominal $\beta$.

\begin{figure}[t] %
  \centering
  \begin{subfigure}{0.8\columnwidth}
    \includegraphics[width=\linewidth]{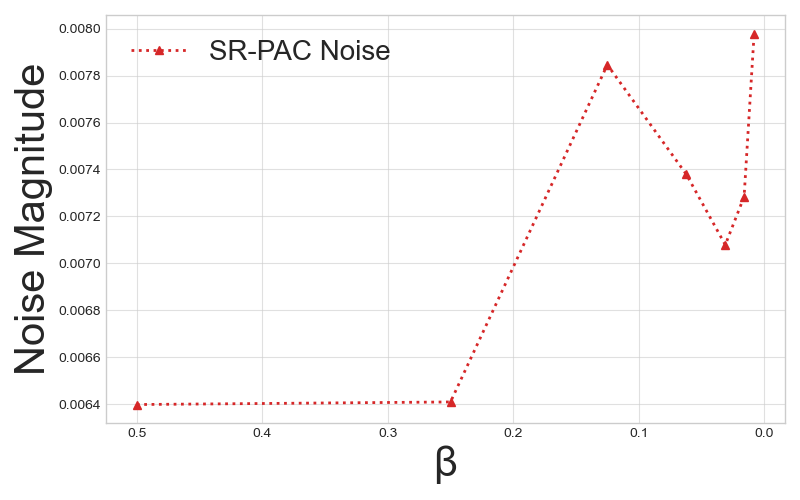}
    \caption{Iris}
  \end{subfigure}
  \hfill
  \begin{subfigure}{0.8\columnwidth}
    \includegraphics[width=\linewidth]{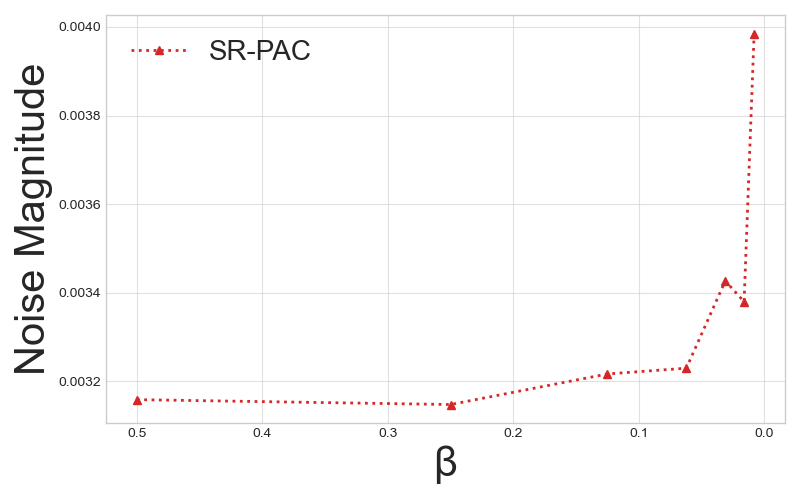}
    \caption{Rice}
  \end{subfigure}
  \caption{Noise magnitudes of SR-PAC of Fig. \ref{fig:DP_comparisons} \textbf{a} and \textbf{b}. All the numerical values are shown in Tables \ref{tab:Iris_value} and \ref{tab:Rice_value}.}
  \label{fig:DPvs_SR_PAC}
\end{figure}

\begin{figure}[t] %
  \centering
  \begin{subfigure}{0.8\columnwidth}
    \includegraphics[width=\linewidth]{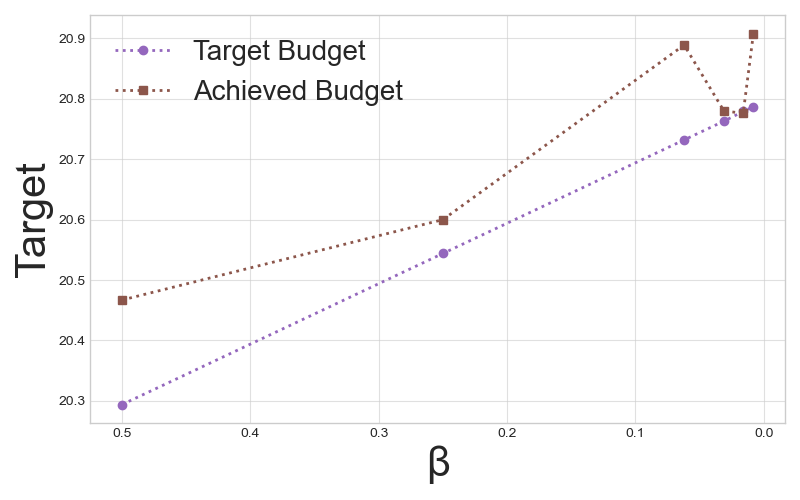}
    \caption{Iris}
  \end{subfigure}
  \hfill
  \begin{subfigure}{0.8\columnwidth}
    \includegraphics[width=\linewidth]{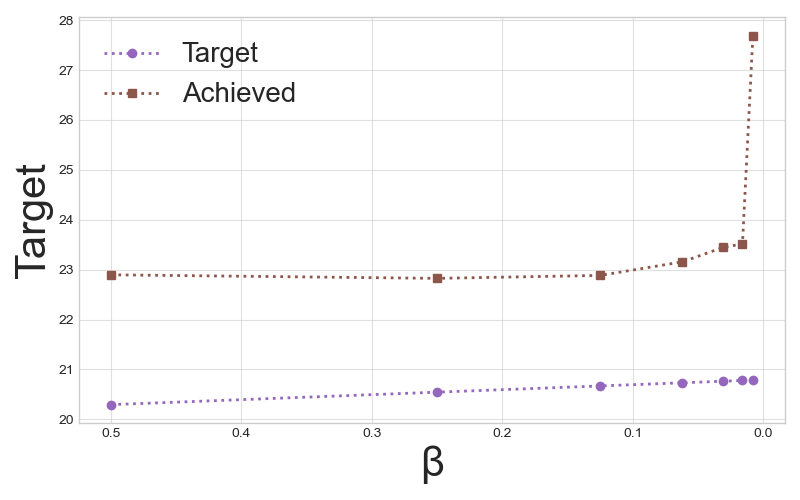}
    \caption{Rice}
  \end{subfigure}
  \caption{SR-PAC's performance of implementing the target privacy budget of Fig. \ref{fig:DP_comparisons} \textbf{a} and \textbf{b}. All the numerical values are shown in Tables \ref{tab:Iris_value} and \ref{tab:Rice_value}.}
  \label{fig:DP_target}
\end{figure}

\begin{figure}[t] %
  \centering
  \begin{subfigure}{0.8\columnwidth}
    \includegraphics[width=\linewidth]{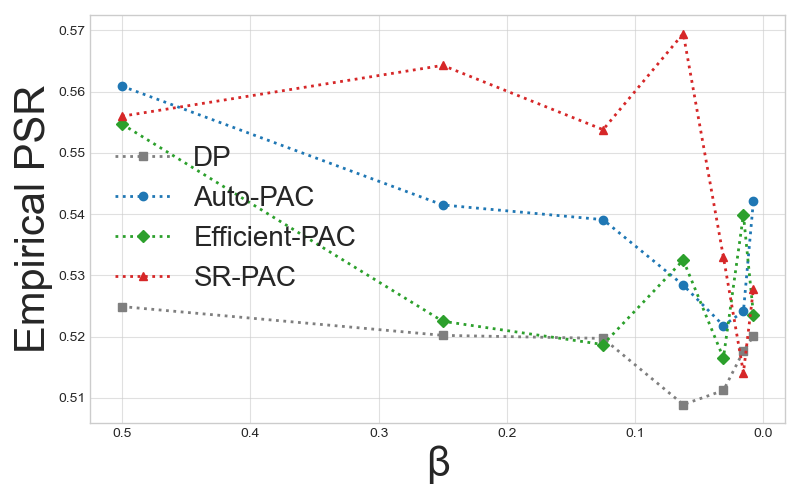}
    \caption{Iris}
  \end{subfigure}
  \hfill
  \begin{subfigure}{0.8\columnwidth}
    \includegraphics[width=\linewidth]{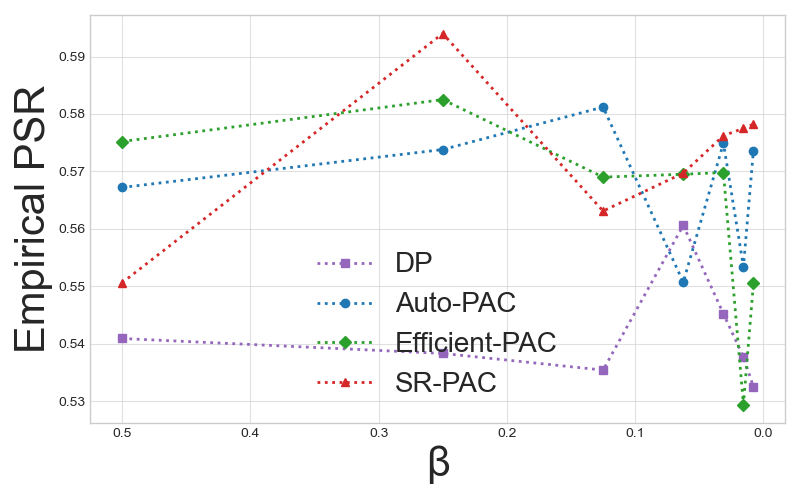}
    \caption{Rice}
  \end{subfigure}
  \caption{The performance of empirical membership inference attack using empritical LIRA, measured by the empirical posterior success rate (PSR). All the numerical values are shown in Tables \ref{tab:Iris_value} and \ref{tab:Rice_value}.}
  \label{fig:DP_psr}
\end{figure}

\subsection{Empirical Membership Inference Attack}

We use the Likelihood-Ratio Attack (LIRA) described in \cite{carlini2022membership} to perform the empirical membership inference attacks (MIAs) on the mechanisms privatized by SR-PAC, Auto-PAC, Efficient-PAC, and DP in Section \ref{S:expdp}, using Iris and Rice datasets.
The empirical posterior success rate (PSR) is measured as the average accuracy of the MIA.

A theoretical ordering of privacy budgets in mutual information or conditional entropy does not in general imply an ordering of a membership–inference attack's PSR. Finite–sample effects, non–optimal attacks, calibration error, and run–to–run variance can all break that implication. We therefore read the PSR curves in Fig.~\ref{fig:DP_psr} empirically, as a \textit{diagnostic} rather than a ground–truth ranking of privacy strength.

As $\beta$ decreases (higher privacy), all mechanisms tend to push PSR toward $0.5$ (chance), but the trends are not strictly monotone and the rankings cross. 
This is expected.
In particular, DP is generally conservative and can display lower PSR at very small $\beta$, while PAC–based methods may sit closer to chance but with visible fluctuations. Auto-PAC and Efficient-PAC allocate anisotropic noise from second–order structure (covariance scaling or eigen–allocation); in small–sample regimes (e.g, Iris and Rice), those moment estimates are noisy or ill–conditioned, so the resulting implementations are unstable and can become conservative (over–noisy), which may depress PSR at a fixed $\beta$. 

Our goal with SR-PAC is privacy budget fidelity (i.e., to address the conservativeness of Auto-PAC and Efficient-PAC privatization), not to minimize PSR per se.
SR-PAC enforces the conditional–entropy target directly and typically attains the desired leakage with less noise than Auto-PAC and Efficient-PAC. 
Hence it is plausible—and observed in Fig.~\ref{fig:DP_psr}—that SR-PAC's PSR can be comparable to, or occasionally above, over–noised baselines at the same $\beta$. 
The take–away is that PSR complements our main metric: SR-PAC achieves tighter budget implementation with lower noise power, while differences in PSR reflect both budget alignment and attack/model mismatch rather than a strict ordering induced by $\beta$.

\subsection{Discussion On The results of Fig. \ref{fig:all_metrics}}

Fig.~\ref{fig:all_metrics} empirically demonstrates that SR-PAC enforces the mutual-information budget more tightly than Auto-PAC and Efficient-PAC, yielding higher accuracy with lower noise. The performance gaps (both accuracy and noise magnitude) increase as $\beta$ decreases (i.e., in higher-privacy regimes). This behavior can be understood as follows.

\textit{Budget alignment vs. Gaussian surrogate.}
SR-PAC enforces the privacy constraint directly in terms of conditional entropy, aligning the mutual information budget with the actual leakage bound. In contrast, Auto-PAC and Efficient-PAC rely on Gaussian surrogates that ignore the higher-order, non-Gaussian structure of the outputs, leading to conservative privacy budget implementation. This conservativeness can lead them to add more noise than necessary to meet a given $\beta$, and the inefficiency becomes a larger fraction of the total budget when $\beta$ is small, amplifying their disadvantage in high-privacy regimes.

\textit{Directional selectivity under tight budgets.}
SR-PAC learns an anisotropic, task-directed noise shape that concentrates perturbations away from task-critical directions (Theorem \ref{prop:directional}), thereby preserving classification margins while still raising conditional entropy to the desired level. By contrast, Auto-PAC and Efficient-PAC are also anisotropic but task-agnostic: Auto-PAC scales the raw logit covariance ($\Sigma_B \propto \Sigma_{\text{raw}}$), and Efficient-PAC allocates along the eigenbasis with $e_i \propto \sqrt{\lambda_i}$. Both rely only on second-order statistics and ignore label-conditioned and higher-order structure, so they can spend budget along decision-sensitive axes whenever those coincide with high-variance directions. This mismatch is particularly costly under tight privacy budgets (small $\beta$), where misallocated power yields larger accuracy loss for the same leakage target.

\textit{Non-Gaussian exploitation.}
As $\beta$ decreases, the non-Gaussian structure of the outputs matters more. SR-PAC can use less total noise by optimally exploiting the geometries of the outputs (e.g., via leveraging flexible posteriors and calibration with fixed common random numbers).
Gaussian surrogates cannot capture this effect, so their "privacy per unit noise" degrades as $\beta$ shrinks.

\textit{Utility sensitivity.}
Viewing the required noise power as a utility curve $\mathtt{Util}^{\mathrm{SR}}_{\beta}$ (Theorem \ref{thm:speed_general}), SR-PAC exhibits better sensitivity (i.e., larger accuracy retention per unit budget). As $\beta$ decreases, the noise power of Auto-PAC and Efficient-PAC baselines grows faster than that of SR-PAC, widening the gap in both accuracy and magnitude.
When $\beta$ is large (loose privacy), all methods add little noise and their performance converges. As $\beta$ decreases (stricter privacy), SR-PAC optimally places noise where it is least harmful, yielding progressively larger gains over Auto-PAC and Efficient-PAC in both accuracy and noise efficiency.

\begin{table*}[t]
\centering
\caption{CIFAR-10 dataset results. Accuracy: higher is better. Noise Magnitude $\mathbb{E}[\lVert B\rVert^2_2]$: lower is better. Target Match: smaller $\Delta$/Rel.\% is better. All results are averaged over 35 trials.}
\label{tab:values_C10}
\setlength{\tabcolsep}{3.5pt}

\begin{subtable}[t]{0.32\textwidth}
\centering
\caption{Accuracy (Fig. \ref{fig:c10_acc})}
\label{tab:acc_case2}
\scriptsize
\begin{tabular}{@{}r S[table-format=1.2] S[table-format=1.2] S[table-format=1.2]@{}}
\toprule
{$\beta$} & {\textbf{Auto-PAC}} & {\textbf{Efficient-PAC}} & {\textbf{SR-PAC}} \\
\midrule
10.00 & 64.3\% & 62.8\% & \textbf{67.7\%} \\
9.50 & 63.5\% & 61.8\% & \textbf{67.2\%} \\
9.00 & 62.8\% & 60.8\% & \textbf{66.7\%} \\
8.50 & 61.8\% & 59.9\% & \textbf{66.0\%} \\
8.00 & 61.0\% & 58.8\% & \textbf{65.2\%} \\
7.50 & 59.7\% & 57.8\% & \textbf{64.3\%} \\
7.00 & 58.5\% & 56.7\% & \textbf{63.5\%} \\
6.50 & 57.0\% & 55.5\% & \textbf{62.6\%} \\
6.00 & 55.5\% & 54.1\% & \textbf{61.3\%} \\
5.50 & 53.9\% & 52.5\% & \textbf{60.0\%} \\
5.00 & 52.1\% & 50.9\% & \textbf{58.7\%} \\
4.50 & 50.0\% & 49.1\% & \textbf{57.0\%} \\
4.00 & 48.0\% & 46.8\% & \textbf{55.0\%} \\
3.50 & 45.6\% & 44.2\% & \textbf{53.1\%} \\
3.00 & 42.6\% & 41.8\% & \textbf{50.9\%} \\
2.50 & 39.3\% & 38.6\% & \textbf{48.2\%} \\
2.00 & 35.8\% & 35.2\% & \textbf{45.1\%} \\
1.50 & 31.5\% & 31.3\% & \textbf{41.4\%} \\
1.00 & 26.8\% & 26.8\% & \textbf{36.4\%} \\
0.50 & 20.8\% & 20.9\% & \textbf{30.9\%} \\
0.25 & 17.0\% & 17.1\% & \textbf{27.5\%} \\
\bottomrule
\end{tabular}
\end{subtable}
\hfill
\begin{subtable}[t]{0.32\textwidth}
\centering
\caption{Noise Magnitude (Fig. \ref{fig:c10_noise})}
\label{tab:noise_case2}
\scriptsize
\begin{tabular}{@{}r S[table-format=4.0] S[table-format=4.0] S[table-format=4.0]@{}}
\toprule
{$\beta$} & {\textbf{Auto-PAC}} & {\textbf{Efficient-PAC}} & {\textbf{SR-PAC}} \\
\midrule
10.00 & 203.614 & 194.129 & \textbf{86.717} \\
9.50 & 228.794 & 218.269 & \textbf{101.123} \\
9.00 & 257.622 & 245.929 & \textbf{113.220} \\
8.50 & 290.772 & 277.767 & \textbf{127.892} \\
8.00 & 329.089 & 314.602 & \textbf{144.288} \\
7.50 & 373.640 & 357.471 & \textbf{162.920} \\
7.00 & 425.798 & 407.708 & \textbf{184.447} \\
6.50 & 487.357 & 467.057 & \textbf{209.897} \\
6.00 & 560.704 & 537.842 & \textbf{238.969} \\
5.50 & 649.098 & 623.229 & \textbf{272.759} \\
5.00 & 757.093 & 727.649 & \textbf{311.272} \\
4.50 & 891.269 & 857.500 & \textbf{357.395} \\
4.00 & 1061.490 & 1022.375 & \textbf{409.579} \\
3.50 & 1283.251 & 1237.343 & \textbf{474.419} \\
3.00 & 1582.374 & 1527.514 & \textbf{553.758} \\
2.50 & 2005.329 & 1938.079 & \textbf{652.985} \\
2.00 & 2645.047 & 2559.404 & \textbf{781.609} \\
1.50 & 3718.355 & 3602.333 & \textbf{979.469} \\
1.00 & 5875.714 & 5699.364 & \textbf{1290.274} \\
0.50 & 12369.384 & 12012.950 & \textbf{1844.353} \\
0.25 & 25372.957 & 24657.045 & \textbf{2327.319} \\
\bottomrule
\end{tabular}
\end{subtable}
\hfill
\begin{subtable}[t]{0.32\textwidth}
\centering
\caption{Target Match (Fig. \ref{fig:c10_match})}
\label{tab:target_case2}
\scriptsize
\begin{tabular}{@{}r S[table-format=2.4] S[table-format=2.4] S[table-format=1.6, round-mode=places, round-precision=4] S[table-format=2.2]@{}}
\toprule
{$\beta$} & {Target} & {\textbf{Achieved}} & {$\Delta$} & {Rel.\%} \\
\midrule
10.00 & 3.0000 & \textbf{2.9762} & 0.0238 & 0.79 \\
9.50 & 3.5000 & \textbf{3.4901} & 0.0099 & 0.28 \\
9.00 & 4.0000 & \textbf{3.9872} & 0.0128 & 0.32 \\
8.50 & 4.5000 & \textbf{4.4889} & 0.0111 & 0.25 \\
8.00 & 5.0000 & \textbf{4.9873} & 0.0127 & 0.25 \\
7.50 & 5.5000 & \textbf{5.4868} & 0.0132 & 0.24 \\
7.00 & 6.0000 & \textbf{5.9853} & 0.0147 & 0.24 \\
6.50 & 6.5000 & \textbf{6.4895} & 0.0105 & 0.16 \\
6.00 & 7.0000 & \textbf{6.9897} & 0.0103 & 0.15 \\
5.50 & 7.5000 & \textbf{7.4881} & 0.0119 & 0.16 \\
5.00 & 8.0000 & \textbf{7.9863} & 0.0137 & 0.17 \\
4.50 & 8.5000 & \textbf{8.4852} & 0.0148 & 0.17 \\
4.00 & 9.0000 & \textbf{8.9867} & 0.0133 & 0.15 \\
3.50 & 9.5000 & \textbf{9.4868} & 0.0132 & 0.14 \\
3.00 & 10.0000 & \textbf{9.9873} & 0.0127 & 0.13 \\
2.50 & 10.5000 & \textbf{10.4908} & 0.0092 & 0.09 \\
2.00 & 11.0000 & \textbf{10.9896} & 0.0104 & 0.09 \\
1.50 & 11.5000 & \textbf{11.4915} & 0.0085 & 0.07 \\
1.00 & 12.0000 & \textbf{11.9920} & 0.0080 & 0.07 \\
0.50 & 12.5000 & \textbf{12.4939} & 0.0061 & 0.05 \\
0.25 & 12.7500 & \textbf{12.7420} & 0.0080 & 0.06 \\
\bottomrule
\end{tabular}
\end{subtable}

\end{table*}

\begin{table*}[t]
\centering
\caption{CIFAR-100 dataset results. Accuracy: higher is better. Noise Magnitude $\mathbb{E}[\lVert B\rVert^2_2]$: lower is better. Target Match: smaller $\Delta$/Rel.\% is better. All results are averaged over 35 trials.}
\label{tab:values_C100}
\setlength{\tabcolsep}{3.5pt}

\begin{subtable}[t]{0.32\textwidth}
\centering
\caption{Accuracy (Fig. \ref{fig:c100_acc})}
\label{tab:acc_corr}
\scriptsize
\begin{tabular}{@{}r S[table-format=2.1] S[table-format=2.1] S[table-format=2.1]@{}}
\toprule
{$\beta$} & {\textbf{Auto-PAC}} & {\textbf{Efficient-PAC}} & {\textbf{SR-PAC}} \\
\midrule
80.00 & 55.7\% & 56.4\% & \textbf{59.1\%} \\
75.00 & 55.2\% & 56.2\% & \textbf{59.3\%} \\
70.00 & 54.5\% & 56.0\% & \textbf{58.4\%} \\
65.00 & 53.8\% & 55.7\% & \textbf{57.8\%} \\
60.00 & 52.8\% & 55.4\% & \textbf{57.2\%} \\
55.00 & 51.8\% & 55.1\% & \textbf{56.2\%} \\
50.00 & 50.4\% & 54.6\% & \textbf{55.7\%} \\
45.00 & 48.9\% & 54.1\% & \textbf{55.8\%} \\
40.00 & 46.9\% & 53.7\% & \textbf{54.8\%} \\
35.00 & 44.4\% & 52.9\% & \textbf{53.4\%} \\
30.00 & 41.3\% & 51.6\% & \textbf{52.1\%} \\
25.00 & 37.6\% & 50.1\% & \textbf{50.3\%} \\
20.00 & 33.3\% & 48.3\% & \textbf{49.2\%} \\
15.00 & 27.5\% & 45.4\% & \textbf{48.0\%} \\
10.00 & 20.4\% & 39.8\% & \textbf{46.6\%} \\
7.00  & 15.2\% & 34.2\% & \textbf{46.3\%} \\
5.00  & 11.7\% & 28.7\% & \textbf{45.5\%} \\
4.00  & 9.6\%  & 24.7\% & \textbf{45.0\%} \\
3.00  & 7.5\%  & 20.1\% & \textbf{45.1\%} \\
2.00  & 5.5\%  & 15.0\% & \textbf{44.8\%} \\
1.00  & 3.3\%  & 8.7\%  & \textbf{45.0\%} \\
0.50  & 2.3\%  & 5.1\%  & \textbf{45.1\%} \\
\bottomrule
\end{tabular}
\end{subtable}
\hfill
\begin{subtable}[t]{0.32\textwidth}
\centering
\caption{Noise Magnitude (Fig. \ref{fig:c10_noise})}
\label{tab:noise_corr}
\scriptsize
\begin{tabular}{@{}r S[table-format=6.3] S[table-format=6.3] S[table-format=6.3]@{}}
\toprule
{$\beta$} & {\textbf{Auto-PAC}} & {\textbf{Efficient-PAC}} & {\textbf{SR-PAC}} \\
\midrule
80.00 & 295.712 & 156.079 & \textbf{0.015} \\
75.00 & 335.745 & 166.485 & \textbf{9.210} \\
70.00 & 382.613 & 178.377 & \textbf{73.285} \\
65.00 & 437.928 & 192.098 & \textbf{121.950} \\
60.00 & 503.837 & 208.106 & \textbf{173.816} \\
55.00 & 583.265 & 227.025 & \textbf{217.047} \\
50.00 & 680.307 & 249.727 & \textbf{256.837} \\
45.00 & 800.875 & 277.474 & \textbf{267.036} \\
40.00 & 953.831 & 312.159 & \textbf{318.676} \\
35.00 & 1153.101 & 356.753 & \textbf{409.873} \\
30.00 & 1421.886 & 416.212 & \textbf{480.533} \\
25.00 & 1801.943 & 499.454 & \textbf{659.901} \\
20.00 & 2376.780 & 624.318 & \textbf{709.042} \\
15.00 & 3341.231 & 832.423 & \textbf{799.096} \\
10.00 & 5279.782 & 1248.635 & \textbf{927.919} \\
7.00  & 7778.857 & 1783.764 & \textbf{955.096} \\
5.00  & 11114.847 & 2497.270 & \textbf{1060.981} \\
4.00  & 14035.296 & 3121.587 & \textbf{1027.777} \\
3.00  & 18904.016 & 4162.116 & \textbf{1057.745} \\
2.00  & 28643.398 & 6243.175 & \textbf{1114.597} \\
1.00  & 57865.395 & 12486.350 & \textbf{1059.941} \\
0.50 & 116312.383 & 24972.699 & \textbf{1062.913} \\
\bottomrule
\end{tabular}
\end{subtable}
\hfill
\begin{subtable}[t]{0.32\textwidth}
\centering
\caption{Target Match (Fig. \ref{fig:c10_match})}
\label{tab:target_corr}
\scriptsize
\begin{tabular}{@{}r S[table-format=3.4] S[table-format=3.4] S[table-format=1.6, round-mode=places, round-precision=4] S[table-format=2.2]@{}}
\toprule
{$\beta$} & {Target} & {\textbf{Achieved}} & {$\Delta$} & {Rel.\%} \\
\midrule
80.00 & 30.0000  & \textbf{32.9105} & 2.9105  & 9.70 \\
75.00 & 35.0000  & \textbf{35.0027} & 0.0027  & 0.01 \\
70.00 & 40.0000  & \textbf{40.0269} & 0.0269  & 0.07 \\
65.00 & 45.0000  & \textbf{45.0010} & 0.0010  & 0.00 \\
60.00 & 50.0000  & \textbf{50.0244} & 0.0244  & 0.05 \\
55.00 & 55.0000  & \textbf{55.0116} & 0.0116  & 0.02 \\
50.00 & 60.0000  & \textbf{60.0306} & 0.0306  & 0.05 \\
45.00 & 65.0000  & \textbf{65.0453} & 0.0453  & 0.07 \\
40.00 & 70.0000  & \textbf{70.0253} & 0.0253  & 0.04 \\
35.00 & 75.0000  & \textbf{75.0114} & 0.0114  & 0.02 \\
30.00 & 80.0000  & \textbf{80.0208} & 0.0208  & 0.03 \\
25.00 & 85.0000  & \textbf{85.0612} & 0.0612  & 0.07 \\
20.00 & 90.0000  & \textbf{90.0738} & 0.0738  & 0.08 \\
15.00 & 95.0000  & \textbf{95.0488} & 0.0488  & 0.05 \\
10.00 & 100.0000 & \textbf{100.0729} & 0.0729  & 0.07 \\
7.00  & 103.0000 & \textbf{103.0800} & 0.0800  & 0.08 \\
5.00  & 105.0000 & \textbf{105.0304} & 0.0304  & 0.03 \\
4.00  & 106.0000 & \textbf{106.1096} & 0.1096  & 0.10 \\
3.00  & 107.0000 & \textbf{107.0857} & 0.0857  & 0.08 \\
2.00  & 108.0000 & \textbf{108.0618} & 0.0618  & 0.06 \\
1.00  & 109.0000 & \textbf{109.0935} & 0.0935  & 0.09 \\
0.50 & 109.5000 & \textbf{109.5334} & 0.0334  & 0.03 \\
\bottomrule
\end{tabular}
\end{subtable}

\end{table*}

\begin{table*}[t]
\centering
\caption{MNIST dataset. Accuracy: higher is better. Noise Magnitude $\mathbb{E}[\lVert B\rVert^2_2]$: lower is better. Target Match: smaller $\Delta$/Rel.\% is better. All results are averaged over 35 trials.}
\label{tab:values_MNIST}
\setlength{\tabcolsep}{3.5pt} 

\begin{subtable}[t]{0.32\textwidth}
\centering
\caption{Accuracy (Fig. \ref{fig:mnist_acc})}
\label{tab:acc_beta_le7}
\scriptsize
\begin{tabular}{@{}r S[table-format=2.1] S[table-format=2.1] S[table-format=2.1]@{}}
\toprule
{$\beta$} & {\textbf{Auto-PAC}} & {\textbf{Efficient-PAC}} & {\textbf{SR-PAC}} \\
\midrule
7.00 & 92.3\% & 85.6\% & \textbf{98.4\%} \\
6.50 & 91.3\% & 84.3\% & \textbf{98.4\%} \\
6.00 & 89.4\% & 82.9\% & \textbf{98.4\%} \\
5.50 & 87.1\% & 81.0\% & \textbf{97.0\%} \\
5.00 & 84.4\% & 78.9\% & \textbf{95.4\%} \\
4.50 & 81.1\% & 76.5\% & \textbf{93.4\%} \\
4.00 & 77.5\% & 73.8\% & \textbf{91.3\%} \\
3.50 & 73.2\% & 70.4\% & \textbf{89.1\%} \\
3.00 & 68.0\% & 66.5\% & \textbf{87.1\%} \\
2.50 & 62.1\% & 62.0\% & \textbf{85.2\%} \\
2.00 & 55.3\% & 56.7\% & \textbf{83.6\%} \\
1.50 & 47.3\% & 50.2\% & \textbf{82.0\%} \\
1.00 & 38.5\% & 41.9\% & \textbf{80.5\%} \\
0.50 & 28.1\% & 30.6\% & \textbf{79.1\%} \\
0.25 & 21.0\% & 23.5\% & \textbf{78.3\%} \\
\bottomrule
\end{tabular}
\end{subtable}
\hfill
\begin{subtable}[t]{0.32\textwidth}
\centering
\caption{Noise Magnitude (Fig. \ref{fig:mnist_noise})}
\label{tab:noise_beta_le7}
\scriptsize
\begin{tabular}{@{}r S[table-format=4.0] S[table-format=4.0] S[table-format=4.0]@{}}
\toprule
{$\beta$} & {\textbf{Auto-PAC}} & {\textbf{Efficient-PAC}} & {\textbf{SR-PAC}} \\
\midrule
7.00 & 86.743 & 133.458 & \textbf{0.000} \\
6.50 & 99.284 & 143.724 & \textbf{0.000} \\
6.00 & 114.226 & 155.701 & \textbf{0.000} \\
5.50 & 132.234 & 169.855 & \textbf{28.292} \\
5.00 & 154.234 & 186.841 & \textbf{55.036} \\
4.50 & 181.569 & 207.601 & \textbf{77.870} \\
4.00 & 216.246 & 233.551 & \textbf{97.935} \\
3.50 & 261.423 & 266.915 & \textbf{116.500} \\
3.00 & 322.360 & 311.401 & \textbf{133.862} \\
2.50 & 408.524 & 373.682 & \textbf{150.256} \\
2.00 & 538.847 & 467.102 & \textbf{164.749} \\
1.50 & 757.501 & 622.803 & \textbf{179.350} \\
1.00 & 1196.996 & 934.204 & \textbf{194.086} \\
0.50 & 2519.882 & 1868.408 & \textbf{208.289} \\
0.25 & 5168.960 & 3736.815 & \textbf{215.495} \\
\bottomrule
\end{tabular}
\end{subtable}
\hfill
\begin{subtable}[t]{0.32\textwidth}
\centering
\caption{Target Match (Fig. \ref{fig:mnist_match})}
\label{tab:target_beta_le7}
\scriptsize
\begin{tabular}{@{}r S[table-format=2.4] S[table-format=2.4] S[table-format=1.6, round-mode=places, round-precision=4] S[table-format=2.2]@{}}
\toprule
{$\beta$} & {Target} & {\textbf{Achieved}} & {$\Delta$} & {Rel.\%} \\
\midrule
7.00 & 4.7478 & \textbf{5.8283} & 1.0805 & 22.76 \\
6.50 & 5.2478 & \textbf{5.8283} & 0.5805 & 11.06 \\
6.00 & 5.7478 & \textbf{5.8283} & 0.0805 & 1.40 \\
5.50 & 6.2478 & \textbf{6.2481} & 0.0003 & 0.00 \\
5.00 & 6.7478 & \textbf{6.7489} & 0.0011 & 0.02 \\
4.50 & 7.2478 & \textbf{7.2529} & 0.0051 & 0.07 \\
4.00 & 7.7478 & \textbf{7.7541} & 0.0063 & 0.08 \\
3.50 & 8.2478 & \textbf{8.2578} & 0.0100 & 0.12 \\
3.00 & 8.7478 & \textbf{8.7573} & 0.0095 & 0.11 \\
2.50 & 9.2478 & \textbf{9.2510} & 0.0032 & 0.03 \\
2.00 & 9.7478 & \textbf{9.7523} & 0.0045 & 0.05 \\
1.50 & 10.2478 & \textbf{10.2496} & 0.0018 & 0.02 \\
1.00 & 10.7478 & \textbf{10.7519} & 0.0041 & 0.04 \\
0.50 & 11.2478 & \textbf{11.2606} & 0.0128 & 0.11 \\
0.25 & 11.4978 & \textbf{11.5088} & 0.0110 & 0.10 \\
\bottomrule
\end{tabular}
\end{subtable}

\end{table*}

\begin{table*}[t]
\centering
\caption{Ag-news dataset results. Accuracy: higher is better. Noise Magnitude $\mathbb{E}[\lVert B\rVert^2_2]$: lower is better. Target Match: smaller $\Delta$/Rel.\% is better. All results are averaged over 35 trials.}
\label{tab:values_ag_news}
\setlength{\tabcolsep}{3.5pt}

\begin{subtable}[t]{0.32\textwidth}
\centering
\caption{Accuracy (Fig. \ref{fig:ag_acc})}
\label{tab:acc_filtered}
\scriptsize
\begin{tabular}{@{}r S[table-format=2.1] S[table-format=2.1] S[table-format=2.1]@{}}
\toprule
{$\beta$} & {Auto-PAC} & {Efficient-PAC} & {\textbf{SR-PAC}} \\
\midrule
9.50 & 96.7\% & 89.6\% & \textbf{96.6\%} \\
9.00 & 96.6\% & 89.3\% & \textbf{95.8\%} \\
8.50 & 96.5\% & 88.9\% & \textbf{95.1\%} \\
8.00 & 96.3\% & 88.4\% & \textbf{94.6\%} \\
7.50 & 96.0\% & 87.9\% & \textbf{93.5\%} \\
7.00 & 95.7\% & 87.3\% & \textbf{92.6\%} \\
6.50 & 95.3\% & 86.7\% & \textbf{92.3\%} \\
6.00 & 94.8\% & 85.9\% & \textbf{91.8\%} \\
5.50 & 94.1\% & 85.1\% & \textbf{91.6\%} \\
5.00 & 93.1\% & 84.2\% & \textbf{91.6\%} \\
4.50 & 91.9\% & 83.0\% & \textbf{89.3\%} \\
4.00 & 90.2\% & 81.7\% & \textbf{89.5\%} \\
3.50 & 88.1\% & 80.1\% & \textbf{89.5\%} \\
3.00 & 85.2\% & 78.1\% & \textbf{89.4\%} \\
2.50 & 81.5\% & 75.6\% & \textbf{87.1\%} \\
2.00 & 76.6\% & 72.4\% & \textbf{88.6\%} \\
1.50 & 70.2\% & 68.0\% & \textbf{86.2\%} \\
1.00 & 61.8\% & 61.8\% & \textbf{87.8\%} \\
0.50 & 50.2\% & 52.0\% & \textbf{87.0\%} \\
0.25 & 42.2\% & 44.2\% & \textbf{87.6\%} \\
0.08 & 35.4\% & 36.9\% & \textbf{85.1\%} \\
0.06 & 34.2\% & 35.6\% & \textbf{86.0\%} \\
0.02 & 33.6\% & 34.9\% & \textbf{86.8\%} \\
\bottomrule
\end{tabular}
\end{subtable}
\hfill
\begin{subtable}[t]{0.32\textwidth}
\centering
\caption{Noise Magnitude (Fig. \ref{fig:ag_noise})}
\label{tab:noise_filtered}
\scriptsize
\begin{tabular}{@{}r S[table-format=5.1] S[table-format=5.1] S[table-format=5.2]@{}}
\toprule
{$\beta$} & {Auto-PAC} & {Efficient-PAC} & {\textbf{SR-PAC}} \\
\midrule
9.50 & 1.2   & 20.7   & \textbf{0.54} \\
9.00 & 1.5   & 21.9   & \textbf{1.79} \\
8.50 & 2.0   & 23.2   & \textbf{3.07} \\
8.00 & 2.5   & 24.6   & \textbf{3.99} \\
7.50 & 3.3   & 26.2   & \textbf{6.05} \\
7.00 & 4.3   & 28.1   & \textbf{7.70} \\
6.50 & 5.5   & 30.3   & \textbf{8.70} \\
6.00 & 7.2   & 32.8   & \textbf{9.90} \\
5.50 & 9.3   & 35.8   & \textbf{10.57} \\
5.00 & 12.2  & 39.4   & \textbf{10.24} \\
4.50 & 16.1  & 43.7   & \textbf{15.72} \\
4.00 & 21.4  & 49.2   & \textbf{15.88} \\
3.50 & 28.7  & 56.2   & \textbf{16.05} \\
3.00 & 39.2  & 65.6   & \textbf{16.49} \\
2.50 & 54.8  & 78.7   & \textbf{23.51} \\
2.00 & 79.5  & 98.4   & \textbf{19.39} \\
1.50 & 122.3 & 131.2  & \textbf{25.96} \\
1.00 & 210.5 & 196.8  & \textbf{21.14} \\
0.50 & 480.8 & 393.6  & \textbf{23.16} \\
0.25 & 1025.7 & 787.3 & \textbf{21.77} \\
0.08 & 3346.4 & 2460.3 & \textbf{26.33} \\
0.06 & 4484.3 & 3280.3 & \textbf{27.45} \\
0.02 & 13588.6 & 9841.0 & \textbf{29.47} \\
\bottomrule
\end{tabular}
\end{subtable}
\hfill
\begin{subtable}[t]{0.32\textwidth}
\centering
\caption{Target Match (Fig. \ref{fig:ag_match})}
\label{tab:target_filtered}
\scriptsize
\begin{tabular}{@{}r S[table-format=2.4] S[table-format=2.4] S[table-format=1.6, round-mode=places, round-precision=4] S[table-format=2.2]@{}}
\toprule
{$\beta$} & {Target} & {\textbf{Achieved}} & {$\Delta$} & {Rel.\%} \\
\midrule
9.50 & 0.2109 & \textbf{0.2112} & 0.0003 & 0.14 \\
9.00 & 0.7109 & \textbf{0.7119} & 0.0010 & 0.14 \\
8.50 & 1.2109 & \textbf{1.2166} & 0.0057 & 0.47 \\
8.00 & 1.7109 & \textbf{1.7179} & 0.0070 & 0.41 \\
7.50 & 2.2109 & \textbf{2.2207} & 0.0098 & 0.44 \\
7.00 & 2.7109 & \textbf{2.7178} & 0.0069 & 0.25 \\
6.50 & 3.2109 & \textbf{3.2152} & 0.0043 & 0.13 \\
6.00 & 3.7109 & \textbf{3.7220} & 0.0111 & 0.30 \\
5.50 & 4.2109 & \textbf{4.2152} & 0.0043 & 0.10 \\
5.00 & 4.7109 & \textbf{4.7198} & 0.0089 & 0.19 \\
4.50 & 5.2109 & \textbf{5.2149} & 0.0040 & 0.08 \\
4.00 & 5.7109 & \textbf{5.7188} & 0.0079 & 0.14 \\
3.50 & 6.2109 & \textbf{6.2142} & 0.0033 & 0.05 \\
3.00 & 6.7109 & \textbf{6.7129} & 0.0020 & 0.03 \\
2.50 & 7.2109 & \textbf{7.2273} & 0.0164 & 0.23 \\
2.00 & 7.7109 & \textbf{7.7260} & 0.0151 & 0.20 \\
1.50 & 8.2109 & \textbf{8.2306} & 0.0197 & 0.24 \\
1.00 & 8.7109 & \textbf{8.7154} & 0.0045 & 0.05 \\
0.50 & 9.2109 & \textbf{9.2115} & 0.0006 & 0.01 \\
0.25 & 9.4609 & \textbf{9.4754} & 0.0145 & 0.15 \\
0.08 & 9.6309 & \textbf{9.6524} & 0.0215 & 0.22 \\
0.06 & 9.6509 & \textbf{9.6534} & 0.0025 & 0.03 \\
0.02 & 9.6909 & \textbf{9.6975} & 0.0066 & 0.07 \\
\bottomrule
\end{tabular}
\end{subtable}

\end{table*}

\begin{table*}[t]
\centering
\caption{Iris dataset results. Empirical Posterior Success Rate (PSR), Noise Magnitude, and Target Match. Empirical PSR: higher is better. Noise Magnitude: lower is better. Target Match: smaller $\Delta$/Rel.\% is better. All results are averaged over 35 trials.}
\label{tab:Iris_value}
\setlength{\tabcolsep}{3.5pt}

\begin{subtable}[t]{0.32\textwidth}
\centering
\caption{Empirical PSR}
\label{tab:psr_le_point5_case2}
\scriptsize
\begin{tabular}{@{}r S[table-format=1.4] S[table-format=1.4,round-mode=places, round-precision=4] S[table-format=1.4,round-mode=places, round-precision=4] S[table-format=1.4,round-mode=places, round-precision=4]@{}}
\toprule
{$\beta$} & {\textbf{SR-PAC}} & {\textbf{DP}} & {\textbf{Auto-PAC}} & {\textbf{Efficient-PAC}} \\
\midrule
0.5000 & \textbf{0.5560} & 0.5249 & 0.5609 & 0.5547 \\
0.2500 & \textbf{0.5643} & 0.5202 & 0.5415 & 0.5225 \\
0.1250 & \textbf{0.5538} & 0.5197 & 0.5391 & 0.5187 \\
0.0625 & \textbf{0.5695} & 0.5089 & 0.5284 & 0.5325 \\
0.0312 & \textbf{0.5330} & 0.5112 & 0.5218 & 0.5165 \\
0.0156 & \textbf{0.5141} & 0.5176 & 0.5241 & 0.5399 \\
0.0078 & \textbf{0.5278} & 0.5201 & 0.5422 & 0.5236 \\
\bottomrule
\end{tabular}
\end{subtable}
\hfill
\begin{subtable}[t]{0.32\textwidth}
\centering
\caption{Noise Magnitude}
\label{tab:noise_le_point5_case2}
\scriptsize
\begin{tabular}{@{}r S[table-format=1.6] S[table-format=1.6,round-mode=places, round-precision=4] S[table-format=1.6,round-mode=places, round-precision=4] S[table-format=1.6,round-mode=places, round-precision=4]@{}}
\toprule
{$\beta$} & {\textbf{SR-PAC}} & {\textbf{DP}} & {\textbf{Auto-PAC}} & {\textbf{Efficient-PAC}} \\
\midrule
0.5000 & \textbf{0.006399} & 0.004223 & 0.086523 & 0.025572 \\
0.2500 & \textbf{0.006410} & 0.008499 & 0.090899 & 0.026145 \\
0.1250 & \textbf{0.007846} & 0.013628 & 0.107344 & 0.032059 \\
0.0620 & \textbf{0.007379} & 0.020764 & 0.099071 & 0.028286 \\
0.0310 & \textbf{0.007079} & 0.031401 & 0.108686 & 0.039986 \\
0.0160 & \textbf{0.007284} & 0.048060 & 0.114727 & 0.037930 \\
0.0080 & \textbf{0.007979} & 0.075252 & 0.116107 & 0.038072 \\
\bottomrule
\end{tabular}
\end{subtable}
\hfill
\begin{subtable}[t]{0.32\textwidth}
\centering
\caption{Target Match (SR-PAC)}
\label{tab:target_le_point5_case2}
\scriptsize
\begin{tabular}{@{}r S[table-format=2.3] S[table-format=2.3] S[table-format=1.4,round-mode=places, round-precision=4] S[table-format=2.2]@{}}
\toprule
{$\beta$} & {Target $H_c$} & {\textbf{Achieved $H_c$}} & {$\Delta$} & {Rel.\%} \\
\midrule
0.5000 & 20.294 & \textbf{20.467} & 0.1730 & 0.85 \\
0.2500 & 20.544 & \textbf{20.600} & 0.0560 & 0.27 \\
0.1250 & 20.669 & \textbf{20.884} & 0.2150 & 1.04 \\
0.0620 & 20.732 & \textbf{20.889} & 0.1570 & 0.76 \\
0.0310 & 20.763 & \textbf{20.780} & 0.0170 & 0.08 \\
0.0160 & 20.779 & \textbf{20.776} & -0.0030 & -0.01 \\
0.0080 & 20.787 & \textbf{20.908} & 0.1210 & 0.58 \\
\bottomrule
\end{tabular}
\end{subtable}

\end{table*}

\begin{table*}[t]
\centering
\caption{Rice dataset results. Empirical Posterior Success Rate (PSR), Noise Magnitude, and Target Match. Empirical PSR: higher is better. Noise Magnitude: lower is better. Target Match: smaller $\Delta$/Rel.\% is better. All results are averaged over 35 trials.}
\label{tab:Rice_value}
\setlength{\tabcolsep}{3.5pt}

\begin{subtable}[t]{0.32\textwidth}
\centering
\caption{Empirical PSR}
\label{tab:psr_le_point5}
\scriptsize
\begin{tabular}{@{}r S[table-format=1.4] S[table-format=1.4,round-mode=places, round-precision=4] S[table-format=1.4,round-mode=places, round-precision=4] S[table-format=1.4,round-mode=places, round-precision=4]@{}}
\toprule
{$\beta$} & {\textbf{SR-PAC}} & {\textbf{DP}} & {\textbf{Auto-PAC}} & {\textbf{Efficient-PAC}} \\
\midrule
0.5000 & \textbf{0.5506} & 0.5409 & 0.5672 & 0.5752 \\
0.2500 & \textbf{0.5940} & 0.5383 & 0.5738 & 0.5825 \\
0.1250 & \textbf{0.5631} & 0.5354 & 0.5812 & 0.5690 \\
0.0625 & \textbf{0.5697} & 0.5606 & 0.5507 & 0.5695 \\
0.0312 & \textbf{0.5762} & 0.5452 & 0.5749 & 0.5698 \\
0.0156 & \textbf{0.5775} & 0.5377 & 0.5534 & 0.5294 \\
0.0078 & \textbf{0.5782} & 0.5325 & 0.5736 & 0.5505 \\
\bottomrule
\end{tabular}
\end{subtable}
\hfill
\begin{subtable}[t]{0.32\textwidth}
\centering
\caption{Noise Magnitude}
\label{tab:noise_le_point5}
\scriptsize
\begin{tabular}{@{}r S[table-format=1.6] S[table-format=1.6,round-mode=places, round-precision=4] S[table-format=1.6,round-mode=places, round-precision=4] S[table-format=1.6,round-mode=places, round-precision=4,round-mode=places, round-precision=4]@{}}
\toprule
{$\beta$} & {\textbf{SR-PAC}} & {\textbf{DP}} & {\textbf{Auto-PAC}} & {\textbf{Efficient-PAC}} \\
\midrule
0.5000 & \textbf{0.003159} & 0.001996 & 0.116784 & 0.011234 \\
0.2500 & \textbf{0.003148} & 0.002985 & 0.129103 & 0.011742 \\
0.1250 & \textbf{0.003217} & 0.003812 & 0.141698 & 0.016540 \\
0.0620 & \textbf{0.003230} & 0.004668 & 0.134833 & 0.011502 \\
0.0310 & \textbf{0.003427} & 0.005631 & 0.158874 & 0.022465 \\
0.0160 & \textbf{0.003380} & 0.006770 & 0.228656 & 0.047462 \\
0.0080 & \textbf{0.003985} & 0.008165 & 0.213591 & 0.090694 \\
\bottomrule
\end{tabular}
\end{subtable}
\hfill
\begin{subtable}[t]{0.32\textwidth}
\centering
\caption{Target Match (SR-PAC)}
\label{tab:target_le_point5}
\scriptsize
\begin{tabular}{@{}r S[table-format=2.3] S[table-format=2.3] S[table-format=1.4,round-mode=places, round-precision=4] S[table-format=2.2]@{}}
\toprule
{$\beta$} & {Target} & {\textbf{Achieved}} & {$\Delta$} & {Rel.\%} \\
\midrule
0.5000 & 20.294 & \textbf{22.896} & 2.6020 & 12.82 \\
0.2500 & 20.544 & \textbf{22.825} & 2.2810 & 11.10 \\
0.1250 & 20.669 & \textbf{22.884} & 2.2150 & 10.72 \\
0.0620 & 20.732 & \textbf{23.156} & 2.4240 & 11.69 \\
0.0310 & 20.763 & \textbf{23.444} & 2.6810 & 12.91 \\
0.0160 & 20.779 & \textbf{23.516} & 2.7370 & 13.17 \\
0.0080 & 20.787 & \textbf{27.687} & 6.9000 & 33.19 \\
\bottomrule
\end{tabular}
\end{subtable}

\end{table*}

\end{document}